\newtheorem{proposition}{Proposition}
\newtheorem{theorem}{Theorem}
\newtheorem{lemma}{Lemma}
\newtheorem{corollary}{Corollary}
\numberwithin{proposition}{section}
\numberwithin{theorem}{section}
\numberwithin{lemma}{section}
\numberwithin{corollary}{section}
\numberwithin{remark}{section}
\numberwithin{definition}{section}
\numberwithin{equation}{section}
\begin{document}

\newcommand*{\cl}[1]{{\mathcal{#1}}}
\newcommand*{\bb}[1]{{\mathbb{#1}}}
\newcommand{\ket}[1]{|#1\rangle}
\newcommand{\bra}[1]{\langle#1|}
\newcommand{\inn}[2]{\langle#1|#2\rangle}
\newcommand{\proj}[2]{| #1 \rangle\!\langle #2 |}
\newcommand*{\tn}[1]{{\textnormal{#1}}}
\newcommand*{\1}{{\mathbb{1}}}
\newcommand{\T}{\mbox{$\textnormal{Tr}$}}
\newcommand{\todo}[1]{\textcolor[rgb]{0.99,0.1,0.3}{#1}}
\newcommand{\norm}[1]{\left\lVert#1\right\rVert}
\newcommand{\abs}[1]{\left\lvert#1\right\rvert}
\newcommand*\rfrac[2]{{}^{#1}\!/_{#2}}

\title{Resource-efficient algorithm for estimating the trace of quantum state powers}

\author{Myeongjin Shin$^*$}
\email{hanwoolmj@kaist.ac.kr}
\homepage{https://myeongjinshin.github.io/}
\affiliation{School of Computing, KAIST, Daejeon 34141, Korea}
\affiliation{Team QST, Seoul National University, Seoul 08826, Korea}

\author{Junseo Lee$^*$}
\email{harris.junseo@gmail.com}
\homepage{https://harris-junseo-lee.github.io/}
\affiliation{Team QST, Seoul National University, Seoul 08826, Korea}
\affiliation{Quantum AI Team, Norma Inc., Seoul 04799, Korea}

\author{Seungwoo Lee}
\email{smilelee9@kaist.ac.kr}
\affiliation{School of Computing, KAIST, Daejeon 34141, Korea}
\affiliation{Team QST, Seoul National University, Seoul 08826, Korea}

\author{Kabgyun Jeong}
\email{kgjeong6@snu.ac.kr}
\affiliation{Team QST, Seoul National University, Seoul 08826, Korea}
\affiliation{Research Institute of Mathematics, Seoul National University, Seoul 08826, Korea}
\affiliation{School of Computational Sciences, Korea Institute for Advanced Study, Seoul 02455, Korea}
\thanks{\\\mbox{} \\ $^*$The first two authors contributed equally to this work.}

\maketitle
\begin{abstract}
Estimating the trace of quantum state powers, $\T(\rho^k)$, for $k$ identical quantum states is a fundamental task with numerous applications in quantum information processing, including nonlinear function estimation of quantum states and entanglement detection. On near-term quantum devices, reducing the required quantum circuit depth, the number of multi-qubit quantum operations, and the copies of the quantum state needed for such computations is crucial. In this work, inspired by the Newton-Girard method, we significantly improve upon existing results by introducing an algorithm that requires only $\mathcal{O}(\widetilde{r})$ qubits and $\mathcal{O}(\widetilde{r})$ multi-qubit gates, where $\widetilde{r} = \min\left\{\textnormal{rank}(\rho), \left\lceil\ln\left({2k}/{\epsilon}\right)\right\rceil\right\}$. This approach is efficient, as it employs the $\tilde{r}$-entangled copy measurement instead of the conventional $k$-entangled copy measurement, while asymptotically preserving the known sample complexity upper bound. Furthermore, we prove that estimating $\{\T(\rho^i)\}_{i=1}^{\tilde{r}}$ is sufficient to approximate $\T(\rho^k)$ even for large integers $k > \widetilde{r}$. This leads to a rank-dependent complexity for solving the problem, providing an efficient algorithm for low-rank quantum states while also improving existing methods when the rank is unknown or when the state is not low-rank. Building upon these advantages, we extend our algorithm to the estimation of $\T(M\rho^k)$ for arbitrary observables and $\T(\rho^k \sigma^l)$ for multiple quantum states.
\end{abstract}

\newpage
\tableofcontents

\newpage
\section{Introduction}\label{chap:intro}

\subsection{Trace of quantum state powers}\label{chap:top}
Estimation task for the trace of the product of identical density matrices, which is represented as
\begin{equation*}
    \T(\rho^k) \quad \textit{`trace of quantum state powers'}
\end{equation*}
given access to copies of the quantum state $\rho$, is a core subroutine for many algorithms and applications in quantum information theory. We refer to this quantity as the `trace of quantum state powers,' which is used to calculate the value of integer Rényi entropy~\cite{johri2017entanglement, elben2018renyi, vermersch2018unitary}, nonlinear functions of quantum states~\cite{ekert2002direct, brun2004measuring, van2012measuring, zhou2024hybrid, bovino2005direct}, and deducing the eigenvalues of the quantum state, a process known as entanglement spectroscopy~\cite{johri2017entanglement, yirka2021qubit}.

We focus on estimating $\T(\rho^k)$ for large integer $k$. The main applications are calculating the nonlinear functions of quantum states, which need estimation of the trace of large powers. Precisely, Yirka and Suba{\c{s}}{\i}~\cite{yirka2021qubit} proved that the trace of `well-behaved’ polynomials $g(\rho)$, such as $g(x) = (1+x)^{\alpha}$ and $\log(1+x)$, can be efficiently estimated using the trace of quantum state powers. Moreover, $\T(e^{\beta\rho})$ is an example with applications in thermodynamics.

The preparation of quantum Gibbs states~\cite{wang2021variational, consiglio2024variational, terhal2000problem, riera2012thermalization, wu2019variational} is an essential part of quantum computation, used in various applications such as quantum simulation, quantum optimization, and quantum machine learning. The truncated Taylor series
\begin{equation}
S_k(\rho) = \sum_{i=1}^k \T((\rho-I)^i\rho)
\end{equation}
is exploited as the cost function for variational quantum Gibbs state preparation~\cite{wang2021variational}, which can be calculated by $\{\T(\rho^i)\}_{i=1}^{k+1}$.

Several methods for the estimation of the trace of quantum state powers have been proposed, such as the generalized swap test~\cite{ekert2002direct}, entanglement spectroscopy via Hadamard test~\cite{johri2017entanglement}, two-copy test~\cite{subacsi2019entanglement}, qubit-efficient entanglement spectroscopy~\cite{yirka2021qubit}, multivariate trace estimation~\cite{quek2024multivariate}, and methods using randomized measurement such as classical shadows~\cite{van2012measuring, huang2020predicting, rath2021quantum}. An analysis of these methods is performed in~\cref{chap:review}.

Our work is inspired by the Newton-Girard method, as demonstrated in~\cref{chap:NG}. Specifically, we use quantum devices only to estimate $\{\T(\rho^i)\}_{i=1}^{\Tilde{r}}$, where $\widetilde{r}=\min\left\{r, \left\lceil\ln\left({2k}/{\epsilon}\right)\right\rceil\right\}$. (From this point onward, we consistently use $r$ to denote the rank of the quantum state $\rho$ throughout the paper.) Subsequently, we use a classical computer with a recursive formula to calculate $\{\T(\rho^i)\}_{i=1}^k$, with an additive error of less than $\epsilon$ for large $k\in\mathbb{N}$. In~\cref{chap:rank}, we prove that the rank $r$ is sufficient, implying that quantum devices are required only for $\{\T(\rho^i)\}_{i=1}^r$. By defining the notion of `effective rank' in~\cref{chap:effective}, we further prove a more advanced theorem that the effective rank $\widetilde{r}$ is sufficient for estimating the trace of quantum state powers. The Newton-Girard method and recursion are used in the proof. 

Furthermore, we argue that combining our work with previous ones~\cite{johri2017entanglement, ekert2002direct, yirka2021qubit, subacsi2019entanglement, quek2024multivariate, buhrman2001quantum, gottesman2001quantum} improves its algorithmic performance. The number of needed qubits (i.e., width of the circuit) and the required multi-qubit gates are reduced. We support our work with numerical simulations. To emphasize the importance of our work, we demonstrate advantages when applying our method to applications such as calculating nonlinear functions of quantum states, preparation of quantum Gibbs states, and entanglement detection.

\begin{table*}[ht]
\centering
\resizebox{\textwidth}{!}{
\centering
\begin{tabular}{c|c|c}
\hline
\textbf{Quantity} & \textbf{Quantum Resource Needed} & \textbf{Upper bound on $t$} \\ \hline\hline
\begin{tabular}[c]{@{}c@{}}$\T(\rho^k)$\\ \small{\cref{thm:rank},~\ref{thm:eff}}\end{tabular} & $\{\T(\rho^i)\}_{i=1}^t$ & $\min\left\{\text{rank}(\rho), \left\lceil\ln\left({2k}/{\epsilon}\right)\right\rceil\right\}$ \\ \hline
\begin{tabular}[c]{@{}c@{}}$\T(M\rho^k)$\\ \small{\cref{thm:obs}}\end{tabular} & $ \{\T(\rho^i)\}_{i=1}^t $, $ \{\T(M\rho^i)\}_{i=1}^t $ & $\min\left\{\text{rank}(\rho), \left\lceil\ln\left({2k\norm{M}_\infty}/{\epsilon}\right)\right\rceil\right\}$ \\ \hline
\begin{tabular}[c]{@{}c@{}}$\T(\rho^k \sigma^l)$\\ \small{\cref{thm:multi}}\end{tabular} & $ \{\T(\rho^i)\}_{i=1}^t $, $ \{\T(\sigma^i)\}_{i=1}^t $, $\{\T(\rho^i\sigma^j)\}_{(i,j)=(1,1)}^{(t,t)}$ & $\min\left\{\max\{\textnormal{rank}(\rho),\textnormal{rank}(\sigma)\}, \left\lceil\ln\left({(4k + 4l)}/{\epsilon}\right)\right\rceil\right\}$ \\ \hline
\end{tabular}}
\caption{{\textbf{Summary of quantum resource requirements and effective rank conditions for $\epsilon$-additive estimations.} This table summarizes the key results of the paper. It presents the range of values to be obtained through quantum resources for each of the three physical quantities, and further details can be found in the corresponding theorems.}}
\label{tab:main}
\end{table*}

\subsection{Organization of the paper}
Our paper is structured as follows. In~\cref{chap:review}, we review existing results on attempts to estimate the trace of quantum state powers. This includes results derived from variations of the swap test, several other approaches, and key related studies. In~\cref{chap:NG}, we introduce the Newton-Girard method, which serves as the fundamental principle of our algorithm. Then, in~\cref{chap:algo}, we describe how we specifically design our algorithm using this method. Subsequently, we analyze our algorithm in detail. In~\cref{chap:rank}, we explain how the quantum resources required for our algorithm are related to the rank of the quantum state. In~\cref{chap:effective}, we strengthen our algorithm by introducing the concept of effective rank, allowing it to be applied even when the exact rank of the quantum state is unknown. In~\cref{chap:obs}, we extend the problem to the case of arbitrary observables, which is a more generalized version based on the quantum resources required for estimating the trace of quantum state powers, as determined in previous sections. \cref{chap:simul} discusses the results of numerical simulations demonstrating the operation of our algorithm, and~\cref{chap:application} explores how our algorithm can be applied to other quantum information tasks. Finally, in~\cref{chap:con}, we summarize our study, discuss its limitations, and outline potential directions for future research. The proofs of all the theorems, corollaries, and lemmas presented in the paper are provided in~\cref{chap:proofs}. The table summarizing our main results is presented in~\cref{tab:main}. The $\widetilde{\mathcal{O}}(\cdot)$ asymptotic notation used in our paper hides polylogarithmic factors in certain variables. To ensure clarity in each context, we explicitly specify which variables are involved whenever necessary, and repeat the explanation when appropriate. Throughout this paper, unless otherwise noted, $n$ denotes the number of qubits and $d$ denotes the dimension of the quantum state, with $d = 2^n$.

\subsection{Literature review}\label{chap:review}
The swap test (ST)~\cite{buhrman2001quantum, gottesman2001quantum, fanizza2020beyond, foulds2021controlled, gitiaux2022swap} estimates $\T(\rho\sigma)$, the trace of the product of two matrices $\rho$ and $\sigma$:
\begin{equation}\label{swap}
    \T\left(S\left(\rho\otimes\sigma\right)\right) = \T\left(\rho\sigma\right),
\end{equation}
where $S$ denotes the swap operator. The ST can be performed using 1 ancilla qubit with 1 controlled-SWAP (\textsf{CSWAP}) operation and 2 Hadamard gates as shown in~\cref{fig:swapcirc}. The ST can be thought of as performing the observable $S$ on~\cref{swap}. The observation that quantities like $\T(\rho\sigma)$ can be estimated without the need for full-state tomography was a significant development.

Following this line of thinking, Ekert \textit{et al.}~\cite{ekert2002direct} proposed a cyclic shift permutation operator $W^{\pi}$ for a generalized ST:
\begin{equation}\label{gswap}
    \T\left(W^{\pi}\left(\rho_1\otimes\ldots\otimes\rho_k\right)\right) = \T\left(\rho_1\ldots\rho_k\right).
\end{equation}
By using~\cref{gswap} above, the trace of quantum state powers $\T(\rho^k)$ can be easily calculated. Note that regardless of the dimension $d=\text{dim}(\rho)$ and the number of quantum states $k$, the generalized ST needs only $\mathcal{O}\left({{1}/{\epsilon^{2}}}\right)$ runs on a quantum device for $\epsilon$ additive error prediction. Thus, $\mathcal{O}\left({k}/{\epsilon^{2}}\right)$ copies are needed for the estimation of $\T(\rho^k)$. This method requires $\mathcal{O}(k)$ qubits, a quantum circuit of $\mathcal{O}(k)$ depth, and $\mathcal{O}(k)$ multi-qubit gates.

Various methods have been proposed for better estimation~\cite{johri2017entanglement, yirka2021qubit, subacsi2019entanglement, quek2024multivariate} of the trace of quantum state powers. A comparison of these methods is shown in~\cref{tab:compare}.

\begin{figure}[t]
  \centering
  \includegraphics[width=6cm]{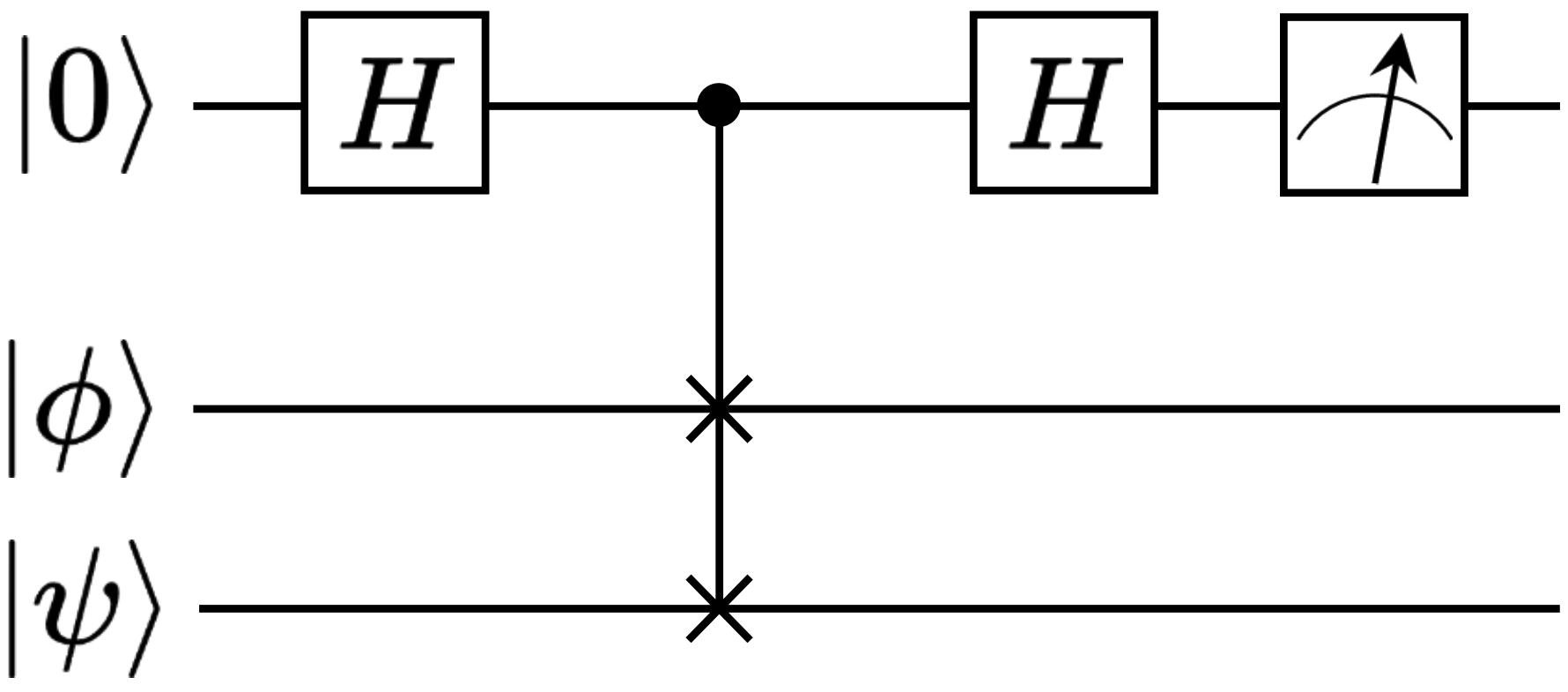}
  \caption{\textbf{Circuit implementing the swap test between two states.} The simplest case of a quantum circuit for calculating the trace of the product of two density matrices using the swap test is illustrated. It shows that 2 single-qubit gates, 1 three-qubit gate, and 1 ancilla qubit are required.}
  \label{fig:swapcirc}
\end{figure}

The entanglement spectroscopy via hadamard test (HT)~\cite{johri2017entanglement} is a generalized algorithm that estimates the expectation value of an arbitrary unitary operator or observable $M$. Specifically, ST can be thought of as a special case of HT when $M=S$. The HT has linear depth $\mathcal{O}(k)$ and uses $\mathcal{O}(k)$ copies of the state. A more improved algorithm, the entanglement spectroscopy via two-copy test (TCT)~\cite{subacsi2019entanglement}, achieves constant depth and uses $\mathcal{O}(k)$ copies of the state. Thus, both use $\mathcal{O}(k)$ qubits in the estimation circuit. That is, both HT and TCT are improved algorithms but need the original entangled pure state $\ket{\psi}_{AB}$ for the estimation of $\T(\rho_A^k)$, where $\rho_A = \T_B(\ket{\psi}\bra{\psi}_{AB})$.

Qubit-efficient entanglement spectroscopy~\cite{yirka2021qubit} employs qubit-reset strategies to reduce the number of qubits in the quantum circuit. This method requires only $n$ qubits, constant in terms of power $k$. When combined with TCT, it requires a linear circuit depth $\mathcal{O}(k)$. Also, Yirka and Suba{\c{s}} \cite{yirka2021qubit} defines the notion of `effective depth,' and TCT with qubit-reset strategy requires only a constant effective circuit depth $\mathcal{O}(1)$. However, this qubit-reset strategy still demands $\mathcal{O}(k)$ copies of the original entangled pure state $\ket{\psi}_{AB}$, and qubit-reset could lead to more vulnerability to noise.

Without the need for the entangled pure state $\ket{\psi}_{AB}$, multivariate trace estimation~\cite{oszmaniec2024measuring, chien2016characterization, bargmann1964note} $\T(\rho_1\rho_2\ldots\rho_k)$, a general case of the trace of quantum state powers, has been proposed with constant quantum depth~\cite{quek2024multivariate}. Inspired by the method of Shor error correction~\cite{shor1996fault}, this approach requires only constant quantum circuit depth, utilizing $\mathcal{O}(k)$ multi-qubit gates and $\mathcal{O}(k)$ qubits, and establishes numerous applications for multivariate trace and trace-of-powers estimation. By combining our work with these advancements, we provide an advantageous solution for estimating $\T(\rho^k)$ with large $k$. Specifically, leveraging multivariate trace estimation~\cite{quek2024multivariate}, we can reduce the number of required qubits from $\mathcal{O}(k)$ to $\mathcal{O}(\widetilde{r})$ and multi-qubit gates from $\mathcal{O}(k)$ to $\mathcal{O}(\widetilde{r})$ for $\T(\rho^k)$ estimation, where $\widetilde{r} = \min\left\{r, \left\lceil\ln\left({2k}/{\epsilon}\right)\right\rceil\right\}$.

There are alternative methods that use classical shadows~\cite{huang2020predicting, rath2021quantum} to estimate $\T(\rho^k)$. Using 
\begin{equation}
    \T\left(W^{\pi}\left(\rho\otimes\ldots\otimes\rho\right)\right) = \T(\rho^k),
\end{equation}
and linearly combining the classical snapshots of $\rho$, we can obtain a classical random variable whose expectation is $\T(\rho^k)$. The advantage of these alternative methods is that they allow for measurements to be taken sequentially and do not rely on the assumption that the samples of $\rho$ used by the algorithm are identical and independent~\cite{quek2024multivariate}. However, due to the exponential scaling of $\T ((W^{\pi})^2)$, the sample and computational complexity are exponential in terms of qubits. So, this method requires the number of copies as the dimension $d$ of the states. Recently, Pelecanos, Tan, Tang, and Wright~\cite{pelecanos2025beating} proposed a nonlinear extension of classical shadow estimation to estimate $\T(\rho^k)$ using a natural unbiased estimator motivated by U-statistics. Suppose we are given $N$ copies of a quantum state $\rho$ and a fixed positive integer $k \leq N$. For each copy, a uniform POVM is performed, and the outcome $\ket{u_i}$ on the $i$-th copy is used to define the associated observable $\hat{\sigma}_i := (d + 1)\ket{u_i}\bra{u_i} - I$. An unbiased estimator for $\T(\rho^k)$ is given by
\begin{equation}
    Z_k := \frac{1}{N^{\underline{k}}} \sum_{\substack{i_1, \dots, i_k \in [n] \\ \text{distinct}}} \T\left( \hat{\sigma}_{i_1} \cdots \hat{\sigma}_{i_k} \right),
\end{equation}
where $N^{\underline{k}} := N(N - 1)\cdots(N - k + 1)$ denotes the falling factorial. For any quantum state $\rho$ of dimension $d$, and any fixed $k \geq 2$, it was shown that with probability at least $0.99$, the estimator $Z_k$ approximates $\T(\rho^k)$ up to a multiplicative error $\epsilon$, using $N = \mathcal{O} ( \max\{ {d^{2 - 2/k}}/{\epsilon^2}, {d^{3 - 2/k}}/{\epsilon^{2/k}} \} )$ copies of $\rho$. Naturally, this results in exponential scaling with respect to the number of qubits. Note that their estimator provides a multiplicative-error approximation for $\T(\rho^k)$, which immediately yields an additive-error estimator for the quantum Rényi entropy. In contrast, our work focuses on estimating $\T(\rho^k)$ up to an additive error, which in turn implies an additive-error estimator for the quantum Tsallis entropy.

Several studies have explored the relationship between the trace of quantum state powers, quantum entanglement, and separability testing. Among them, Bradshaw \textit{et al.}~\cite{bradshaw2023cycle} investigates quantum separability tests from the perspective of combinatorial group theory, uncovering a fundamental link between the acceptance probabilities of these tests and the cycle index polynomials of finite groups. The cycle index polynomial of a permutation group $\mathcal{G}$ is defined as
\begin{equation}
    Z(\mathcal{G})(x_1,\dots,x_n):=\frac{1}{\abs{\mathcal{G}}}\sum_{g\in \mathcal{G}} \prod_{i=1}^n x_i^{c_i(g)},
\end{equation}
where $c_j(g)$ represents the number of cycles of length $j$ in the disjoint cycle decomposition of $g$. Notably, in the generalization of the bipartite pure-state separability algorithm, the acceptance probability associated with a group $\mathcal{G}$, denoted as $p_\mathcal{G}$, takes the form
\begin{equation}
    p_\mathcal{G} = Z(\mathcal{G})(1, \dots, \T(\rho^k)).
\end{equation}
This implies that $p_\mathcal{G}$ is determined by evaluating the cycle index polynomial of $\mathcal{G}$ at $x_j = \T(\rho^j)$ for $j \in \{1, \dots, k\}$. The study first derives an exact analytical expression for the probability of a mixedness test accepting as the number of state copies increases, showing that this probability is governed by the cycle index polynomial of the symmetric group. Building on this insight, the authors extend the framework to develop a family of separability tests corresponding to arbitrary finite groups, proving that the acceptance probability aligns with the cycle index polynomial of the respective group. Furthermore, they propose explicit quantum circuit implementations for these tests, leveraging \textsf{CSWAP} gates in a resource-efficient manner—scaling as $\mathcal{O}(k^2)$ for the symmetric group and $\mathcal{O}(k \ln (k))$ for the cyclic group, where $k$ denotes the number of state copies used in the test. The study of partial transpose moments and entanglement detection was discussed in the work by Neven \textit{et al.}~\cite{neven2021symmetry}, and~\cref{chap:entangle} provides a more detailed discussion on how our work can be applied to that research. Additionally, Wagner \textit{et al.}~\cite{wagner2024quantum} proposed simple quantum circuits for measuring weak values, Kirkwood–Dirac (KD) quasiprobability distributions, and the spectra of quantum states without post-selection, particularly by interpreting the trace of quantum state powers from the perspective of measuring unitary-invariant and relational properties of quantum states using Bargmann invariants.

Finally, the work by Liu and Wang~\cite{liu2025estimating}, published several months after the first version of our paper appeared on arXiv, provides a detailed computational complexity analysis of estimating the trace of powers of an $n$-qubit mixed quantum state $\rho$, given a state-preparation circuit of size $\text{poly}(n)$. Leveraging efficiently computable uniform approximations of positive power functions within the framework of quantum singular value transformation, the authors achieved an exponential improvement over previously known methods. Their study focused particularly on estimating the quantum Tsallis entropy, \begin{equation} S_k(\rho) = \frac{1 - \text{Tr}(\rho^k)}{k-1}, \end{equation} and precisely identified the thresholds at which the computational complexity of the problem undergoes qualitative changes. Specifically, they showed that for $k = 1$, the problem is $\textsf{NIQSZK-complete}$; for $1 < k \leq 1 + {(n-1)}^{-1}$, it is $\textsf{NIQSZK-hard}$; for $1 + \Omega(1) \leq k \leq 2$, it becomes $\textsf{BQP-complete}$; and for $k > 2$, it remains within $\textsf{BQP}$. They also established rigorous bounds on the query and sample complexity across different regimes of $k$, with particular attention to rank-dependent behavior. In their publication, the authors characterized the method proposed in our initial preprint as a rank-dependent estimator for the quantum Tsallis entropy in the regime where $k$ exceeds the rank of the quantum state. Our current results further strengthen this interpretation, as our method now provides an $\epsilon$-additive estimator whenever $k$ exceeds the effective rank $\widetilde{r} = \min\left\{r, \left\lceil\ln\left({2k}/{\epsilon}\right)\right\rceil\right\}$.

\section{Iterative algorithm for estimating the trace of quantum state powers}
In this section, we explain the Newton-Girard method and discuss how it is utilized in the design of our algorithm. In proving the main theorems, the key idea underlying our improvement is the use of the Newton–Girard identities, which establish an explicit relationship between power sums (e.g., $x_1^k + x_2^k$) and elementary symmetric polynomials (e.g., $x_1 + x_2$, $x_1x_2$). By leveraging these identities alongside a careful and systematic analysis, we can efficiently express higher-order moments in terms of lower-order symmetric functions, thereby reducing the overall estimation complexity. This connection provides a clear algebraic intuition behind our approach and highlights why fewer moment estimations suffice.

\subsection{Intuition: Newton-Girard method}\label{chap:NG}
The main idea of entanglement spectroscopy demonstrates that the trace of quantum state powers can be used to estimate the largest eigenvalues~\cite{johri2017entanglement, yirka2021qubit, song2012bipartite}. The $k$ largest eigenvalues can be estimated using $\{\T(\rho^i)\}_{i=1}^k$. The Newton-Girard method~\cite{bagdasaryan2015analogues, cereceda2022sums} provides the mathematical foundation of entanglement spectroscopy and serves as an important component in our method. Therefore, we describe the details of the Newton-Girard method and explain the inspiration that leads to the notion that \textit{`rank is sufficient'} for estimating the trace of quantum state powers.

Let $r=\text{rank}(\rho)$, and the eigenvalues of $\rho$ are $\{p_i\}_{i=1}^r$, sorted in descending order. {We utilize the Newton-Girard method to leverage the following well-known result from linear algebra and provide an intuition for it: knowing the trace of quantum state powers $\{\T(\rho^i)\}_{i=1}^r$ is equivalent to knowing $\{p_i\}_{i=1}^r$.} Consider the equation having these eigenvalues as root in the form of
\begin{equation}\label{eq5}
    \prod_{m=1}^{r}(x-p_m) = 0.
\end{equation}
The values of $\mathrm{Tr}(\rho^i)$ are now the $i$-th power sum of the roots. Denote the power sum as
\begin{equation}
    P_i := \sum_{m=1}^{r}p_m^i = \mathrm{Tr}(\rho^i).
\end{equation}
Here, Simply expanding the terms of the~\cref{eq5} above as follows:
\begin{equation}\label{eq:7}
    \prod_{m=1}^{r}(x-p_m) = \sum_{k=0}^{r} (-1)^k a_k x^{r-k},
\end{equation}
where $a_k$ is the elementary symmetric polynomial, defined as the sum of all distinct products of $k$ distinct variables, such as:
\begin{align*}
    a_0 &= 1, \\
    a_1 &= p_1 + p_2 + \ldots + p_r = \sum_{1 \leq \alpha \leq r} p_\alpha, \\
    a_2 &= p_1 p_2 + p_1 p_3 + \ldots + p_{r-1} p_r = \sum_{1 \leq \alpha < \beta \leq r} p_\alpha p_\beta, \\
    a_3 &= \sum_{1 \leq \alpha < \beta < \gamma \leq r} p_\alpha p_\beta p_\gamma, \\
    \vdots \\
    a_r &= p_1 p_2 \ldots p_r.
\end{align*}

The Newton-Girard method states the relationship of the elementary symmetric polynomials and the power sums recursively as follows. For all $r\ge k \ge 1$,
\begin{equation}\label{elepoly1}
    a_k = \frac{1}{k}\sum_{i=1}^{k} (-1)^{i-1} a_{k-i} P_i.
\end{equation}
Given $ P_i $ for $ 1 \leq i \leq r $, we can uniquely determine the values of $ a_k $ on the right-hand side of~\cref{eq:7}. Moreover, the set of eigenvalues $ \{p_i\}_{i=1}^r $ is also uniquely determined as the roots of~\cref{eq5}.

Unfortunately in real-world situations, we cannot exactly calculate the trace of quantum state powers; instead, we can obtain the estimation with errors using previous strategies. Then, it is natural to ask the following question:
\begin{center}
    \textit{``If the error of estimated power sums is small, are the roots obtained by the Newton-Girard method close to the eigenvalues of $\rho$?''}
\end{center}
No, the statement is not always true. A counterexample is Wilkinson's polynomial~\cite{mosier1986root}, which shows that the location of the roots can be very sensitive to perturbations in the coefficients of a polynomial. Generally, to obtain the eigenvalues, the estimation error of the trace of quantum state powers should be exponential, causing the copy and time complexity to be exponential~\cite{johri2017entanglement}. Therefore, estimating the eigenvalues with the estimated values of $\{\T(\rho^i)\}_{i=1}^r$ is unfeasible. 

However, we get an intuition from the Newton-Girard method that estimating $\{\T(\rho^i)\}_{i=1}^r$ contains valuable information about the quantum states. In~\cref{chap:rank}, we prove that estimating the trace of quantum state powers $\{\T(\rho^i)\}_{i=1}^r$ is sufficient for estimating the trace of larger powers $\T(\rho^i)$ for $i>r$. The error of each eigenvalue obtained by the Newton-Girard method is large, but as the power of the eigenvalues is summed up, the error diminishes to a smaller extent.

\subsection{Explicit algorithm construction}\label{chap:algo}
\begin{figure*}[htbp!]
  \centering
  \includegraphics[width=15cm]{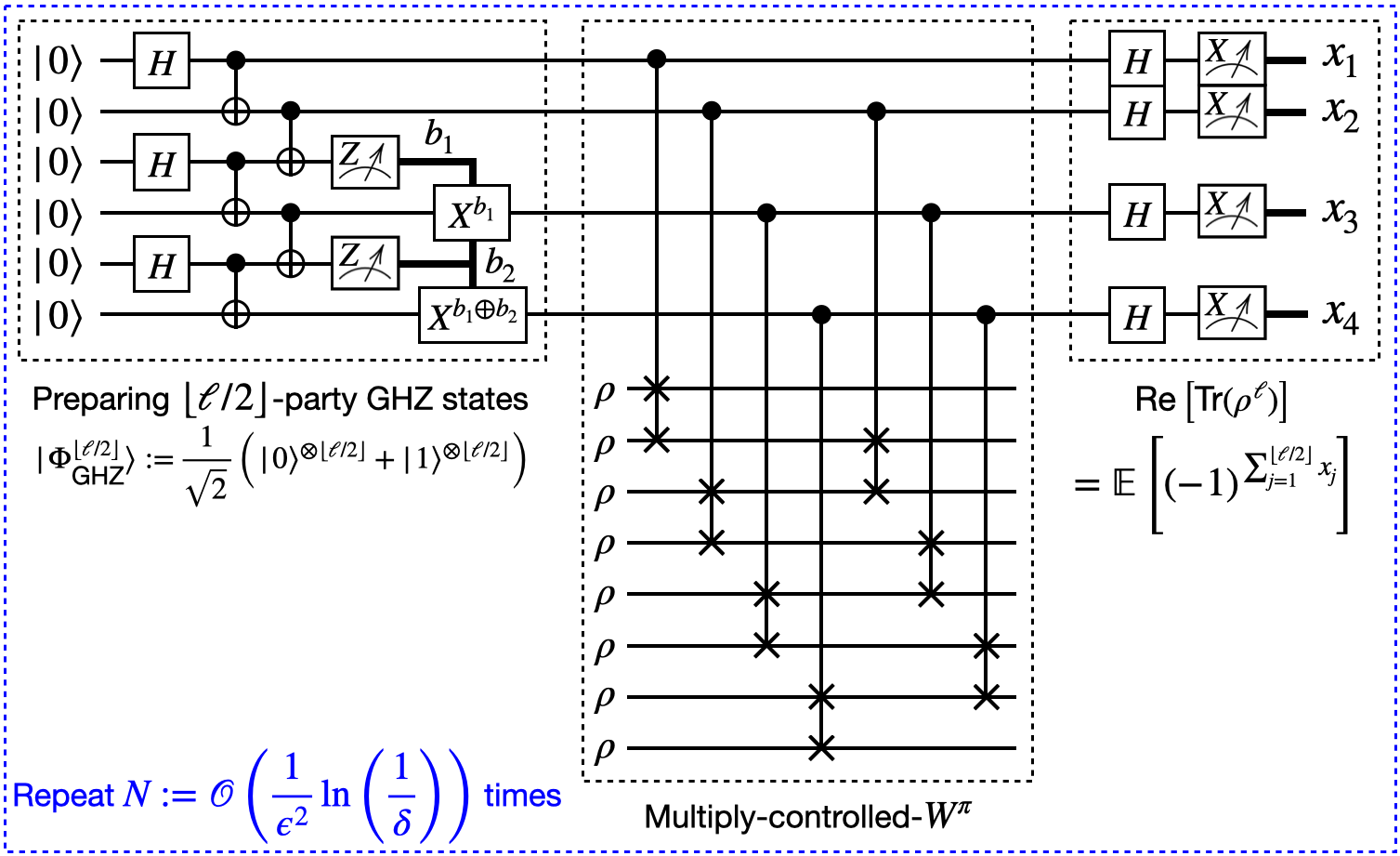}
  \caption{{\textbf{Quantum circuit for Step 1 of the {[Algorithm 1]}: A detailed example for $\ell = 8$.} This quantum circuit is used to estimate $\text{Tr}(\rho^\ell)$. The first part of the circuit corresponds to the GHZ state preparation described in Step 1(a). As mentioned in the main text, this step can be implemented differently if necessary. Following this, a multiply-controlled cyclic shift operation is applied, with slight structural variations depending on whether $\ell \mod 2$ is 0 or 1. The type of gate applied before measurement and the measurement basis used depend on whether $\text{Re}[\text{Tr}(\rho^\ell)]$ or $\text{Im}[\text{Tr}(\rho^\ell)]$ is being estimated. By calculating the expectation of the measured outcomes, the desired physical quantity can be estimated. To ensure a good estimate with an additive error of at most $\epsilon$ with high probability $1 - \delta$, as guaranteed by~\cref{eq:gua}, $\mathcal{O}(\ln({1}/{\delta})/{\epsilon^2})$ repetitions of the steps within the blue box in the figure are required.}}
  \label{fig:multivariate}
\end{figure*}

\begin{figure*}[htbp!]
  \centering
  \includegraphics[width=11cm]{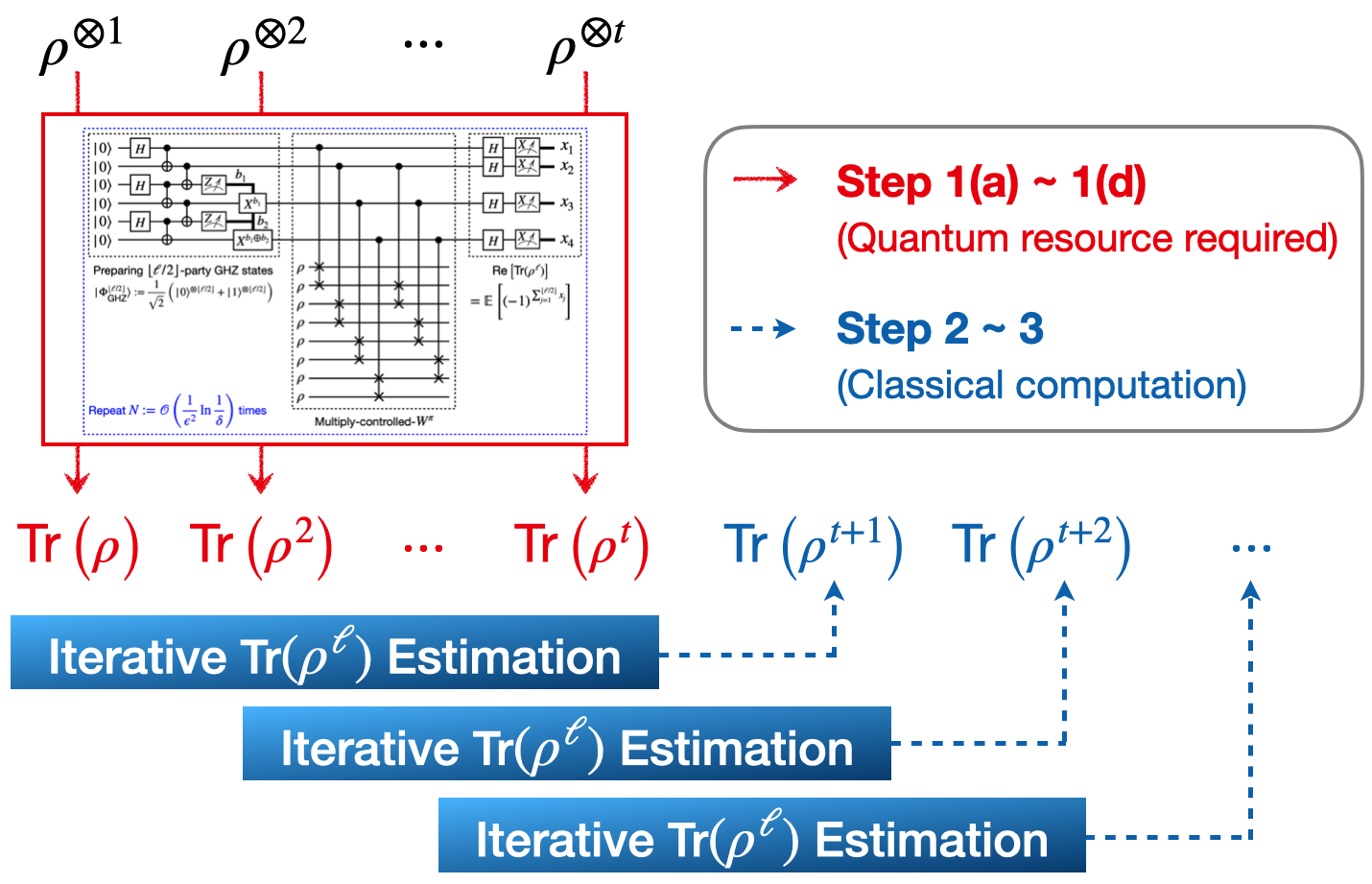}
  \caption{{\textbf{Diagram of the complete process of the {[Algorithm 1]}.} The red box represents the process shown in~\cref{fig:multivariate}, corresponding to Step 1 of the algorithm described in~\cref{chap:algo}. Quantum resources are required only for this step, during which the values from $\text{Tr}(\rho)$ to $\text{Tr}(\rho^t)$ are obtained. The subsequent blue dashed lines indicate computations performed using a simple recurrence relation without requiring quantum resources, following the processes outlined in Steps 2 and 3.}}
  \label{fig:algo}
\end{figure*}

Based on the insights gained in~\cref{chap:NG}, the specific algorithm for calculating the trace of quantum state powers is as follows. (In this section, the index is denoted by $\ell$ to avoid confusion with $i = \sqrt{-1}$, which represents the imaginary unit.)\\~\\
\underline{{\textbf{[Algorithm 1]} Estimation of $\T(\rho^k)$}}
\begin{enumerate}\label{algo:1}
\item Based on the circuit presented in~\cref{fig:multivariate}, the values of $\text{Tr}(\rho^\ell)$ for $\ell = 1, 2, \dots, t$ are obtained. This process is based on research on multivariate trace estimation using constant quantum depth~\cite{quek2024multivariate}, and the detailed procedure is as follows: (An example for $\ell = 8$ is illustrated in~\cref{fig:multivariate} for reference.)
    \begin{enumerate}
        \item Generate an $\lfloor \ell/2 \rfloor$-party GHZ state
        \begin{equation}
            |\Phi_\text{GHZ}^{\lfloor \ell/2 \rfloor}\rangle:=\frac{1}{\sqrt{2}}\left(|0\rangle^{\otimes{\lfloor \ell/2 \rfloor}}+|1\rangle^{\otimes{\lfloor \ell/2 \rfloor}}\right).
        \end{equation}
        This process utilizes mid-circuit measurement and requires a constant quantum-depth circuit along with classical feedback, while a logarithmic-depth classical circuit is needed for parity computation. Besides the method illustrated in~\cref{fig:multivariate}, it is also possible to employ other methods for generating a GHZ state.
        \item Next, a multiply-controlled cyclic shift operation is performed. Depending on whether $\ell$ is odd or even, slight structural modifications to the circuit may be necessary. The specifics are discussed in detail in of~{\cite[Section 3.2]{quek2024multivariate}}, and this process can be achieved with constant quantum depth.
        \item To estimate $\text{Re}[\text{Tr}(\rho^\ell)]$, apply an $H$ gate to all $\lfloor \ell/2 \rfloor$ qubits and measure in the $X$-basis.\\
        \textbf{Note:} While the multivariate trace estimation problem in the original study requires the estimation of both the real and imaginary parts, our problem focuses solely on estimating the trace of quantum powers, making the estimation of the real part sufficient. However, for the sake of completeness, we also describe the process for estimating the imaginary part: replace the $H$ gate with an $HS^\dagger$ gate and measure in the $Y$-basis to estimate $\text{Im}[\text{Tr}(\rho^\ell)]$.
        \item Repeat the process from Step 1(a) to 1(c) 
            \begin{equation}
                N:=\mathcal{O}\left(\frac{\ln(1/\delta)}{\epsilon^2}\right)
            \end{equation}
         times, and let the measurement outcomes (0 or 1) obtained in Step 1(c) for the $m$-th iteration be denoted as $x_1^m, \ldots, x_{\lfloor \ell/2 \rfloor}^m, y_1^m, \ldots, y_{\lfloor \ell/2 \rfloor}^m$. Then, the quantity we aim to estimate, $\text{Tr}(\rho^\ell)$, is expressed as:
        \begin{align}
            & Q_{\ell(\le t)} := \hat{\mathcal{R}} + i \hat{\mathcal{J}} \approx \text{Tr}(\rho^\ell), \\
            & \text{where}~\hat{\mathcal{R}} = \frac{\sum_{m=1}^N \sum_{j=1}^{\lfloor \ell/2 \rfloor} (-1)^{x_j^m}}{N},\quad\text{and}~\hat{\mathcal{J}} = \frac{\sum_{m=1}^N \sum_{j=1}^{\lfloor \ell/2 \rfloor}, (-1)^{y_j^m}}{N}.
        \end{align}
    Then this estimate satisfies the inequality below for $\ell = 1, 2, \ldots, t$.
    \begin{equation}\label{eq:gua}
        \text{Pr}\left(\abs{Q_{\ell} - \text{Tr}(\rho^\ell)}\le \epsilon\right) \ge 1-\delta.
    \end{equation} 
    \end{enumerate}
    \textbf{Note:} As mentioned in Step 1(c), $\text{Tr}(\rho^\ell)\in\mathbb{R}$, so it does not matter if we set $Q_{\ell(\le t)} = \hat{\mathcal{R}}$. This is because, through this algorithm, we have 
    \begin{equation}
        \sqrt{\left(\hat{\mathcal{R}} - \text{Tr}(\rho^\ell)\right)^2 + {\hat{\mathcal{J}}}^2} \leq \epsilon,   
    \end{equation}
     which also ensures that $\abs{\hat{\mathcal{R}} - \text{Tr}(\rho^\ell)} \leq \epsilon$.
\item Calculate the elementary symmetric polynomial $b_k~(1\le k\le t)$ defined as:
    \begin{equation}\label{elepoly2}
        b_k = \frac{1}{k}\sum_{\ell=1}^{k} (-1)^{\ell-1} b_{k-\ell} Q_\ell,~b_0=1.
    \end{equation}
\item Using $Q_1, \ldots, Q_t$ obtained from Step 1 and $ b_1, \ldots, b_t $ obtained from Step 2, the value of $\text{Tr}(\rho^\ell)~(\ell > t)$ can be estimated through the following recurrence relation:
    \begin{equation}\label{DefQ}
        Q_{\ell (> t)} := \sum_{k=1}^{t} (-1)^{k-1} b_k Q_{\ell-k} \approx \text{Tr}(\rho^\ell).
    \end{equation}
\end{enumerate}

Through Step 3, we can obtain values for $Q_{t + 1},  Q_{t + 2}, \ldots $, and in~\cref{chap:rank} and~\ref{chap:effective}, we analyze in detail the conditions on $t$ required to ensure that the estimated values obtained in this process are within an additive error of at most $\epsilon$. See~\cref{fig:algo} for the overall process of the algorithm we propose. Note that quantum devices are only used to estimate $\{\T(\rho^i)\}_{i=1}^t$. At most $\mathcal{O}(t)$ qubits and $\mathcal{O}(t)$ multi-qubit gates are required (used only in Step 1).

\section{Analysis of the proposed algorithm}\label{chap:3}
We analyze our proposed algorithm in two phases.
\begin{enumerate}[label=(\arabic*)]
    \item In~\cref{chap:rank}, we show that $t \geq r$ is sufficient, identifying the rank dependence.
    \item Then, in~\cref{chap:effective}, we prove that $t \ge \left\lceil\ln\left({2k}/{\epsilon}\right)\right\rceil$ is sufficient for estimating trace of quantum state powers within an additive error of $\epsilon$.
\end{enumerate}
This introduces the new concept of the effective rank, leading to stronger results and enabling the algorithm to be applicable even when the exact rank is unknown. In~\cref{chap:obs}, we discuss the case that includes arbitrary observables, which is a more generalized version of the problem of estimating the trace of quantum state powers.

For clarity, we summarize the notations used in this section. Let ${P_i}$ represent the \textit{exact values} of the trace of quantum state powers:
\begin{equation}\label{def-P}
    P_i := \T(\rho^i) = \sum_{j=1}^r p_j^i.
\end{equation}
Similarly, let ${Q_i}$ denote the \textit{estimated values} of the trace of quantum state powers. For $Q_{i (\leq t)}$, the values are estimated by the quantum device, while for $Q_{i (> t)}$, they are defined by~\cref{DefQ}. The estimation (additive) error is denoted as
\begin{equation}
    \epsilon_i := Q_i - P_i.    
\end{equation}

Next, let $a_k$ and $b_k$ represent the elementary symmetric polynomials corresponding to ${P_i}$ and ${Q_i}$, respectively (see~\cref{elepoly1} and~\cref{elepoly2}). In~\cref{lem:rank}, we analyze the bound on the difference between these two elementary symmetric polynomials.

In cases where quantum resources are so limited that even utilizing resources commensurate with the rank is infeasible, we may not be able to estimate all elements of $\{\text{Tr}(\rho^i)\}_{i=1}^{r}$ but only up to $\{\text{Tr}(\rho^i)\}_{i=1}^{t \, (<r)}$. To account for this limitation, we introduce a new quantity, denoted as $\widetilde{P}$, which is defined as follows for $i \leq t$:  
\begin{equation}\label{p'}
    \widetilde{P}_{i(\leq t)} := \text{Tr}(\rho^{i}) = \sum_{j=1}^r p_{j}^i.
\end{equation}  
For $ i > t $, $ \widetilde{P}_{i} $ is recursively defined based on the Newton-Girard recurrence relations, where the elementary symmetric polynomials $a_k$ are identical to $P_i$: (Since $\widetilde{P} = P$ when $ t = r $, the recurrence relation can also be applied to $ P $ in this case.)  
\begin{equation}\label{p'recurr}  
    \widetilde{P}_{i(> t)} := \sum_{k=1}^t (-1)^{k-1} a_k \widetilde{P}_{i-k}.  
\end{equation}  
While the introduction of $\widetilde{P}$ may appear indistinguishable from the original definition of $P$ in~\cref{def-P}, a fundamental distinction arises when $t < r$, as $\widetilde{P}_{i(>t)} \neq P_{i(>t)}$. The concept of $\widetilde{P}$ is particularly useful for quantifying the impact of information loss on error when only partial spectral information is available, and in~\cref{lem:eff}, we provide a rigorous quantitative analysis of the discrepancy between $\widetilde{P}$ and $P$.

For the problem of the trace of quantum state powers with arbitrary observables, given a quantum state
\begin{equation}
    \rho = \sum_{i=1}^r p_i \ket{\psi_i} \bra{\psi_i}    
\end{equation}
and an arbitrary observable $M$, we can define $P_{i,M}$ similarly to $P_i$ as follows:  
\begin{equation}
    P_{k,M} := \mathrm{Tr}(M\rho^k) = \sum_{i=1}^r \bra{\psi_i} M \ket{\psi_i} p_i^k.    
\end{equation}
Likewise, $Q_{i,M}$ represents its estimated value.

\subsection{Rank is all you need}\label{chap:rank}
In this section, we prove that $t \geq r$ is sufficient to proceed with the algorithm while maintaining low error, and we derive the required number of quantum circuit runs. Although the method is simple, we argue that it offers advantages in terms of the number of required qubits and multi-qubit gates. To the best of our knowledge, our work is the first to prove that the traces of rank-at-most-$r$ powers, $\{\T(\rho^i)\}_{i=1}^r$, are sufficient for estimating $\T(\rho^k)$ when $k$ is large. Furthermore, it provides an efficient algorithm, particularly for low-rank quantum states. Our goal is to first establish a quantitative bound on the difference between the elementary symmetric polynomials derived from the true values $P_i$ and those obtained from the estimated values $Q_i$.
\begin{lemma}\label{lem:rank}
Let $d_k := b_k - a_k$, then the following holds:
\begin{equation}
    \abs{d_k} \leq \sum_{j=1}^k \frac{\abs{\epsilon_j}}{j}.
\end{equation}
\end{lemma}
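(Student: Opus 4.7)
The plan is to prove Lemma~\ref{lem:rank} by strong induction on $k$, driven by the Newton-Girard recurrences~\eqref{elepoly1} and~\eqref{elepoly2}. The base case $k=1$ is immediate: $a_1 = P_1$ and $b_1 = Q_1$, so $d_1 = \epsilon_1$ and $|d_1| = |\epsilon_1|/1$ matches the claimed bound.

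For the inductive step I would subtract the two Newton-Girard recurrences to obtain
\begin{align*}
k\,d_k = \sum_{i=1}^k (-1)^{i-1}\bigl(b_{k-i} Q_i - a_{k-i} P_i\bigr),
\end{align*}
and then rewrite each summand via the substitutions $b_{k-i} = a_{k-i} + d_{k-i}$ and $Q_i = P_i + \epsilon_i$. The $a_{k-i}P_i$ terms cancel, producing the clean identity
\begin{align*}
k\,d_k = \sum_{i=1}^k (-1)^{i-1} a_{k-i}\,\epsilon_i + \sum_{i=1}^k (-1)^{i-1} d_{k-i}\,Q_i.
\end{align*}
Taking absolute values and invoking the uniform bounds $|a_{k-i}|\le 1$ (which follows from $p_m\ge 0$ with $\sum_m p_m = 1$, giving $a_j\le 1/j!$) and $|Q_i|\le 1$ (since $Q_i$ estimates $P_i = \T(\rho^i)\in[0,1]$), together with $d_0 = 0$, yields
\begin{align*}
k\,|d_k| \leq \sum_{i=1}^{k}|\epsilon_i| + \sum_{m=1}^{k-1}|d_m|,
\end{align*}
after relabeling $m = k-i$ in the second sum.

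Next, I would plug in the inductive hypothesis $|d_m|\le s_m := \sum_{j=1}^m |\epsilon_j|/j$ for $m<k$ and swap the order of summation via Fubini to obtain $\sum_{m=1}^{k-1} s_m = \sum_{j=1}^{k-1}(k-j)|\epsilon_j|/j$. Combining with $\sum_{i=1}^k|\epsilon_i|$ and collecting the coefficient of each $|\epsilon_j|$ as $1 + (k-j)/j = k/j$ produces
\begin{align*}
k\,|d_k| \leq |\epsilon_k| + k\sum_{j=1}^{k-1}\frac{|\epsilon_j|}{j} = k\,s_k,
\end{align*}
which closes the induction.

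The main obstacle is spotting the right algebraic rewriting in the first step: once the naive difference $b_{k-i}Q_i - a_{k-i}P_i$ is reorganized into $a_{k-i}\epsilon_i + d_{k-i}Q_i$, the rest is a clean telescoping computation in which the Fubini rearrangement of $\sum_m s_m$ supplies exactly the factor $k$ needed to cancel the $1/k$ in front of the Newton-Girard recurrence. The other delicate point is justifying the coefficient bounds $|a_{k-i}|,|Q_i|\le 1$, which I expect to handle implicitly via the density-matrix normalization $\T(\rho)=1$ and the assumption that the estimation errors are small enough to keep $Q_i$ essentially within $[0,1]$.
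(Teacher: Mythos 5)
Your proposal is correct and follows essentially the same route as the paper's own proof: strong induction on $k$ using the Newton--Girard recurrences, the identical decomposition $b_{k-i}Q_i - a_{k-i}P_i = d_{k-i}Q_i + a_{k-i}\epsilon_i$, the bounds $\abs{a_j}\le 1$ and $\abs{Q_i}\le 1$, and the same summation swap giving the coefficient $k/j$ for each $\abs{\epsilon_j}$. The only cosmetic differences are that the paper indexes the inductive step as $d_{k+1}$ and warms up with the cases $k=1,2,3$, and it likewise invokes $\abs{Q_i}\le 1$ without further justification, so your flagged caveat about the estimates is shared by the original argument.
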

\begin{proof}
     See the details in~\cref{lem:rank:proof}.
\end{proof}

Now, we prove our first theorem, which demonstrates that $t = r$ is sufficient to execute our algorithm with low error.
\begin{theorem}\label{thm:rank}
Suppose that,
\begin{equation}\label{eq:esti}
    \varepsilon_i:=\abs{\epsilon_i} = \abs{Q_i - P_i} < \frac{\epsilon}{kt\ln t}
\end{equation}
holds for $i=1, 2, \ldots, t$. {Setting $t = r$ and proceeding with {\textbf{[Algorithm 1]}} based on the recurrence relation~\cref{DefQ}, the following relation always holds:}
\begin{equation}
    \abs{\epsilon_i} = \abs{Q_i - P_i} < \epsilon
\end{equation}
for $i=t+1, \ldots, k$.
\end{theorem}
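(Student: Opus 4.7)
The plan is to split the argument by whether $i \leq t$ or $i > t$. For $i \leq t = r$ the desired inequality $|\epsilon_i| < \epsilon$ is immediate from the hypothesis, since $\epsilon/(kt\ln t) < \epsilon$. The substantive case is $i > t$, where $Q_i$ is not measured directly but defined through the recurrence~\eqref{DefQ}, so one must track how the errors on $\{Q_j\}_{j \leq t}$ propagate.

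The first key observation is that when $t = r$, the exact power sums $P_i$ obey a recurrence \emph{identical in form} to the one defining $Q_i$, but with coefficients $a_k$ in place of $b_k$. Indeed, the eigenvalues $\{p_j\}_{j=1}^{r}$ are the roots of $\sum_{k=0}^{r}(-1)^k a_k x^{r-k}$, so each $p_j$ satisfies $p_j^{r} = \sum_{k=1}^{r}(-1)^{k-1} a_k p_j^{r-k}$; multiplying by $p_j^{i-r}$ and summing over $j$ gives $P_i = \sum_{k=1}^{r}(-1)^{k-1} a_k P_{i-k}$ for every $i > r = t$. Subtracting this identity from the definition of $Q_i$ yields the error recursion
\begin{equation}
\epsilon_i \;=\; \sum_{k=1}^{t} (-1)^{k-1}\bigl[\, b_k\,\epsilon_{i-k} \;+\; d_k\,P_{i-k}\bigr], \qquad i > t,
\end{equation}
where $d_k := b_k - a_k$. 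Next I would bound each term. Since $P_{i-k} = \T(\rho^{i-k}) \in [0,1]$, the ``source'' contribution is at most $\sum_{k=1}^{t} |d_k|$; by Lemma~\ref{lem:rank} together with the hypothesis,
\begin{equation}
|d_k| \;\leq\; \sum_{j=1}^{k} \frac{|\epsilon_j|}{j} \;\leq\; \frac{\epsilon}{k t \ln t}\, H_k \;\lesssim\; \frac{\epsilon}{k t},
\end{equation}
so $\sum_{k=1}^{t} |d_k| \lesssim \epsilon/k$. For the amplification part, I would bound $|b_k| \leq |a_k| + |d_k|$ and use the elementary estimate $\sum_{k=0}^{r} a_k = \prod_{j}(1+p_j) \leq e^{\T(\rho)} = e$. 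Feeding these into the error recursion under the inductive hypothesis $|\epsilon_j| < \epsilon$ for all $t < j < i$ yields an inequality of the form $|\epsilon_i| \lesssim C\,\max_{1 \leq k \leq t} |\epsilon_{i-k}| + C'\epsilon/k$.

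The main obstacle is preventing this inequality from producing exponential blow-up across the $k - t$ steps from $i = t+1$ up to $i = k$. The fix is to exploit the signed structure of the recursion more carefully: $Q_i$ admits a closed form $\sum_j \tilde{c}_j \tilde{q}_j^{\,i}$ in terms of the perturbed roots $\tilde{q}_j$ of the $b$-polynomial, which are $\mathcal{O}(d_k)$-close to the $p_j \in [0,1]$, so the effective amplification factor is essentially $\max_j \tilde{q}_j \leq 1 + o(1)$ rather than $e$. Only the additive source of size $\mathcal{O}(\epsilon/k)$ then accumulates, and the $1/(kt\ln t)$ scaling in the hypothesis is precisely calibrated so that, summed over at most $k$ iterations, the accumulated error stays strictly below $\epsilon$. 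Closing the induction therefore reduces to verifying this sharper accounting; the technical execution is deferred to Appendix~\ref{chap:proofs}.
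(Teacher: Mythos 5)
Your setup is fine up to the error recursion, but the decisive step is missing and the proposed fix is unsound. You split $b_kQ_{i-k}-a_kP_{i-k}=b_k\epsilon_{i-k}+d_kP_{i-k}$, obtain an amplification constant of order $\sum_k a_k\le e$, and correctly observe this would blow up exponentially over the $k-t$ iterations. Your rescue — writing $Q_i=\sum_j\tilde c_j\tilde q_j^{\,i}$ in terms of the roots $\tilde q_j$ of the $b$-polynomial and claiming they are $\mathcal{O}(d_k)$-close to the $p_j$, so the growth factor is $1+o(1)$ — is exactly the step that fails. Root locations are not Lipschitz in the coefficients: a coefficient perturbation of size $\epsilon/(kt\ln t)$ can move roots by roughly its $t$-th root (the paper itself flags this via Wilkinson's polynomial in Section~\ref{chap:NG}), which is far larger than the $1/k$ margin you would need for $(\max_j\abs{\tilde q_j})^{k-t}$ to stay bounded; moreover the coefficients $\tilde c_j$ come from solving an ill-conditioned Vandermonde system in those roots and need not be $\mathcal{O}(1)$, and the $\tilde q_j$ need not even be real or lie in $[0,1]$. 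So the "sharper accounting" you defer is precisely the content of the theorem, and the route through perturbed roots does not deliver it.

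The paper's proof avoids roots entirely by choosing the other decomposition, $a_k\tilde\epsilon_{i-k}+d_kQ_{i-k}$, so that the \emph{exact} elementary symmetric polynomials $a_k$ multiply the propagated errors, and then combines the recursions at two consecutive indices using $a_1=1$ to rewrite the propagation as
\begin{equation}
\tilde\epsilon_{t+k}=\sum_{j=1}^{t-1}(-1)^{j-1}(a_j-a_{j+1})\tilde\epsilon_{t+k-j-1}+(-1)^{t-1}a_t\tilde\epsilon_k+\sum_{j=1}^{t}(-1)^{j-1}d_j\left(Q_{t+k-j}+Q_{t+k-j-1}\right),
\end{equation}
where the weights $a_j-a_{j+1}\ge 0$ and $a_t$ are nonnegative and sum to $a_1=1$. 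This convexity kills the exponential amplification and gives, by induction, $\abs{\tilde\epsilon_{t+m}}\le\epsilon'+m\sum_{j=1}^{t}\abs{d_j}$, i.e., only linear-in-$k$ accumulation of the $d_j$-source, which Lemma~\ref{lem:rank} bounds by $\epsilon'\ln t!$; the hypothesis $\epsilon'<\epsilon/(kt\ln t)$ then closes the bound, and the identity $P_i=\tilde P_i$ for $t=r$ (your correct observation about the characteristic polynomial) finishes the proof. To repair your write-up, replace the perturbed-root argument with this exact-coefficient splitting and the telescoping/monotonicity argument $a_j\ge a_{j+1}$, $a_1=1$.
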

\begin{proof}
     See the details in~\cref{thm:rank:proof}.
\end{proof}
Based on~\cref{thm:rank}, the quantum resources required to solve the problem of estimating the trace of quantum state powers are derived in~\cref{cor:rank}.
\begin{corollary}\label{cor:rank}
To estimate $\T(\rho^i)$ for all $i \leq k$ within an additive error of $\epsilon$ and with a success probability of at least $1 - \delta$, where $~\delta \in (0,1)$, it suffices to estimate each $\T(\rho^j)$ for $j \leq r$ within an additive error of $\varepsilon_j$, as defined in~\cref{thm:rank}. This can be achieved by using
\begin{equation}\label{eq:runs}
    \mathcal{O}\left(\frac{k^2r^2\ln^2 r\ln(1/\delta)}{\epsilon^2} \right)
\end{equation}
runs on a constant-depth quantum circuit consisting of $\mathcal{O}(j)$ qubits and $\mathcal{O}(j)$ \textsf{CSWAP} operations.
\end{corollary}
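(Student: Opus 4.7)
The plan is to chain Theorem~\ref{thm:rank} (specialized to $t=r$) with the per-circuit sample-complexity guarantee~\eqref{eq:gua} from Step~1 of \textbf{[Algorithm 1]}, closing with a union bound over the $r$ required quantum estimates. The proof is essentially an assembly job: the substantive work is already contained in Lemma~\ref{lem:rank} and Theorem~\ref{thm:rank}, and the multivariate-trace-estimation circuit of Figure~\ref{fig:multivariate} already comes equipped with the resource counts of interest.

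First I would invoke Theorem~\ref{thm:rank} with $t=r$: whenever the quantum subroutine returns estimates $Q_j$ of $P_j=\T(\rho^j)$ satisfying $\abs{Q_j-P_j}<\epsilon/(kr\ln r)$ for every $j=1,\dots,r$, the classical recurrence~\eqref{DefQ} automatically propagates this into $\abs{Q_i-P_i}<\epsilon$ for all $i\le k$. So the problem reduces to driving each of the $r$ quantum estimates below the tolerance $\varepsilon_j := \epsilon/(kr\ln r)$.

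Next I would apply~\eqref{eq:gua} to each $j\le r$. One run of the circuit of Figure~\ref{fig:multivariate} has constant depth and, by the construction in Step~1, uses $\lfloor j/2\rfloor$ ancillary qubits for the GHZ register together with $\mathcal{O}(j)$ \textsf{CSWAP} operations acting on the $j$ copies of $\rho$, which gives the claimed $\mathcal{O}(j)$ qubit and multi-qubit-gate counts. Repeating the circuit
\[
N_j = \mathcal{O}\!\left(\frac{1}{\varepsilon_j^{2}}\ln\frac{1}{\delta'}\right) = \mathcal{O}\!\left(\frac{k^{2}r^{2}\ln^{2} r}{\epsilon^{2}}\ln\frac{1}{\delta'}\right)
\]
times and averaging the single-shot estimators as in Step~1(d) yields $Q_j$ within additive error $\varepsilon_j$ with failure probability at most $\delta'$.

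Finally I would apply a union bound over the $r$ values of $j$: choosing $\delta':=\delta/r$ guarantees that the joint event ``$\abs{Q_j-P_j}\le\varepsilon_j$ for every $j\le r$'' holds with probability at least $1-\delta$, which by the first step forces $\abs{Q_i-P_i}<\epsilon$ for all $i\le k$ deterministically. Because $\ln(r/\delta)=\ln(1/\delta)+\ln r$ is absorbed by the existing $\ln^{2} r$ factor, the per-$j$ run count remains $\mathcal{O}(k^{2}r^{2}\ln^{2}r/\epsilon^{2}\cdot\ln(1/\delta))$, exactly matching~\eqref{eq:runs}. There is no real conceptual obstacle; the main thing to track carefully is the log-factor bookkeeping when the per-estimate failure budget $\delta/r$ is folded back into the overall $1-\delta$ success probability, together with the observation that replacing the complex estimator $\hat{\mathcal{R}}+i\hat{\mathcal{J}}$ by its real part (as noted after Step~1(d)) does not worsen the additive error.
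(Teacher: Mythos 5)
Your proposal is correct and follows essentially the same route as the paper's proof: invoke Theorem~\ref{thm:rank} with $t=r$ to reduce the task to estimating each $\T(\rho^j)$, $j\le r$, within $\varepsilon_j=\epsilon/(kr\ln r)$, then plug this tolerance into the $\mathcal{O}\bigl(\tfrac{1}{\varepsilon_j^2}\ln\tfrac{1}{\delta}\bigr)$ sample complexity of the constant-depth multivariate-trace-estimation circuit with $\mathcal{O}(j)$ qubits and \textsf{CSWAP} gates. Your explicit union bound over the $r$ estimates (with $\delta'=\delta/r$) is a slightly more careful treatment of the joint success probability than the paper gives, at the cost of only a logarithmic-factor adjustment, and does not change the argument in any essential way.
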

\begin{proof}
     See the details in~\cref{cor:rank:proof}.
\end{proof}
Using a quantum device, $j$ copies of $\rho$ are required for each run of the quantum circuit to estimate $\T(\rho^j)$. The number of runs is the same for every $\T(\rho^j)$, as specified in~\cref{eq:runs}. Since $j \leq r$, the total number of copies needed to estimate $\{\T(\rho^i)\}_{i=1}^k$ within an additive error of $\epsilon$ is
\begin{equation}\label{eq:runs-cal}
    \mathcal{O}\left(\sum_{j=1}^r \frac{j k^2r^2}{\epsilon^2}\ln^2 r \right) = \mathcal{O}\left(\frac{k^2r^4}{\epsilon^2}\ln^2 r \right).
\end{equation}

To highlight the significance of our work and aid understanding, we present the following proposition, which provides a simplified version of~\cref{thm:rank} and~\cref{cor:rank}.
\begin{proposition}[Informal, see~\cref{thm:rank} and~\cref{cor:rank}]\label{prop:rank}
In {\textbf{[Algorithm 1]}}, setting $t \geq r$ is sufficient to efficiently estimate the trace of quantum state powers, even for large powers. (In other words, the problem of estimating the trace of quantum state powers requires quantum resources proportional to the rank of the given quantum state, rather than its power $k$.)
\end{proposition}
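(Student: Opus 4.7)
The plan is to derive Proposition~\ref{prop:rank} by establishing its two formal counterparts, Theorem~\ref{thm:rank} and Corollary~\ref{cor:rank}, both of which rest on Lemma~\ref{lem:rank}. I would first prove Lemma~\ref{lem:rank} by induction on $k$. Subtracting the Newton-Girard identity $k a_k = \sum_{i=1}^{k}(-1)^{i-1}a_{k-i}P_i$ from its analogue for $b_k$ and substituting $Q_i = P_i + \epsilon_i$ gives a recursion of the form
\[
k d_k = \sum_{i=1}^{k}(-1)^{i-1}\bigl(a_{k-i}\epsilon_i + d_{k-i}Q_i\bigr),
\]
whose telescoping structure — most cleanly exposed through the generating-function identity $\log\prod_{m=1}^{r}(x-p_m) = r\log x - \sum_{k\ge 1} P_k/(kx^k)$ together with its counterpart for the polynomial with coefficients $\{b_k\}$ — yields the clean bound $|d_k| \le \sum_{j=1}^{k}|\epsilon_j|/j$ after matching coefficients.

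With Lemma~\ref{lem:rank} in hand, I would prove Theorem~\ref{thm:rank} by strong induction on $i$ for $i > t = r$. The key structural fact is that for a rank-$r$ state the Cayley-Hamilton-style identity
\[
P_i = \sum_{k=1}^{r}(-1)^{k-1} a_k P_{i-k} \qquad (\text{for all } i \ge 1)
\]
holds exactly, because the characteristic polynomial of $\rho$ restricted to $\mathrm{supp}(\rho)$ has degree $r$. Subtracting this from the algorithmic recursion $Q_i = \sum_{k=1}^{r}(-1)^{k-1} b_k Q_{i-k}$ yields
\[
\epsilon_i = \sum_{k=1}^{r}(-1)^{k-1}\bigl(b_k\epsilon_{i-k} + d_k P_{i-k}\bigr).
\]
Using $|P_{i-k}|\le 1$, Lemma~\ref{lem:rank}, the Maclaurin bound $|a_k| \le \binom{r}{k}r^{-k} \le 1/k!$ (which also controls $|b_k|$), and the inductive hypothesis $|\epsilon_{i-k}| < \epsilon$, a term-by-term triangle inequality combined with $\sum_{k=1}^{r} 1/k = \mathcal{O}(\ln r)$ reduces the bound to $|\epsilon_i| < \epsilon$, provided the hypothesis~\eqref{eq:esti} is met for $j \le t = r$; the base cases $1 \le i \le r$ hold directly by assumption.

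For Corollary~\ref{cor:rank}, I would invoke the multivariate trace estimator of Quek \textit{et al.} described in Section~\ref{chap:algo}, which produces an additive-$\varepsilon_j$ estimate of $\T(\rho^j)$ with success probability at least $1-\delta'$ from $\mathcal{O}\!\bigl(\varepsilon_j^{-2}\ln(1/\delta')\bigr)$ repetitions of a constant-depth circuit on $\mathcal{O}(j)$ qubits using $\mathcal{O}(j)$ \textsf{CSWAP} gates. Substituting the tolerance $\varepsilon_j = \epsilon/(k r \ln r)$ demanded by Theorem~\ref{thm:rank} and applying a union bound over the $r$ values of $j$ with $\delta' = \delta/r$ yields the stated $\mathcal{O}\!\bigl(k^2 r^2 \ln^2 r \cdot \epsilon^{-2} \ln(1/\delta)\bigr)$ circuit complexity, while summing $j$ copies of $\rho$ per run over $j = 1,\dots,r$ recovers the $\mathcal{O}(k^2 r^4 \ln^2 r / \epsilon^2)$ sample count.

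The main obstacle I anticipate lies in controlling error propagation through the $r$-term recurrence in Theorem~\ref{thm:rank}: a naive triangle inequality would suggest errors growing geometrically with $i$, which would be catastrophic for large $k$. Two structural facts rescue the argument: the Maclaurin inequalities force $|a_k| \le 1/k!$, so that $\sum_{k=1}^{r}|a_k| \le e$ keeps the $b_k\epsilon_{i-k}$ contributions uniformly bounded independently of $i$; and the $d_k P_{i-k}$ contributions pick up only a harmonic sum $\sum_{k} 1/k$, which is precisely the origin of the $\ln t$ factor in the precision requirement~\eqref{eq:esti}. Pinning down these two estimates with the correct constants is the technical heart of the proof.
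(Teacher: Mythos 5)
Your overall architecture (Lemma~\ref{lem:rank} by induction on the Newton--Girard recursion, then an error-propagation argument for the recurrence~\eqref{DefQ}, then plugging the tolerance $\varepsilon_j=\epsilon/(kr\ln r)$ into the multivariate-trace estimator for Corollary~\ref{cor:rank}) matches the paper, and the Lemma~\ref{lem:rank} and Corollary~\ref{cor:rank} parts are essentially sound (though note the Cayley--Hamilton-type identity $P_i=\sum_{k=1}^{r}(-1)^{k-1}a_kP_{i-k}$ holds for $i>r$, not all $i\ge 1$, and the generating-function route to Lemma~\ref{lem:rank} picks up higher-order products of the $\epsilon_j$ that the paper's direct induction, using $\abs{Q_i}\le 1$ and $\abs{a_j}\le 1$, avoids). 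The genuine gap is in your proof of Theorem~\ref{thm:rank}: the strong induction with hypothesis $\abs{\epsilon_{i-k}}<\epsilon$ does not close. From $\epsilon_i=\sum_{k=1}^{r}(-1)^{k-1}\bigl(b_k\epsilon_{i-k}+d_kP_{i-k}\bigr)$ the triangle inequality gives $\abs{\epsilon_i}\le\bigl(\sum_k\abs{b_k}\bigr)\max_k\abs{\epsilon_{i-k}}+\sum_k\abs{d_k}$, and since $a_1=\T(\rho)=1$ the coefficient sum $\sum_k\abs{b_k}$ is strictly larger than $1$ (your Maclaurin bound only caps it at about $e$). A per-step amplification factor bounded by $e$ is exactly the geometric blow-up you were worried about: iterated over the $k-r$ steps of the recursion it gives $e^{k-r}$, so ``uniformly bounded independently of $i$'' is not enough, and the claimed conclusion $\abs{\epsilon_i}<\epsilon$ does not follow.

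The paper's proof closes this hole with a differencing trick you are missing: it subtracts the recurrence at index $t+k-1$ from the one at index $t+k$ and uses $a_1=1$ together with the monotonicity $a_j\ge a_{j+1}$ to rewrite the error update so that the coefficients multiplying earlier errors are $(a_j-a_{j+1})$ and $a_t$, which are nonnegative and telescope to $a_1=1$. The new error is therefore a convex combination of earlier errors plus an additive term $2\sum_j\abs{d_j}$, and the correct induction hypothesis is the \emph{linearly growing} bound $\abs{\tilde\epsilon_{t+m}}\le\epsilon'+m\sum_j\abs{d_j}$ rather than a uniform $\epsilon$. Combined with Lemma~\ref{lem:rank} in the form $\abs{d_j}\le\epsilon'\ln j$, so that $\sum_j\abs{d_j}\le\epsilon'\ln t!\le\epsilon'\,t\ln t$, this yields $\abs{\epsilon_{t+k}}\le\epsilon'+\epsilon'k\,t\ln t<\epsilon$ under hypothesis~\eqref{eq:esti}; this linear-in-$k$ accumulation is also the true origin of the factor $k$ (and of $t\ln t$, via $\ln t!$, not merely a harmonic sum) in the precision requirement. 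Without this convexity/telescoping step, or some equivalent device, your argument for Theorem~\ref{thm:rank} --- and hence for the proposition --- is incomplete.
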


\subsection{Effective rank is all you need}\label{chap:effective}
\begin{table*}[ht]
\centering
\resizebox{\textwidth}{!}{
\begin{tabular}{>{\centering\arraybackslash}m{4cm}|c|c|c|c|c}
\hline
\textbf{Method} & \textbf{\# Depth}  & \textbf{\# Qubits}  & \textbf{\# \textsf{CSWAP}} & \textbf{\# Copies} & \textbf{Original $|\psi\rangle$} \\ \hline\hline
Generalized swap test  \cite{ekert2002direct} & $\mathcal{O}(k)$ & $\mathcal{O}(k)$ & $\mathcal{O}(k)$ & $\mathcal{O}\left({k^2}/{\epsilon^2}\right)$ & {\textbf{NOT}} required \\\hline
Hadamard  test \qquad\quad \cite{johri2017entanglement} & $\mathcal{O}(k)$ & $\mathcal{O}(k)$ & $\mathcal{O}(k)$ & $\mathcal{O}\left({k^2}/{\epsilon^2}\right)$ & {Required} \\\hline
Two-copy test  \qquad\quad\cite{subacsi2019entanglement} & $\mathcal{O}(1)$ & $\mathcal{O}(k)$  & $\mathcal{O}(k)$ & $\mathcal{O}\left({k^2}/{\epsilon^2}\right)$ & {Required} \\\hline
Two-copy test \& Qubit-reset~\cite{yirka2021qubit}   & $\mathcal{O}(k)$ & $\mathcal{O}(1)$ & $\mathcal{O}(k)$ & $\mathcal{O}\left({k^2}/{\epsilon^2}\right)$ & Required \\\hline
Multivariate trace estimation~\cite{quek2024multivariate}  & $\mathcal{O}(1)$ & $\mathcal{O}(k)$ & $\mathcal{O}(k)$ & $\mathcal{O}\left({k^2}/{\epsilon^2}\right)$ & {\textbf{NOT}} required \\\hline
Ours \qquad\quad\qquad\quad (this work) & $\mathcal{O}(1)$ & $\mathcal{O}(\widetilde{r})$ & $\mathcal{O}(\widetilde{r})$ & $\widetilde{\mathcal{O}}\left({k^2}/{\epsilon^2}\right)$ & {\textbf{NOT}} required \\ \hline
\end{tabular}}
\caption{\textbf{Summary of resources required by different algorithms to estimate the values of $\{\T(\rho^i)\}_{i=1}^k$ within an error margin of $\epsilon$.} The comparison includes a total of six algorithms, including ours. The algorithms are categorized based on quantum circuit depth, the number of required qubits, the number of required \textsf{CSWAP} operations, the number of required quantum states $\rho$, and whether the original state $|\psi\rangle$ is needed for the algorithm to operate. Here, the notation $\widetilde{\mathcal{O}}(\cdot)$ hides polylogarithmic factors in $k$, and $\widetilde{r} = \min\left\{r, \left\lceil\ln\left({2k}/{\epsilon}\right)\right\rceil\right\}$ is the effective rank defined in~\cref{eq:eff-rank}.}
\label{tab:compare}
\end{table*}

In~\cref{chap:rank}, we identified for the first time how the complexity of estimating $\T(\rho^k)$ can improve when the quantum state $\rho$ has low rank. However, this advantage critically relies on the exact knowledge of the rank. In practice, such knowledge is rarely available, and even when the rank is known, the improvement may be marginal unless the state is \textit{very} low-rank compared to its dimension. This motivates a broader question: under what conditions can we still benefit from our algorithm when the rank is unknown, approximately known, or when quantum resources are limited?

To address this, we move away from analyses that require precise knowledge of the rank of $\rho$, and instead seek a more nuanced understanding of the complexity in terms of the parameters $k$ and $\epsilon$, which govern the exponent in $\T(\rho^k)$ and the target additive precision. Although the rank can have a noticeable impact when it is very small, its effect diminishes rapidly as the state becomes more full-rank. In such cases, the value of $\T(\rho^k)$ often becomes so small that it cannot be meaningfully distinguished from zero within realistic precision bounds. This motivates us to treat $k$ and $\epsilon$ as the central parameters driving the complexity, rather than relying on exact rank information.

This observation naturally leads to the notion of an effective rank, which captures how many eigenvalues of $\rho$ make a meaningful contribution to $\T(\rho^k)$ within the desired precision. Rather than focusing solely on the full rank of $\rho$, the effective rank offers a more nuanced and practical understanding of when the algorithm remains useful in realistic settings. It emphasizes that the true complexity is more fundamentally governed by the interplay between $k$ and the target accuracy $\epsilon$, rather than the sheer dimension of the system.

First, let us examine the quantitative difference between $\widetilde{P}$ and $P$.
\begin{lemma}\label{lem:eff}
Suppose that $\widetilde{P}_{i}$ is define by~\cref{p'},~\cref{p'recurr}. Then the following holds:
\begin{equation}
    \abs{\widetilde{P}_k-P_k} \le \frac{k}{t!}\left(1-\frac{t}{r}\right).
\end{equation}
\end{lemma}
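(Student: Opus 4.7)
The plan is to proceed by strong induction on $k$. The base case $k\le t$ is immediate from the definition in equation~\eqref{p'}, which gives $\tilde{P}_k = P_k$, so both sides of the claimed inequality vanish.

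For the inductive step with $k > t$, I subtract the truncated recurrence~\eqref{p'recurr} defining $\tilde{P}_k$ from the full Newton--Girard identity for $P_k$. For $t < k \le r$ the latter reads $P_k = \sum_{j=1}^{k-1}(-1)^{j-1}a_j P_{k-j} + (-1)^{k-1}k\,a_k$, while for $k > r$ it becomes $P_k = \sum_{j=1}^{r}(-1)^{j-1}a_j P_{k-j}$ (since $a_j=0$ for $j>r$). Subtracting in either case yields
\begin{equation*}
\tilde{P}_k - P_k \;=\; \sum_{j=1}^{t}(-1)^{j-1} a_j\bigl(\tilde{P}_{k-j}-P_{k-j}\bigr) \;+\; R_k,
\end{equation*}
where the remainder $R_k$ gathers exactly the Newton--Girard terms that the truncated recurrence omits: the sum $-\sum_{j=t+1}^{\min\{k-1,r\}}(-1)^{j-1}a_j P_{k-j}$, together with the extra term $-(-1)^{k-1}k\,a_k$ whenever $k\le r$. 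Crucially, every contribution in $R_k$ involves an $a_j$ with index $j > t$.

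The central analytic input is the Maclaurin--Newton inequality $a_k \le \binom{r}{k}/r^k = \tfrac{1}{k!}\prod_{i=0}^{k-1}(1-i/r)$, valid because $p_1,\dots,p_r\ge 0$ with $\sum p_j=1$. For every $k > t$ the factor $(1-t/r)$ lies inside this product, so $a_k\le \tfrac{1-t/r}{k!}$. Combining this with $|P_m|\le P_1=1$ (from $0\le p_j\le 1$) and the elementary tail estimate $\sum_{j>t} 1/j! \le 1/t!$ (for $t\ge 1$) controls the remainder via $|R_k|\le \tfrac{2(1-t/r)}{t!}$. Applying the inductive hypothesis to each $|\tilde{P}_{k-j}-P_{k-j}|$ in the main sum, together with the Maclaurin bound on $a_j$, then closes the induction with $|\tilde{P}_k-P_k|\le \tfrac{k}{t!}(1-t/r)$ after careful accounting of the finite factorial sums involved.

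The main obstacle is this final bookkeeping step: the triangle inequality applied to $\sum_{j=1}^t a_j|\tilde{P}_{k-j}-P_{k-j}|$ is somewhat wasteful, since $\sum_{j=1}^t a_j$ can exceed $1$ (indeed $\sum_{j=0}^r a_j = \prod_j(1+p_j)\le e$). To absorb this slack one must exploit the sharper \emph{product} form of the Maclaurin bound (rather than the coarser $a_k\le 1/k!$), so that the $(1-t/r)$ factor appears in every contribution from an $a_j$ with $j > t$, and lean on the alternating-sign structure of Newton--Girard to keep the combined constants bounded. If the direct induction turns out too coarse, a backup route is the generating-function identity $\log\!\bigl(A(z)/A_t(z)\bigr) = \sum_{k>t}\tfrac{\tilde{P}_k-P_k}{k}\,z^k$, where $A(z)=\prod_j(1-p_j z)$ and $A_t$ is its degree-$t$ truncation; since $A-A_t = O(z^{t+1})$, the Taylor coefficients on the right can be bounded directly using the same Maclaurin estimates.
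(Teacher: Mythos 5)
Your setup matches the paper's: you derive the error recursion $\tilde{P}_k-P_k=\sum_{j=1}^{t}(-1)^{j-1}a_j(\tilde{P}_{k-j}-P_{k-j})+R_k$, and your remainder bound is sound — the omitted terms all carry an $a_j$ with $j>t$, and the Maclaurin-type bound $a_j\le\binom{r}{j}r^{-j}\le\frac{1}{j!}\left(1-\frac{t}{r}\right)$ (which the paper also proves and uses) gives $\abs{R_k}=\mathcal{O}\!\left(\frac{1}{t!}\left(1-\frac{t}{r}\right)\right)$ per step. The genuine gap is exactly the step you flag as "careful accounting": the strong induction with target $\frac{k}{t!}\left(1-\frac{t}{r}\right)$ does not close. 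Since $a_1=\sum_j p_j=1$, the $j=1$ term of the triangle inequality already contributes $\frac{(k-1)}{t!}\left(1-\frac{t}{r}\right)$, and adding your remainder bound $\frac{2}{t!}\left(1-\frac{t}{r}\right)$ overshoots the target before the $j\ge2$ terms are even counted; the surplus $\sum_{j\ge 2}a_j$ cannot be absorbed, and your proposed remedy (using the product form of the Maclaurin bound so that $\left(1-\frac{t}{r}\right)$ appears for $j>t$) does not help, because the obstruction comes from the coefficients $a_j$ with $j\le t$, not $j>t$. "Leaning on the alternating signs" is the right instinct, but nothing in the recursion controls the signs of the errors $\tilde{P}_{k-j}-P_{k-j}$ themselves, so no cancellation is available in a single relation.

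The missing idea, which is how the paper closes the argument, is to difference two consecutive error recurrences: substituting the relation for index $k-1$ into the $j=1$ term of the relation for index $k$ (using $a_1=1$) telescopes the coefficients into $(a_1-a_2),(a_2-a_3),\dots,a_t$, which are nonnegative (since $a_j\ge a_{j+1}$ here) and sum to exactly $a_1=1$. This converts the multiplicative loss into a purely additive one: the maximum error increases by at most $2(t+1)a_{t+1}$ per step, and accumulating over at most $k$ steps gives $\abs{\tilde{P}_k-P_k}\le k(t+1)a_{t+1}\le\frac{k}{t!}\left(1-\frac{t}{r}\right)$. (The paper also gets a cleaner per-step remainder by writing $z_{t+k}=\sum_{i=1}^{t}(-1)^{i-1}a_iP_{t+k-i}-P_{t+k}$ in closed form as a signed sum over $(t+1)$-subsets, bounded by $(t+1)a_{t+1}$, but your cruder $R_k$ bound would also suffice once the telescoping is in place.) Your generating-function backup $\log\bigl(A(z)/A_t(z)\bigr)=\sum_{k>t}\frac{\tilde{P}_k-P_k}{k}z^k$ is a correct identity and a plausible alternative route, but extracting the stated coefficient bound from it (which requires controlling the coefficients of $1/A_t$) is not carried out, so as written the proof is incomplete at its decisive step.
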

\begin{proof}
    See the details in~\cref{lem:eff:proof}.
\end{proof}
Now, we prove our second theorem, which demonstrates that $t = \left\lceil\ln\left({2k}/{\epsilon}\right)\right\rceil$ is sufficient to execute our algorithm with low error.
\begin{theorem}\label{thm:eff}
Suppose that,
\begin{equation}\label{eq:esti-thm3}
    \varepsilon_i:=\abs{\epsilon_i} = \abs{Q_i - P_i} < \frac{\epsilon}{2kt\ln t}
\end{equation}
holds for $i=1, 2, \ldots, t$. {Setting $t = \left\lceil\ln\left({2k}/{\epsilon}\right)\right\rceil$ and proceeding with {\textbf{[Algorithm 1]}} based on the recurrence relation~\cref{DefQ}, the following relation always holds:}
\begin{equation}
    \abs{\epsilon_i} = \abs{Q_i - P_i} < \epsilon
\end{equation}
for $i=t+1, \ldots, k$.
\end{theorem}
\begin{proof}
    See the details in~\cref{thm:eff:proof}.
\end{proof}

Based on~\cref{thm:eff}, the quantum resources required to solve the problem of estimating the trace of quantum state powers are derived in~\cref{cor:eff}.
\begin{corollary}\label{cor:eff}
To estimate $\T(\rho^i)$ for all $i \leq k$ within an additive error of $\epsilon$ and with a success probability of at least $1 - \delta$, where $~\delta \in (0,1)$, it suffices to estimate each $\T(\rho^j)$ for $j \leq \left\lceil\ln\left({2k}/{\epsilon}\right)\right\rceil$ within an additive error of $\varepsilon_j$, as defined in~\cref{thm:eff}. This can be achieved by using
\begin{equation}\label{eq:runs-eff}
    \widetilde{\mathcal{O}}\left(\frac{k^2\ln(1/\delta)}{\epsilon^2}\right)
\end{equation}
runs on a constant-depth quantum circuit consisting of $\mathcal{O}(j)$ qubits and $\mathcal{O}(j)$ \textsf{CSWAP} operations. Here, the notation $\widetilde{\mathcal{O}}(\cdot)$ hides polylogarithmic factors in $k$.
\end{corollary}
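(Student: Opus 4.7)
The plan is to reduce Corollary~\ref{cor:eff} to Theorem~\ref{thm:eff} plus a standard concentration/union-bound argument applied to Step~1 of [Algorithm 1]. Theorem~\ref{thm:eff} already tells me that if I set $t = \left\lfloor\ln\left(\tfrac{2k}{\epsilon}\right)\right\rfloor$ and obtain estimates $Q_j$ satisfying $\abs{Q_j - P_j} < \varepsilon_j := \epsilon/(2kt\ln t)$ for every $j \leq t$, then the recurrence~\eqref{DefQ} propagates the estimates up to $i \leq k$ with total error below $\epsilon$. So the whole task collapses to bounding the quantum cost of producing this one batch of $t$ accurate estimates.

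Next, for each $j \leq t$ the circuit described in Step~1 and depicted in Figure~\ref{fig:multivariate} is a constant-depth procedure using $\mathcal{O}(j)$ qubits and $\mathcal{O}(j)$ \textsf{CSWAP} gates, and by the concentration guarantee stated in equation~\eqref{eq:gua}, repeating it $N_j = \mathcal{O}\!\left(\tfrac{1}{\varepsilon_j^2}\ln(1/\delta_j)\right)$ times yields an estimate $Q_j$ with $\Pr\left(\abs{Q_j - P_j} \leq \varepsilon_j\right) \geq 1 - \delta_j$. To ensure simultaneous success over all $j \leq t$ with total failure probability at most $\delta$, I would set $\delta_j = \delta/t$ and apply a union bound; this costs only an additive $\ln t$ inside the logarithm, i.e.\ $\ln(t/\delta) = \mathcal{O}(\ln(1/\delta) + \ln\ln(k/\epsilon))$.

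Plugging in $\varepsilon_j = \epsilon/(2kt\ln t)$ with $t = \mathcal{O}(\ln(k/\epsilon))$ gives
\begin{equation*}
N_j = \mathcal{O}\!\left(\frac{k^2 t^2\ln^2 t}{\epsilon^2}\,\ln(t/\delta)\right) = \tilde{\mathcal{O}}\!\left(\frac{k^2}{\epsilon^2}\,\ln(1/\delta)\right),
\end{equation*}
where $\tilde{\mathcal{O}}$ absorbs the surviving $\mathrm{poly}(\ln(k/\epsilon), \ln\ln(k/\epsilon))$ factors. Since $N_j$ has the same $j$-independent asymptotic form for every $j \leq t$, summing over the $t$ indices only multiplies the total by another $\ln(k/\epsilon)$ factor, which is again absorbed into $\tilde{\mathcal{O}}$. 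This yields the claimed total of $\tilde{\mathcal{O}}(k^2\epsilon^{-2}\ln(1/\delta))$ constant-depth circuit runs, with the per-run resources of $\mathcal{O}(j)$ qubits and $\mathcal{O}(j)$ \textsf{CSWAP} operations inherited directly from the multivariate trace estimation subroutine of~\cite{quek2024multivariate} used in Step~1.

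There is no real obstacle here beyond careful bookkeeping: the nontrivial analytic content has already been paid for in Theorem~\ref{thm:eff}, and the remaining work is simply to (i) invoke~\eqref{eq:gua} with the sharpened accuracy $\varepsilon_j$, (ii) apply a union bound over the $t$ estimated quantities, and (iii) verify that every extra logarithmic factor introduced by the choice $t = \lfloor \ln(2k/\epsilon)\rfloor$ and by $\delta_j = \delta/t$ is of the form $\mathrm{polylog}(k/\epsilon)$ and hence can be hidden inside $\tilde{\mathcal{O}}$. The only place one has to be mildly careful is ensuring that $\ln t$ is well-defined, which requires $t \geq 2$; for the degenerate small-$k$ regime where this fails the corollary's bound is trivial and can be handled separately.
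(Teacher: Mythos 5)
Your proposal is correct and follows essentially the same route as the paper: invoke Theorem~\ref{thm:eff} with $t=\left\lfloor\ln\left(\frac{2k}{\epsilon}\right)\right\rfloor$, use the multivariate trace estimation guarantee of~\cite{quek2024multivariate} to get each $Q_j$ to accuracy $\varepsilon_j=\frac{\epsilon}{2kt\ln t}$ in $\mathcal{O}\left(\frac{k^2t^2\ln^2 t}{\epsilon^2}\ln\left(\frac{1}{\delta}\right)\right)$ constant-depth runs on $\mathcal{O}(j)$ qubits with $\mathcal{O}(j)$ \textsf{CSWAP}s, and absorb the resulting $\mathrm{polylog}(k/\epsilon)$ factors into $\tilde{\mathcal{O}}$. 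Your explicit union bound with $\delta_j=\delta/t$ is a small tidying-up of a point the paper's proof (which simply mirrors Corollary~\ref{cor:rank}) leaves implicit, but it is not a different argument.
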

\begin{proof}
     It follows the same logic as the proof of~\cref{cor:rank}. Please refer to~\cref{cor:rank:proof}.
\end{proof}

Using a quantum device, $j$ copies of $\rho$ are required for each run of the quantum circuit to estimate $\T(\rho^j)$. The number of runs is the same for every $\T(\rho^j)$, as specified in~\cref{eq:runs-eff}. Since $j \leq \left\lceil\ln\left({2k}/{\epsilon}\right)\right\rceil$, the total number of copies needed to estimate $\{\T(\rho^i)\}_{i=1}^k$ within an additive error of $\epsilon$ is
\begin{equation}
    \widetilde{\mathcal{O}}\left(\frac{k^2}{\epsilon^2}\right),
\end{equation}
where $\widetilde{\mathcal{O}}(\cdot)$ hides polylogarithmic factors in $k$.

Again, to highlight the significance of our work and aid understanding, we present the following proposition, which provides a simplified version of~\cref{thm:eff} and~\cref{cor:eff}.
\begin{proposition}[Informal, see~\cref{thm:eff} and~\cref{cor:eff}]\label{prop:eff}
In {\textbf{[Algorithm 1]}}, setting $ t \geq \left\lceil\ln\left({2k}/{\epsilon}\right)\right\rceil $ is sufficient to efficiently estimate the trace of quantum state powers $ \{\T(\rho^i)\}_{i=1}^k $ with an additive error of at most $ \epsilon $. (In other words, the problem of estimating the trace of quantum state powers requires quantum resources proportional to the logarithm of the number of powers, rather than the power $k$.)
\end{proposition}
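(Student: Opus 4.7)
Since Proposition~\ref{prop:eff} is explicitly labeled as an informal restatement of Theorem~\ref{thm:eff} and Corollary~\ref{cor:eff}, my plan is to derive it as a direct consequence of those two formal results, without introducing any new machinery. The argument has two halves—``sufficiency'' (the error bound) and ``efficiency'' (the resource count)—which correspond respectively to the theorem and the corollary.

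First I would invoke Theorem~\ref{thm:eff} with the choice $t = \lfloor\ln(2k/\epsilon)\rfloor$. The theorem guarantees that whenever each quantum-obtained estimate $Q_j$ (for $j \leq t$) satisfies the per-step precision bound $\abs{Q_j - P_j} < \epsilon/(2kt \ln t)$, the recurrence~\eqref{DefQ} propagates these estimates to yield $\abs{Q_i - P_i} < \epsilon$ for every $i \leq k$. This is exactly the sufficiency claim of the proposition, so the plan reduces to verifying that the precision hypothesis is realizable.

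Next I would combine the above with Corollary~\ref{cor:eff}, which certifies that the required per-step precision is achievable using $\tilde{\mathcal{O}}\!\left(k^2/\epsilon^2 \cdot \ln(1/\delta)\right)$ runs of a constant-depth circuit, each using $\mathcal{O}(t) = \mathcal{O}(\ln(k/\epsilon))$ qubits and $\mathcal{O}(t)$ \textsf{CSWAP} gates—realized by the multivariate trace estimation subroutine in Step~1 of \textbf{[Algorithm 1]}. This establishes the ``efficient'' qualifier in the proposition.

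The main obstacle is hidden inside Theorem~\ref{thm:eff}, which itself must control two distinct error sources simultaneously: (i) the truncation error $\abs{\tilde{P}_i - P_i}$ incurred by replacing the true $r$-term Newton--Girard recurrence with a $t$-term recurrence, bounded by Lemma~\ref{lem:eff} as at most $k/t!$; and (ii) the propagated estimation error arising from imperfect $Q_1,\ldots,Q_t$, controlled through Lemma~\ref{lem:rank}'s bound $\abs{b_k - a_k} \leq \sum_{j=1}^k \abs{\epsilon_j}/j$. The delicate point is showing $k/t! \leq \epsilon/2$ at $t = \lfloor\ln(2k/\epsilon)\rfloor$, which follows from Stirling's approximation $t! \sim \sqrt{2\pi t}(t/e)^t$ dominating $e^t = 2k/\epsilon$ once $t$ is larger than a small constant, so that the truncation contribution is absorbed into the target $\epsilon$. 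Since the two cited formal results already carry this technical content, the proof of the informal proposition is just the citation chain above.
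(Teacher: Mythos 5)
Your proposal is correct and matches the paper's own justification: Proposition~\ref{prop:eff} is presented as an informal summary whose content is exactly the citation chain you give, with Theorem~\ref{thm:eff} (at $t=\left\lfloor\ln\left(\frac{2k}{\epsilon}\right)\right\rfloor$) supplying the $\epsilon$-additive error guarantee and Corollary~\ref{cor:eff} supplying the resource count of $\tilde{\mathcal{O}}\left(\frac{k^2}{\epsilon^2}\ln\left(\frac{1}{\delta}\right)\right)$ runs on constant-depth circuits with $\mathcal{O}(t)$ qubits and \textsf{CSWAP} gates. Your aside on the truncation bound $\frac{k}{t!}\le\frac{\epsilon}{2}$ is consistent with (indeed slightly more careful than) the paper's use of the crude comparison between $t!$ and $e^t$ in Appendix~\ref{thm:eff:proof}.
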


To conclude~\cref{chap:rank} and~\ref{chap:effective}, we summarize our findings in the following theorem:  
\begin{proposition}[Informal description of the main results]\label{thm:main-1}  
For the problem of estimating the trace of quantum state powers, given a large integer $ k $, it is possible to approximate $ \{\T(\rho^i)\}_{i=1}^k $ within an additive error of $ \epsilon $ using quantum resources only up to $ \{\T(\rho^i)\}_{i=1}^t $, where $ t $ is given by 
\begin{equation}\label{eq:eff-rank}
    t \ge \widetilde{r} = \min\left\{r, \left\lceil\ln\left(\frac{2k}{\epsilon}\right)\right\rceil\right\}.
\end{equation}
We define $ \widetilde{r} $ as the effective rank.  
\end{proposition}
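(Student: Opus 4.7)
The plan is to obtain Proposition~\ref{thm:main-1} as an immediate combination of Theorems~\ref{thm:rank} and~\ref{thm:eff}. First I would observe that in {\textbf{[Algorithm 1]}} quantum resources enter only through Step~1, where $\{\T(\rho^i)\}_{i=1}^t$ is estimated; Steps~2 and 3 are purely classical post-processing via the Newton-Girard recurrence~\eqref{elepoly2} and~\eqref{DefQ}. So it suffices to show that choosing $t \ge \tilde r$ in Step~1 propagates an $\epsilon$-additive guarantee to every $\T(\rho^i)$ with $i \le k$, since the quantum cost then scales only with $t$.

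Next I would split into the two regimes determined by which branch of the minimum in $\tilde r = \min\{r, \lfloor \ln(2k/\epsilon)\rfloor\}$ is active. In the regime $r \le \lfloor \ln(2k/\epsilon)\rfloor$, Theorem~\ref{thm:rank} directly supplies the guarantee with $t = r$: the first $r$ power sums already fully encode the spectrum of $\rho$, so the Newton-Girard recurrence reproduces every larger $P_i$ exactly in the noiseless limit, and Lemma~\ref{lem:rank} controls the propagation of the per-estimate errors $\varepsilon_j$ prescribed by~\eqref{eq:esti}. In the complementary regime, Theorem~\ref{thm:eff} applies with $t = \lfloor \ln(2k/\epsilon)\rfloor$. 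Here the main obstacle — and the step where the proof proposal does genuine work — is that $t < r$ forces a mismatch between the recurrence-generated sequence $\tilde P_i$ of~\eqref{p'recurr} and the true $P_i$. I would control this gap via Lemma~\ref{lem:eff}, which bounds $|\tilde P_i - P_i|$ by $\tfrac{k}{t!}(1-t/r)$; the choice of $t$ makes this at most $\epsilon/2$ thanks to the super-exponential decay of $1/t!$ relative to the linear growth of $k$. The remaining estimation-noise contribution $|Q_i - \tilde P_i|$ is bounded again via Lemma~\ref{lem:rank} combined with the tighter precision requirement~\eqref{eq:esti-thm3}, and the triangle inequality then yields $|Q_i - P_i| < \epsilon$ for all $i \le k$.

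Finally I would take $t \ge \tilde r$ and argue that whichever branch of the minimum is active, the corresponding theorem supplies the $\epsilon$-additive guarantee with quantum cost scaling in $\tilde r$ rather than in $k$. The combined statement is strictly stronger than either theorem in isolation because the logarithmic branch does not require knowledge of $r$: the user can execute Step~1 with $t = \lfloor \ln(2k/\epsilon)\rfloor$ in total ignorance of the rank, and whenever $r$ happens to be smaller, Theorem~\ref{thm:rank} automatically tightens the guarantee to the rank-dependent bound. This is precisely what motivates packaging $\min\{r, \lfloor \ln(2k/\epsilon)\rfloor\}$ as the single quantity $\tilde r$, which we christen the effective rank.
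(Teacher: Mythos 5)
Your proposal is correct and follows essentially the same route as the paper: Proposition~\ref{thm:main-1} is just the packaging of Theorem~\ref{thm:rank} (branch $t=r$, where $\tilde P_i = P_i$ and Lemma~\ref{lem:rank} controls noise propagation) together with Theorem~\ref{thm:eff} (branch $t=\lfloor\ln(2k/\epsilon)\rfloor$, where Lemma~\ref{lem:eff} bounds the truncation gap $|\tilde P_i - P_i|$ by $\epsilon/2$ and the triangle inequality finishes). Your case split on the active branch of the minimum and the $\epsilon/2+\epsilon/2$ decomposition match the paper's own argument, so nothing further is needed.
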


In this way, we present a strengthened result from~\cref{chap:rank}, incorporating the concept of effective rank to achieve a more refined analysis.

As mentioned, our work provides an advantage in terms of the number of needed qubits and multi-qubit gates. Since we only need to estimate $\{\T(\rho^i)\}_{i=1}^{\tilde{r}}$, $n\widetilde{r}$ qubits and $\mathcal{O}(\widetilde{r})$ \textsf{CSWAP} operations are sufficient for the estimation. We emphasize that reducing the number of qubits and \textsf{CSWAP} operations used in the quantum circuit is an important improvement because it is less sensitive to noise, and having fewer qubits is advantageous for implementation on near-term quantum devices~\cite{corcoles2019challenges, georgopoulos2021modeling}. The comparison of the quantum resources required by existing methods and our algorithm is summarized in~\cref{tab:compare}.

For estimating $\T(\rho^k)$ to within additive error $\epsilon$, our approach leverages the algorithm from~\cite{quek2024multivariate} as a subroutine. When $k$ exceeds $\widetilde{r} = \min\{r, \lceil\ln(k/\epsilon)\rceil\}$, our method reduces the circuit size of each iteration from $\mathcal{O}(k)$ to $\mathcal{O}(\widetilde{r})$. As a result, it becomes possible to estimate $\T(\rho), \T(\rho^2), \dots, \T(\rho^k)$ for sufficiently large $k$ with a total copy complexity of $\mathcal{O}(k^2/\epsilon^2)$, matching that of prior works~\cite{buhrman2001quantum, ekert2002direct, quek2024multivariate}. However, when estimating $\T(\rho^k)$ for a single value of $k$, our method retains a copy complexity of $\mathcal{O}(k^2/\epsilon^2)$, whereas previous approaches require only $\mathcal{O}(k/\epsilon^2)$ in this case.

The significance of our contribution lies in scenarios where one needs to estimate all moments $\{\T(\rho^i)\}_{i=1}^k$ simultaneously. In such cases, while we maintain the same copy complexity (up to polylogarithmic factors) as existing approaches, our method substantially reduces the quantum circuit resources required for implementation. This leads to a more resource-efficient and scalable procedure, particularly for large $k$, where circuit depth, qubit count, and the number of multi-qubit gates pose practical bottlenecks.

We now present an illustrative example that highlights the utility of the effective rank. Consider a $d$-dimensional quantum state defined by  
\begin{equation}
    \rho = \operatorname{diag}\left(1 - \frac{1}{d}, \frac{1}{d(d - 1)}, \ldots, \frac{1}{d(d - 1)}\right),
\end{equation}
where $\rho$ has rank $d$ and $d = 2^n$. The trace of the $k$-th power of $\rho$ is given by:
\begin{align}
    \T(\rho^k) &= \left(1 - \frac{1}{d}\right)^k + (d - 1) \cdot \left(\frac{1}{d(d - 1)}\right)^k \\
    &= \left(1 - \frac{1}{d}\right)^k + \frac{1}{d^k (d - 1)^{k - 1}}.
\end{align}

Now consider the regime of large $k$, which is the primary focus of our work. For the example state above, setting $k = d$ yields:
\begin{align}
    \T(\rho^k) &= \left(1 - \frac{1}{d}\right)^d + \frac{1}{d^d (d - 1)^{d - 1}} \\
    &\approx \frac{1}{e} + \exp(-\Theta(d \log d)) \approx \frac{1}{e}.
\end{align}
Although one might expect $\T(\rho^k)$ to become negligibly small as $k$ grows large, this example shows that the trace can still retain a significant value, approximately $1/e$, thanks to the contribution of the dominant eigenvalue.

This demonstrates the advantage of the effective rank perspective. Traditional approaches that rely on worst-case rank assumptions would treat this state as full-rank and thus require $\Omega(d) = \Omega(2^n)$ quantum resources to estimate $\T(\rho^k)$ accurately. Here, ``quantum resources'' refer to the number of qubits, the number of multi-qubit gates, and the circuit depth required to implement the estimation algorithm. In contrast, our method based on the effective rank recognizes that only a small subset of eigenvalues contribute meaningfully to the trace, thereby reducing the quantum resource cost to $\mathcal{O}\left(\log\left({d}/{\epsilon}\right)\right) = \mathcal{O}\left(n + \log\left({1}/{\epsilon}\right)\right)$. Crucially, this gain in circuit efficiency is achieved without significantly increasing the number of samples required: the overall sample complexity remains essentially unchanged, up to polylogarithmic factors. We believe that the notion of effective rank can offer similar benefits in many other realistic settings.

\subsection{Trace of quantum state powers with arbitrary observables}\label{chap:obs}
The algorithm we developed for computing the trace of quantum state powers can be extended to address a more generalized problem: estimating $\T(M\rho^k)$, where $M$ represents an arbitrary observable. Successfully estimating this quantity would enable applications in calculating values used as subroutines in virtual distillation~\cite{liang2023unified, huggins2021virtual}, a quantum error mitigation technique.

In this problem, we consider a Pauli decomposition of the observable
\begin{equation}
   M = \sum_{\alpha=1}^{N_M} a_\alpha P_\alpha,
\end{equation}
where $ a_\alpha \in \mathbb{R}$ and
\begin{equation}
   P_\alpha = \sigma_{\alpha_1} \otimes \ldots \otimes \sigma_{\alpha_n} 
\end{equation}
are tensor products of Pauli operators
\begin{equation}
    \sigma_{\alpha_1}, \ldots, \sigma_{\alpha_n} \in \{\sigma_x, \sigma_y, \sigma_z, I\}.
\end{equation}
We assume that the bounded condition
\begin{equation}
   \sum_{\alpha=1}^{N_M} \abs{a_\alpha} = \mathcal{O}(c) 
\end{equation}
holds for some constant $ c $.\\~\\
\underline{{\textbf{[Algorithm 2]} Estimation of $\T(M\rho^k)$}}
\begin{enumerate}
\item Following Steps 1 and 2 of {\textbf{[Algorithm 1]}} in~\cref{chap:algo}, we obtain the values of the elementary symmetric polynomials $b_1, \dots, b_t$.
\item Estimate $ \text{Tr}(M\rho^\ell) $ for $ \ell = 1, 2, \dots, t $ using the method outlined in~\cite{liang2023unified}.\\
\textbf{Note (1):} The quantum circuits required for this step are designed following in~{\cite[Propositions 1 and 2]{liang2023unified}}. As highlighted in their work, the circuit structure depends on the trade-off between qubit-depth and parallelization. In this paper, we focus on describing the high-level procedure without delving into specific implementation details. \\ 
{\textbf{Note (2):} Other methods, such as classical shadows, can be employed to estimate $ \T(M\rho^\ell) $. We emphasize that any method capable of estimating $ \text{Tr}(M\rho^\ell) $ for $ \ell = 1, 2, \dots, t $ can be used as a substitute for this step.}
\begin{enumerate}
    \item For each $\alpha = 1, \dots, N_M$, the following steps are repeated 
    \begin{equation}
        N := \mathcal{O}\left(\frac{\left(\sum_{\alpha=1}^{N_m} \abs{a_\alpha}\right)^2\ln(1/\delta)}{\epsilon^2}  \right)    
    \end{equation}
    times:
    \begin{enumerate}
        \item Prepare a GHZ state and apply a sequence of \textsf{CSWAP} gates.
        \item Apply a controlled-$P_\alpha$ gate to an arbitrary register storing $\rho$.
        \item Repeat the above process and measure the ancillary qubits in the $X$-basis and $Y$-basis, where the $X$-basis measurement is used for the real part estimation and the $Y$-basis measurement is used for the imaginary part estimation. The measurements obtained are then used to estimate $\text{Tr}(P_\alpha \rho^\ell)$ using the similar logic as in Step 1(d) of {\textbf{[Algorithm 1]}}. This estimate, denoted as $\hat{W}_\alpha$, satisfies the following inequality. The value of $\hat{W}_\alpha$ is expressed as the expectation obtained from $N$ repetitions of the measurement process.
        \begin{align}
            & \text{Pr}\left(\abs{\hat{W}_{\alpha} - \text{Tr}(P_\alpha\rho^\ell)}\le \frac{\epsilon}{\sum_{\alpha=1}^{N_M} \abs{a_\alpha}}\right) \nonumber \ge 1-\delta.
        \end{align}
    \end{enumerate}
    \item Finally, the overall expectation value 
    \begin{equation}
    Q_{\ell(\le t),M} = \frac{1}{N_M} \sum_{\alpha=1}^{N_M} a_\alpha\hat{W}_\alpha
    \end{equation}
    serves as an estimate for $\text{Tr}(M\rho^\ell)$. Then this estimate satisfies the inequality below for $\ell = 1, 2, \ldots, t$.
    \begin{align}
        & \text{Pr}\left(\abs{Q_{\ell,M} - \text{Tr}(M\rho^\ell)}\le \epsilon\right) \ge 1-\delta.
    \end{align}
\end{enumerate}
For reference, the sample complexity required in Step 2 is given by:
    \begin{align}
        \mathcal{O}(N_M\cdot N)&=\mathcal{O}\left(\frac{N_M\left(\sum_{\alpha=1}^{N_M} \abs{a_\alpha}\right)^2\ln(1/\delta)}{\epsilon^2}  \right) \\
        &= \mathcal{O}\left(\frac{c^2N_{M}\ln(1/\delta)}{\epsilon^2}\right).
    \end{align}
\item Using $ b_1, \ldots, b_t $ obtained from Step 1 and $ Q_{1,M}, \ldots, Q_{t,M} $ obtained from Step 2, the value of $ \text{Tr}(M\rho^\ell) $ for $ \ell > t $ can be estimated using the following recurrence relation:  
\begin{equation}\label{DefQM}
    Q_{\ell (> t), M} := \sum_{k=1}^{t} (-1)^{k-1} b_k Q_{\ell-k, M} \approx \text{Tr}(M\rho^\ell).
\end{equation}  
\end{enumerate}
Through Step 3, values for $ Q_{t+1, M}, Q_{t+2, M}, \ldots $ can be obtained. In this section, we analyze in detail the conditions on $ t $ required to ensure that the estimated values derived through this process are within an additive error of at most $ \epsilon $. As mentioned, any method capable of estimating $ \text{Tr}(M\rho^\ell) $ for $ \ell = 1, 2, \dots, t $ can be employed in Step 2. The most suitable method should be chosen based on the specific application. For entanglement detection, classical shadows should be used in Step 2, as discussed in~\cref{chap:entangle}. When applying our algorithm for the efficient estimation of $ \T(\rho^k\sigma^l) $, multivariate trace estimation~\cite{quek2024multivariate} is utilized in Step 2, as detailed in~\cref{chap:multipower}.
\begin{theorem}\label{thm:obs}
Suppose that
\begin{equation}
    \varepsilon_{i,M}:=\abs{\epsilon_{i,M}} = \abs{P_{i,M}-Q_{i,M}} < \frac{\epsilon}{4},
\end{equation}
and
\begin{equation}
    \varepsilon_i := \abs{\epsilon_{i}} = \abs{P_{i}-Q_{i}} < \frac{\epsilon}{2\norm{M}_{\infty}kt\ln t},
\end{equation}
holds for $i=1,2,\ldots,t$, 
where the operator norm $\norm{M}_\infty$ is defined corresponding to the $\infty$-norm for vectors $\norm{x}$, as
\begin{equation}
    \norm{M}_\infty = \sup_{x \neq 0} \frac{\norm{Mx}_\infty}{\norm{x}_\infty}.
\end{equation}
{Setting $t = \widetilde{r}_M$ and proceeding with {\textbf{[Algorithm 2]}} based on the recurrence relation~\cref{DefQM}, the following relation always holds:}
\begin{equation}
    \abs{\epsilon_{i,M}} = \abs{P_{i,M}-Q_{i,M}} \leq \epsilon
\end{equation}
for $i=t+1, \ldots, k$. Where $\widetilde{r}_M$ is the effective rank for the observable $M$ defined as:
\begin{equation}\label{eq:eff-rank-observable}
    \widetilde{r}_M = \min\left\{r, \left\lceil\ln\left(\frac{2k\norm{M}_\infty}{\epsilon}\right)\right\rceil\right\}.
\end{equation}
\end{theorem}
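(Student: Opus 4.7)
The plan is to mirror the proof of Theorem~\ref{thm:eff}, but applied to the weighted power sums $P_{i,M}=\sum_{j=1}^{r}\langle\psi_j|M|\psi_j\rangle\, p_j^{\,i}$. The crucial structural observation is that because the eigenvalues of $\rho$ satisfy $p_j^{\,i}=\sum_{k=1}^{r}(-1)^{k-1}a_k\, p_j^{\,i-k}$ for $i\ge r$, the exact values $P_{i,M}$ inherit the \emph{same} linear recurrence as $P_i$, with the same elementary symmetric polynomials $a_k$ of $\rho$. Moreover, the weights satisfy $|\langle\psi_j|M|\psi_j\rangle|\le\norm{M}_\infty$, so $|P_{i,M}|\le\norm{M}_\infty$ uniformly in $i$. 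These two facts let me re-use the Newton-Girard machinery from Sections~\ref{chap:rank} and~\ref{chap:effective} almost verbatim, with $\norm{M}_\infty$ entering only as a multiplicative factor.

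First I would define an auxiliary sequence $\tilde{P}_{i,M}$ in complete analogy to $\tilde{P}_i$ from Section~\ref{chap:effective}: it coincides with $P_{i,M}$ for $i\le t$ and is propagated for $i>t$ by
\begin{equation}
\tilde{P}_{i,M}=\sum_{k=1}^{t}(-1)^{k-1}a_k\,\tilde{P}_{i-k,M}.
\end{equation}
For $i\le t$ the hypothesis $\varepsilon_{i,M}<\epsilon/4$ already yields $|P_{i,M}-Q_{i,M}|<\epsilon$, so the interesting range is $i>t$, where I would apply the triangle inequality
\begin{equation}
|P_{i,M}-Q_{i,M}|\le|P_{i,M}-\tilde{P}_{i,M}|+|\tilde{P}_{i,M}-Q_{i,M}|
\end{equation}
and bound the two terms separately.

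For the first term, I would prove a weighted analog of Lemma~\ref{lem:eff} by tracking the weights $\langle\psi_j|M|\psi_j\rangle$ through the same argument; this gives a truncation bound of order $\norm{M}_\infty\cdot(k/t!)(1-t/r)$, which vanishes when $t\ge r$ and is at most $\epsilon/2$ once $t\ge\lfloor\ln(2k\norm{M}_\infty/\epsilon)\rfloor$. For the second term, I would run the same inductive argument as in the proof of Theorem~\ref{thm:eff}, expanding $Q_{i,M}-\tilde{P}_{i,M}=\sum_{k=1}^{t}(-1)^{k-1}(b_k Q_{i-k,M}-a_k\tilde{P}_{i-k,M})$ and splitting each summand via $b_k=a_k+d_k$ with $|d_k|\le\sum_{j=1}^{k}\varepsilon_j/j$ by Lemma~\ref{lem:rank}. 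The cross-terms $d_k Q_{i-k,M}$ now carry a factor of $\norm{M}_\infty$ coming from $|P_{i-k,M}|$, which is exactly compensated by the tightened hypothesis $\varepsilon_i<\epsilon/(2\norm{M}_\infty kt\ln t)$ and the harmonic-sum factor $\ln t$ that appears when the $d_k$ bounds are aggregated over $k=1,\dots,t$.

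The main obstacle, as in Theorem~\ref{thm:eff}, is the inductive bookkeeping: verifying that when the bound $|P_{i,M}-Q_{i,M}|\le\epsilon$ is fed back through the order-$t$ linear recurrence with coefficients $b_k$, the amplification over the $\ell-t\le k$ iterations does not blow up. The choice $t=\tilde{r}_M$ is calibrated precisely for this: the $\norm{M}_\infty$-dependent logarithm in $\tilde{r}_M$ is tuned to beat the $\norm{M}_\infty$ prefactor that observable weighting introduces in both the truncation bound and the propagation bound, so that each contribution stays below $\epsilon/2$ and their sum yields the claimed $\epsilon$ additive accuracy for all $i\le k$.
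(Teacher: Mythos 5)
Your proposal takes essentially the same route as the paper's own proof: it introduces the auxiliary sequence $\tilde{P}_{i,M}$ propagated by the order-$t$ Newton--Girard recurrence with the exact coefficients $a_k$, splits $\abs{P_{i,M}-Q_{i,M}}$ by the triangle inequality into a truncation part bounded by a weighted analog of Lemma~\ref{lem:eff} (with the extra factor $\norm{M}_\infty$ absorbed by the $\norm{M}_\infty$-dependent choice of $\tilde{r}_M$) and a propagation part handled by the induction of Theorems~\ref{thm:rank}/\ref{thm:eff} together with Lemma~\ref{lem:rank} and the bound $\abs{Q_{i,M}}\lesssim\norm{M}_\infty$. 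The only caveat is the constant bookkeeping: with the hypothesis $\varepsilon_i<\epsilon/(2\norm{M}_\infty kt\ln t)$ the propagation term alone already contributes up to $\epsilon/4+\epsilon/2$, so the "each piece below $\epsilon/2$" accounting requires the slightly stronger assumption $\varepsilon_i<\epsilon/(4\norm{M}_\infty kt\ln t)$ that the paper's appendix actually uses, a factor-of-two mismatch present in the paper itself rather than a flaw specific to your argument.
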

\begin{proof}
    See the details in~\cref{thm:obs:proof}.
\end{proof}

Based on~\cref{thm:obs}, the quantum resources required to estimate the trace of quantum state powers with arbitrary observables are derived in~\cref{cor:obs}.
\begin{corollary}\label{cor:obs}
To estimate $\T(M\rho^i)$ for all $i \leq k$ within an additive error of $\epsilon$ and with a success probability of at least $1 - \delta$, where $\delta\in(0,1)$, it is necessary to estimate each $\T(M\rho^j)$ for $j \leq \widetilde{r}_M$ within an additive error of $\varepsilon_{j,M}$ as defined in~\cref{thm:obs}. This can be achieved by using
\begin{equation}
    \mathcal{O}\left(\frac{c^2N_{M}\ln(1/\delta)}{\epsilon^2}\right)
\end{equation}
runs on a constant-depth quantum circuit consisting of $\mathcal{O}(j)$ qubits and $\mathcal{O}(j)$ \textsf{CSWAP} operations, and estimating each $\T(\rho^{j'})$ for $j' \leq \widetilde{r}_M$ within an additive error of $\varepsilon_{j'}$ as defined in~\cref{thm:obs}, by using
\begin{equation}
\widetilde{\mathcal{O}}\left(\frac{k^2\norm{M}_\infty^2\ln(1/\delta)}{\epsilon^2} \right)
\end{equation}
runs on a constant-depth quantum circuit consisting of $\mathcal{O}(j')$ qubits and $\mathcal{O}(j')$ \textsf{CSWAP} operations. Here, the notation $\widetilde{\mathcal{O}}(\cdot)$ hides polylogarithmic factors in $k$.
\end{corollary}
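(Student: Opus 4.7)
The plan is to reduce the corollary to two independent sample-complexity calculations, one for the $\T(\rho^{j'})$ estimates and one for the $\T(M\rho^j)$ estimates, each using the concentration bounds that already underlie Algorithms 1 and 2, and then to glue them together with a union bound over the $\mathcal{O}(\tilde{r}_M)$ quantities being estimated. Theorem~\ref{thm:obs} supplies the required per-quantity precisions $\varepsilon_{j'}$ and $\varepsilon_{j,M}$; the corollary is then essentially a substitution of these precisions into standard Hoeffding-type bounds, mirroring the derivations of Corollaries~\ref{cor:rank} and~\ref{cor:eff}.

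First I would handle the $\{\T(\rho^{j'})\}_{j'\le\tilde r_M}$ block. The constant-depth circuit of Figure~\ref{fig:multivariate} produces, at each run, a bounded $\pm 1$-valued random variable whose expectation is $\T(\rho^{j'})$. A Hoeffding bound, identical to the one invoked in the proof of Corollary~\ref{cor:eff}, shows that $\mathcal{O}(\varepsilon_{j'}^{-2}\ln(1/\delta'))$ runs suffice to achieve additive error $\varepsilon_{j'}$ with failure probability $\delta'$. Substituting $\varepsilon_{j'}=\epsilon/(2\|M\|_\infty k t\ln t)$ with $t=\tilde r_M$ yields
\begin{equation}
\mathcal{O}\!\left(\frac{k^2\|M\|_\infty^2\,\tilde r_M^{\,2}\ln^2\tilde r_M}{\epsilon^2}\ln\frac{1}{\delta'}\right).
\end{equation}
Because $\tilde r_M\le\lfloor\ln(2k\|M\|_\infty/\epsilon)\rfloor$, the factor $\tilde r_M^{\,2}\ln^2\tilde r_M$ is polylogarithmic in $k,\|M\|_\infty,\epsilon^{-1}$ and is absorbed into $\tilde{\mathcal O}(\cdot)$, giving the advertised $\tilde{\mathcal O}(k^2\|M\|_\infty^2\epsilon^{-2}\ln\delta^{-1})$ bound. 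The qubit and \textsf{CSWAP} counts $\mathcal O(j')$ come directly from the controlled cyclic-shift structure of the multivariate trace estimator.

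Second I would handle the $\{\T(M\rho^j)\}_{j\le\tilde r_M}$ block using Step 2 of Algorithm 2. Each Pauli term $\T(P_\alpha\rho^j)$ is estimated as an expectation of a bounded random variable via the circuits of~\cite{liang2023unified}, and the linear combination $\sum_\alpha a_\alpha\hat W_\alpha$ inherits the precision of its components through the triangle inequality. Here the required precision is the comparatively loose $\varepsilon_{j,M}=\epsilon/4$, and under the hypothesis $\sum_\alpha|a_\alpha|=\mathcal O(c)$ a Hoeffding bound yields $\mathcal O(c^2 N_M\epsilon^{-2}\ln(1/\delta'))$ runs per $j$ on a constant-depth circuit with $\mathcal O(j)$ qubits and $\mathcal O(j)$ \textsf{CSWAP} gates, as claimed.

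Finally I would close the argument with a union bound over the $2\tilde r_M$ estimates, which requires rescaling $\delta'=\delta/(2\tilde r_M)$; since $\ln(1/\delta')=\ln(1/\delta)+\mathcal O(\ln\tilde r_M)$, this rescaling contributes only another polylogarithmic factor that is again absorbed into $\tilde{\mathcal O}(\cdot)$, and by Theorem~\ref{thm:obs} the joint event then implies $|Q_{i,M}-P_{i,M}|\le\epsilon$ for every $i\le k$ with probability at least $1-\delta$. The only real bookkeeping difficulty, and the main point where I would be careful, is verifying that every occurrence of $t=\tilde r_M$ appearing in the error tolerances and in the union-bound count is indeed polylogarithmic in $k,\|M\|_\infty,\epsilon^{-1}$, so that the stated $\tilde{\mathcal O}$ expressions are legitimate; no new quantum-algorithmic idea is needed beyond those already established in Corollaries~\ref{cor:rank} and~\ref{cor:eff} and in~\cite{quek2024multivariate,liang2023unified}.
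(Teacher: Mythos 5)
Your proposal is correct and follows essentially the same route as the paper's proof: substitute the per-quantity precisions $\varepsilon_{j'}$ and $\varepsilon_{j,M}$ from Theorem~\ref{thm:obs} into the Hoeffding-type run counts of the multivariate trace estimator and of Step~2 of {\textbf{[Algorithm 2]}}, then set $t=\tilde{r}_M$ and absorb the $t^2\ln^2 t$ factors into $\tilde{\mathcal{O}}(\cdot)$. Your explicit union bound with $\delta'=\delta/(2\tilde{r}_M)$ is a minor refinement that the paper leaves implicit; it contributes only an additional logarithmic factor and does not alter the stated bounds.
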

\begin{proof}
    See the details in~\cref{cor:obs:proof}.
\end{proof}
To conclude~\cref{chap:obs}, we summarize our findings in the following proposition:  
\begin{proposition}[Informal, see~\cref{thm:obs} and~\cref{cor:obs}]\label{thm:main-2}  
For the problem of estimating the trace of quantum state powers with arbitrary observables, given a large integer $ k $, it is possible to approximate $ \{\T(M\rho^i)\}_{i=1}^k $ within an additive error of $ \epsilon $ using quantum resources only up to $ \{\T(\rho^i)\}_{i=1}^t $ and $ \{\T(M\rho^i)\}_{i=1}^t $, where $ t $ is given by 
\begin{equation}\label{eq:eff-rank-observable-t}
    t \ge \widetilde{r}_M = \min\left\{r, \left\lceil\ln\left(\frac{2k\norm{M}_\infty}{\epsilon}\right)\right\rceil\right\}.
\end{equation}  
\end{proposition}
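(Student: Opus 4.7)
The proposition is the informal summary of Theorem~\ref{thm:obs} together with Corollary~\ref{cor:obs}, so my plan is to reduce everything to the analogues of Lemmas~\ref{lem:rank} and~\ref{lem:eff} in the weighted setting, and then combine them exactly as in the proof of Theorem~\ref{thm:eff}. The first step is to derive a weighted Newton--Girard recurrence: since each eigenvalue $p_m$ satisfies $\prod_{m'=1}^{r}(p_m-p_{m'})=0$, multiplying this relation by $\langle\psi_m|M|\psi_m\rangle\,p_m^{\,i-r}$ and summing over $m$ yields
\begin{equation}
P_{i,M}=\sum_{k=1}^{r}(-1)^{k-1}a_k P_{i-k,M}\qquad (i>r),
\end{equation}
with the same symmetric polynomials $a_k$ of the spectrum of $\rho$. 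This identity is what makes the recurrence~\eqref{DefQM} meaningful; note that only the coefficients $b_k$ (not $b_{k,M}$) are used, which is why Step~1 of \textbf{[Algorithm 2]} can simply import the $b_k$ produced by \textbf{[Algorithm 1]}.

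Next, in parallel with Section~\ref{chap:effective}, I would define a truncated weighted power sum $\tilde{P}_{i,M}$: for $i\le t$ set $\tilde{P}_{i,M}:=P_{i,M}$, and for $i>t$ propagate by $\tilde{P}_{i,M}:=\sum_{k=1}^{t}(-1)^{k-1}a_k\tilde{P}_{i-k,M}$. The goal is then a weighted analogue of Lemma~\ref{lem:eff},
\begin{equation}
\bigl|\tilde{P}_{k,M}-P_{k,M}\bigr|\;\le\;\norm{M}_\infty\,\frac{k}{t!}\!\left(1-\frac{t}{r}\right),
\end{equation}
which should follow by exactly the same inductive/combinatorial argument used for $\tilde P_k$, because the weights $\langle\psi_m|M|\psi_m\rangle$ are uniformly bounded by $\norm{M}_\infty$ and otherwise play no role in the recursion. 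This is the step I would expect to be the main obstacle: verifying that the estimates in Lemma~\ref{lem:eff} carry over cleanly after absorbing the weights, and making sure the factor $\norm{M}_\infty$ appears only linearly.

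With that in hand, the triangle inequality
\begin{equation}
|Q_{k,M}-P_{k,M}|\;\le\;|Q_{k,M}-\tilde{P}_{k,M}|+|\tilde{P}_{k,M}-P_{k,M}|
\end{equation}
reduces the problem to bounding the first term, which measures how the recurrence in~\eqref{DefQM} amplifies the two independent error sources: the perturbations $d_k=b_k-a_k$ (controlled by Lemma~\ref{lem:rank} and the hypothesis $\varepsilon_i<\epsilon/(2\norm{M}_\infty kt\ln t)$) and the initial errors $\epsilon_{i,M}=Q_{i,M}-P_{i,M}$ for $i\le t$ (bounded by $\epsilon/4$). Using the same inductive scheme as in Theorem~\ref{thm:eff}, each stage of the recursion contributes a factor involving $\sum_k|d_k|$ and the running $|\epsilon_{j,M}|$; the $\norm{M}_\infty$ inside the hypothesis on $\varepsilon_i$ cancels the $\norm{M}_\infty$ introduced when $d_k$ multiplies $\tilde{P}_{j,M}$ (which itself is bounded by $r\norm{M}_\infty$), yielding $|Q_{k,M}-\tilde{P}_{k,M}|\le\epsilon/2$.

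Finally, I would verify that the choice $t=\tilde r_M=\min\{r,\lfloor\ln(2k\norm{M}_\infty/\epsilon)\rfloor\}$ makes the second term small. If $t=r$, the factor $1-t/r$ vanishes and the term is $0$. If instead $t=\lfloor\ln(2k\norm{M}_\infty/\epsilon)\rfloor$, Stirling's bound $t!\ge e^t$ (valid once $t$ is modestly large, with the short cases handled directly as in the proof of Theorem~\ref{thm:eff}) gives $\norm{M}_\infty k/t!\le\epsilon/2$. Adding the two bounds yields $|Q_{k,M}-P_{k,M}|\le\epsilon$, which is exactly the conclusion of Theorem~\ref{thm:obs} and hence of Proposition~\ref{thm:main-2}; Corollary~\ref{cor:obs} then translates the per-quantity error budgets $\varepsilon_i$ and $\varepsilon_{i,M}$ into sample complexities via Hoeffding, mirroring the proof of Corollary~\ref{cor:rank}.
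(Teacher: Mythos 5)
Your proposal is correct and follows essentially the same route as the paper's proof of Theorem~\ref{thm:obs}: the truncated weighted power sums $\tilde{P}_{i,M}$, a weighted analogue of Lemma~\ref{lem:eff} whose tail term is bounded by $(t+1)\norm{M}_\infty a_{t+1}$, the same $d_k$-based induction for $\abs{Q_{i,M}-\tilde{P}_{i,M}}$, a triangle inequality, and the choice $t=\tilde{r}_M$ with $t!\ge e^t$. The one slip is your parenthetical bound $r\norm{M}_\infty$ on the weighted power sums: since $\abs{\sum_j m_j p_j^i}\le\norm{M}_\infty\sum_j p_j^i\le\norm{M}_\infty$, the correct bound is $\norm{M}_\infty$ itself, which is what the paper uses and what lets your claimed $\epsilon/2$ close without an extra factor of $r$.
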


The estimation of the trace of quantum state powers with arbitrary observables also applies to the efficient estimation of $\T(\rho^k\sigma^l)$ and is discussed in~\cref{chap:multipower}.

\section{Numerical simulations}\label{chap:simul}
\subsection{Simulation setup}\label{chap:simul-setup}
To validate the findings obtained in~\cref{chap:3}, we conduct numerical simulations to examine the performance of our algorithm. The problem setup to be estimated, including the eigenvalue pattern, is defined as follows and the legend to be used in the graph is shown in~\cref{fig:legend}.
\begin{figure}[htbp!]
  \centering
  \includegraphics[width=0.6\linewidth]{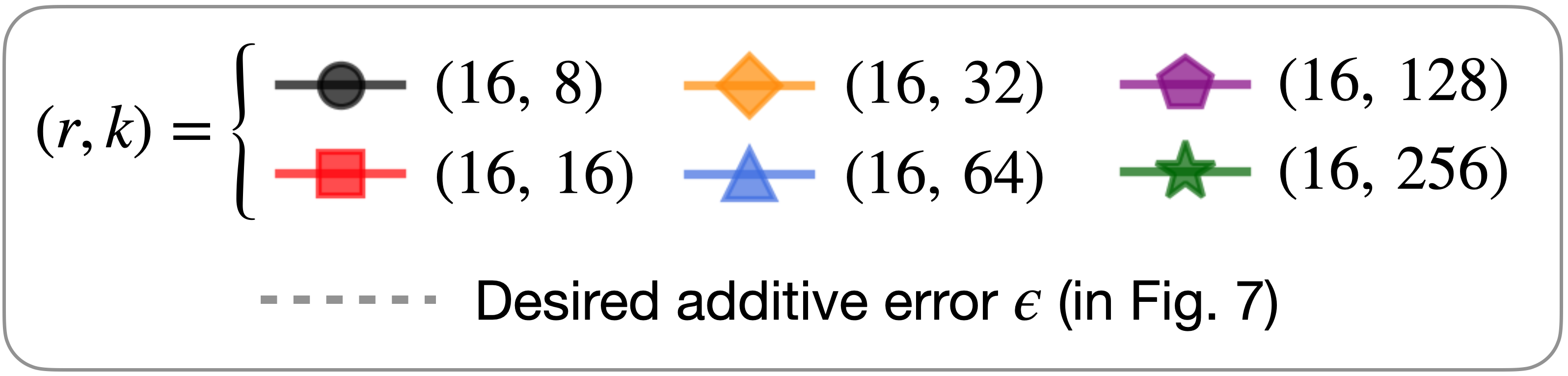}
  \caption{{\textbf{Legends used in the graph.} There are six based on the $(r, k)$ combinations, and in~\cref{fig:dominant-simul}, gray dashed lines are used to further represent the guarantee of estimation within additive error.}}
  \label{fig:legend}
\end{figure}
\begin{itemize}
    \item {Types of eigenvalue distributions:
        \begin{enumerate}[label=(\arabic*)]
            \item \textbf{Geometrically decaying eigenvalues}, $ p_{\max} / p_{\min} = 2^{15}$.
            \item \textbf{Arithmetically decaying eigenvalues}, $p_{\max} - p_{\min} = 0.124$.
            \item \textbf{One dominant eigenvalue} $ p_{\max} \approx 1 $, while the remaining eigenvalues are randomly chosen small values.
            \item \textbf{Identical eigenvalues}, $p_i = {1}/{r}$ for all $i$.
        \end{enumerate}
    \item Rank of the quantum state: $ r = 16 $.
    \item Target power $ k $ for estimating $\text{Tr}(\rho^k)$: $ k \in \{8, 16, 32, 64, 128, 256\} $.
    \item Additive error bound $ \epsilon $ for estimation:  $ \epsilon \in \{ 10^{-1}, 10^{-2}, \dots, 10^{-7} \} $.}
\end{itemize}

In the case of the eigenvalue distribution, the settings for geometrically decaying and arithmetically decaying distributions are mathematically inspired problem setups. For the case of one dominant eigenvalue, the model was formulated under the assumption of an experimental situation where, due to hardware noise or other factors, it is impossible to create a perfect pure state. This situation can be generalized as a scenario where $\widetilde{r} \ll r$ during the operation of our algorithm.

The simulation will be conducted for two different scenarios.  
\begin{enumerate}[label=(\arabic*)]
    \item \textbf{Scenario 1 (simulation of {\textbf{[Algorithm 1]}}):} We evaluate the actual additive error that arises when following the procedure outlined in {\textbf{[Algorithm 1]}} under a given $(r, k, \epsilon)$ setting, using $ t = \widetilde{r} $ for different eigenvalue distributions. (The values of $ \widetilde{r} $ for different $ (k, \epsilon) $ are listed in~\cref{tab:sim-1-1}.) Although Step 1 of {\textbf{[Algorithm 1]}} originally requires a quantum circuit simulation, in our case, we do not employ quantum circuits. Instead, the true value corresponding to $\{\T(\rho^i)\}_{i=1}^t$ is numerically computed, and a simulation is performed using a sampling-based approximation. Specifically, sampling is conducted from a binomial distribution  
    \begin{equation}  
        B\left(n = \left\lceil \left(\frac{k^2}{\epsilon^2}\right) \right\rceil, ~p = \T(\rho^i)\right).  
    \end{equation}  
    To approximate the true value, $n$ independent random variables are drawn from this binomial distribution, and their empirical mean is used as the estimate. Since $n$ is chosen to satisfy~\cref{cor:eff}, the estimation error can be maintained below $\epsilon$.
    \item \textbf{Scenario 2 (simulation of~\cref{lem:eff}):} We investigate how the error evolves as the value of $ t $ is varied. In particular, we examine the error trend when $ t < r $ or even when $ t < \widetilde{r} $. The objective is to determine the minimum value of $ t $ required to ensure that the estimation remains within a sufficiently small additive error across various distributions. In this scenario, $k$ is fixed at 32.
\end{enumerate}

\begin{table}[]
\centering
\resizebox{0.6\columnwidth}{!}{%
\begin{tabular}{c|ccccccc}
$(k,\epsilon)$      & $10^{-1}$ & $10^{-2}$ & $10^{-3}$ & $10^{-4}$ & $10^{-5}$ & $10^{-6}$ & $10^{-7}$ \\ \hline
$8$   & 6         & 8         & 10        & 12        & 15        & 16        & 16        \\
$16$  & 6         & 9         & 11        & 13        & 15        & 16        & 16        \\
$32$  & 6         & 9         & 11        & 13        & 15        & 16        & 16        \\
$64$  & 8         & 10        & 12        & 15        & 16        & 16        & 16        \\
$128$ & 8         & 11        & 13        & 15        & 16        & 16        & 16        \\
$256$ & 9         & 11        & 14        & 16        & 16        & 16        & 16       
\end{tabular}%
}
\caption{{\textbf{The value of $\widetilde{r}$ as a function of $(k, \epsilon)$.} The value of $ t $ used in Scenario 1 is $ \widetilde{r} = \min\left\{r, \left\lceil\ln\left({2k}/{\epsilon}\right)\right\rceil\right\} $. Note that $ r = 16 $.}}
\label{tab:sim-1-1}
\end{table}

\subsection{Simulation result}\label{chap:simul-result}
The simulation results for geometrically decaying, arithmetically decaying, one dominant, and identical eigenvalues are shown in~\cref{fig:geo-simul},~\cref{fig:arith-simul},~\cref{fig:dominant-simul}, and~\cref{fig:id-simul}, respectively. Each figure consists of three subfigures: (a) the distribution of the eigenvalues, (b) Scenario 1—simulation of {\textbf{[Algorithm 1]}}, and (c) Scenario 2—simulation of~\cref{lem:eff}.

For every eigenvalue distribution, the experimental error in Scenario 1 is smaller than the target additive error $\epsilon$, which strengthens the credibility of {\textbf{[Algorithm 1]}}. In the cases of geometrically decaying, arithmetically decaying, and identical eigenvalues, the discrepancy between the target error and the experimental error is quite large. The case of one dominant eigenvalue gives the tightest result.

For every eigenvalue distribution, the experimental error in Scenario 2 is also smaller than the target additive error $\epsilon$, further enhancing the credibility of~\cref{lem:eff}. Only $\{\T(\rho^i)\}_{i=1}^t$ is obtained from quantum resources, while $\T(\rho^{t+1})$ to $\T(\rho^k)$ are computed using the recurrence relation described in the algorithm. As mentioned earlier, our simulation uses a sampling-based approximation instead of quantum resources. The graph presents both
\begin{equation}
    \max_{j \in \{t+1, \dots, k\}} \abs{P_j - \widetilde{P}_j}
\end{equation}
and the theoretical bound we derived, ${t}/{k!}$. In the cases of geometrically decaying, arithmetically decaying eigenvalues, and one dominant eigenvalue, the discrepancy between the theoretical bound ${k}/{t!}$ and the experimental error is quite large. The case of identical eigenvalues gives the tightest result. For every distribution we simulated, $t=8$ is sufficient to keep the experimental error below a low threshold (e.g., always smaller than $10^{-6}$, which is sufficiently small).

Additionally, as the power $k$ increases, both the scale of $\mathrm{Tr}(\rho^k)$ and the scale of the absolute error become very small, sometimes even dropping below the machine epsilon, which represents the smallest numerical difference a computer can accurately represent in floating-point arithmetic. To eliminate errors caused by floating-point precision limitations, we implemented our algorithm using integer fractions instead of floating-point types for iterative estimations. Specifically, we used Python’s built-in \texttt{fractions.Fraction} class, which represents rational numbers exactly as ratios of two integers. This allowed us to perform arithmetic operations with full precision, avoiding the accumulation of rounding errors that typically arise in floating-point computations. Thanks to this exact representation, our simulation was able to detect discrepancies as small as on the order of $10^{-200}$. 

We note that the use of exact rational arithmetic in our work is primarily for methodological purposes. In realistic near-term quantum experiments, errors from finite measurement statistics, decoherence, and other hardware imperfections are expected to far exceed floating-point precision limits. Nevertheless, we performed sampling-based simulations so that, even though such physical noise sources were not modeled, the results remain statistically meaningful. This setup aligns with the aim of our study, which is to evaluate and validate the intrinsic performance of the algorithm under idealized, noise-free conditions. Such extremely low error levels should not be expected in practice on current quantum devices.

Here, we uncover a new insight: in~\cref{proof:approx}, the theoretical bound is derived using the scaling difference between factorial and exponential functions, such as $t! \geq 2^t$. However, this approach may not provide a sufficiently tight bound. Obtaining a closed-form lower bound for $t$ analytically is extremely challenging, but considering Stirling’s approximation,  
\begin{equation}
    n! \sim \sqrt{2\pi n} \left(\frac{n}{e}\right)^n \left(1 + \frac{1}{12n} + \frac{1}{288n^2} + \cdots \right),    
\end{equation}
we observe that the lower bound for $t$ could be as low as  
\begin{equation}
    \mathcal{O}\left(\frac{\ln\left({k}/{\epsilon}\right)}{\ln \ln \left({k}/{\epsilon}\right)}\right),    
\end{equation}
suggesting a potentially looser bound than initially expected. And the simulation results based on this bound are included in~\cref{chap:add-sim}.

\clearpage  
\vspace*{0.15\textheight}

\noindent
\begin{minipage}{0.45\textwidth}
\begin{figure}[H]
    \centering
    \begin{subfigure}{\linewidth}
        \centering
        \caption{Distribution: Geometrically decaying}
        \includegraphics[width=\linewidth]{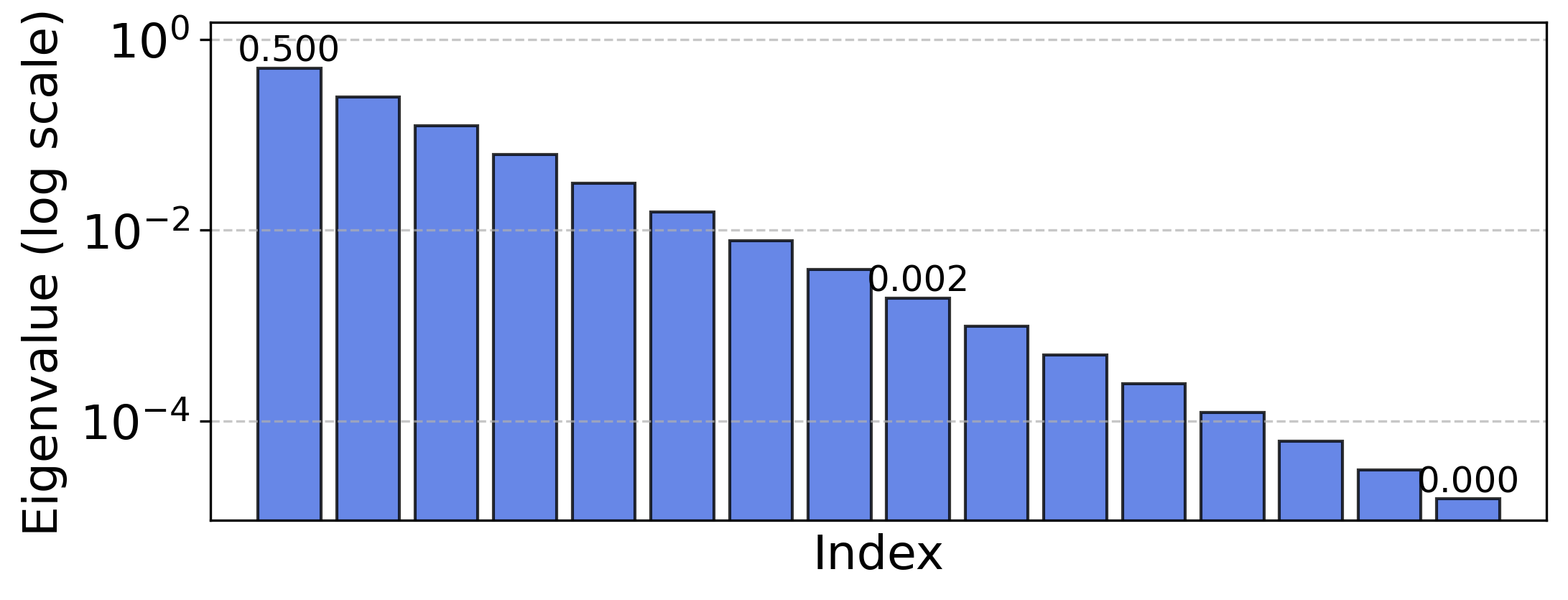}
        \label{fig:geo-dist}
    \end{subfigure}
    \vfill
    \begin{subfigure}{\linewidth}
        \centering
        \caption{Scenario 1: Geometrically decaying}
        \includegraphics[width=\linewidth]{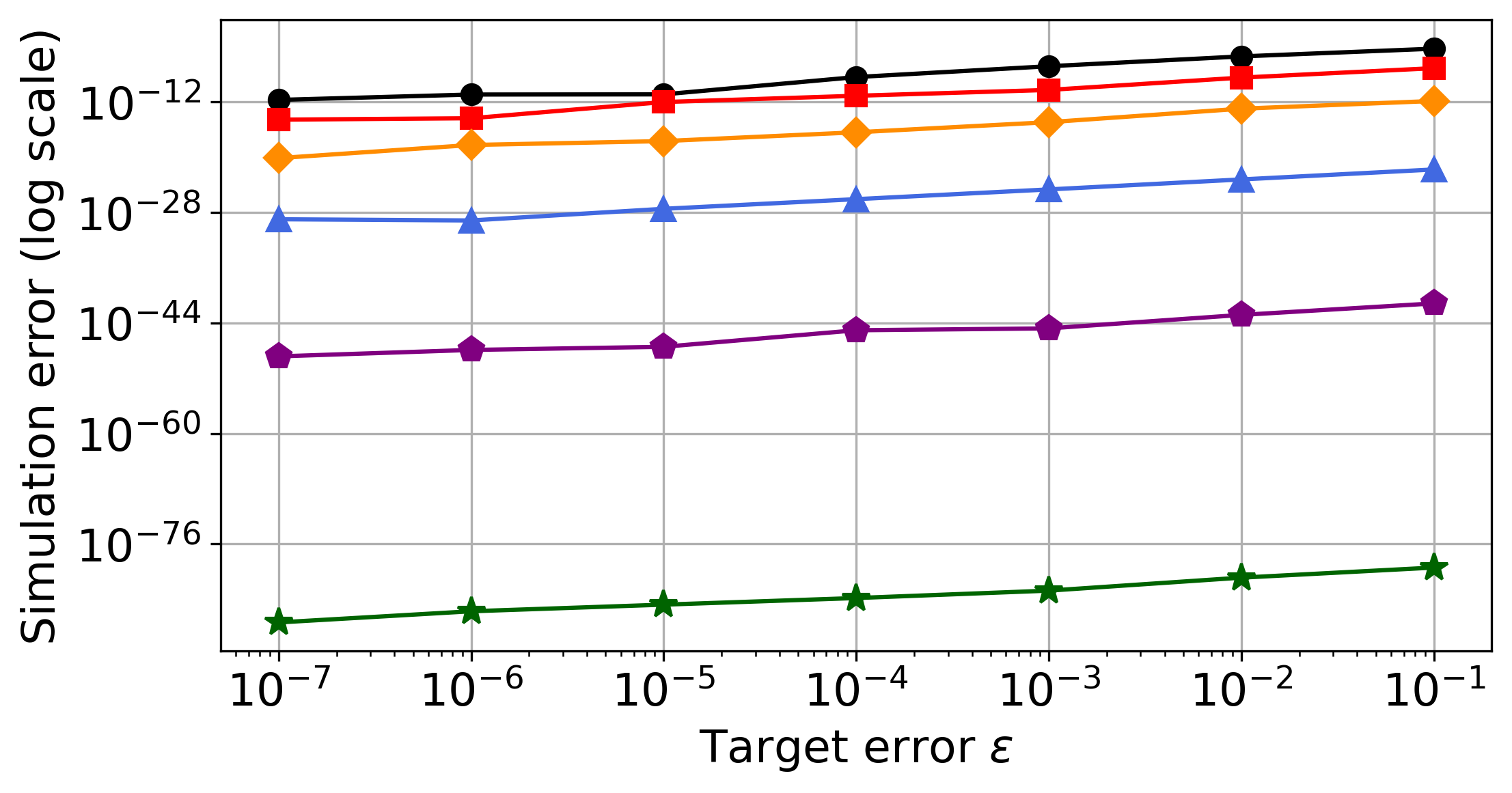}
        \label{fig:geo-1}
    \end{subfigure}
    \vfill
    \begin{subfigure}{\linewidth}
        \centering
        \caption{Scenario 2: Geometrically decaying}
        \includegraphics[width=\linewidth]{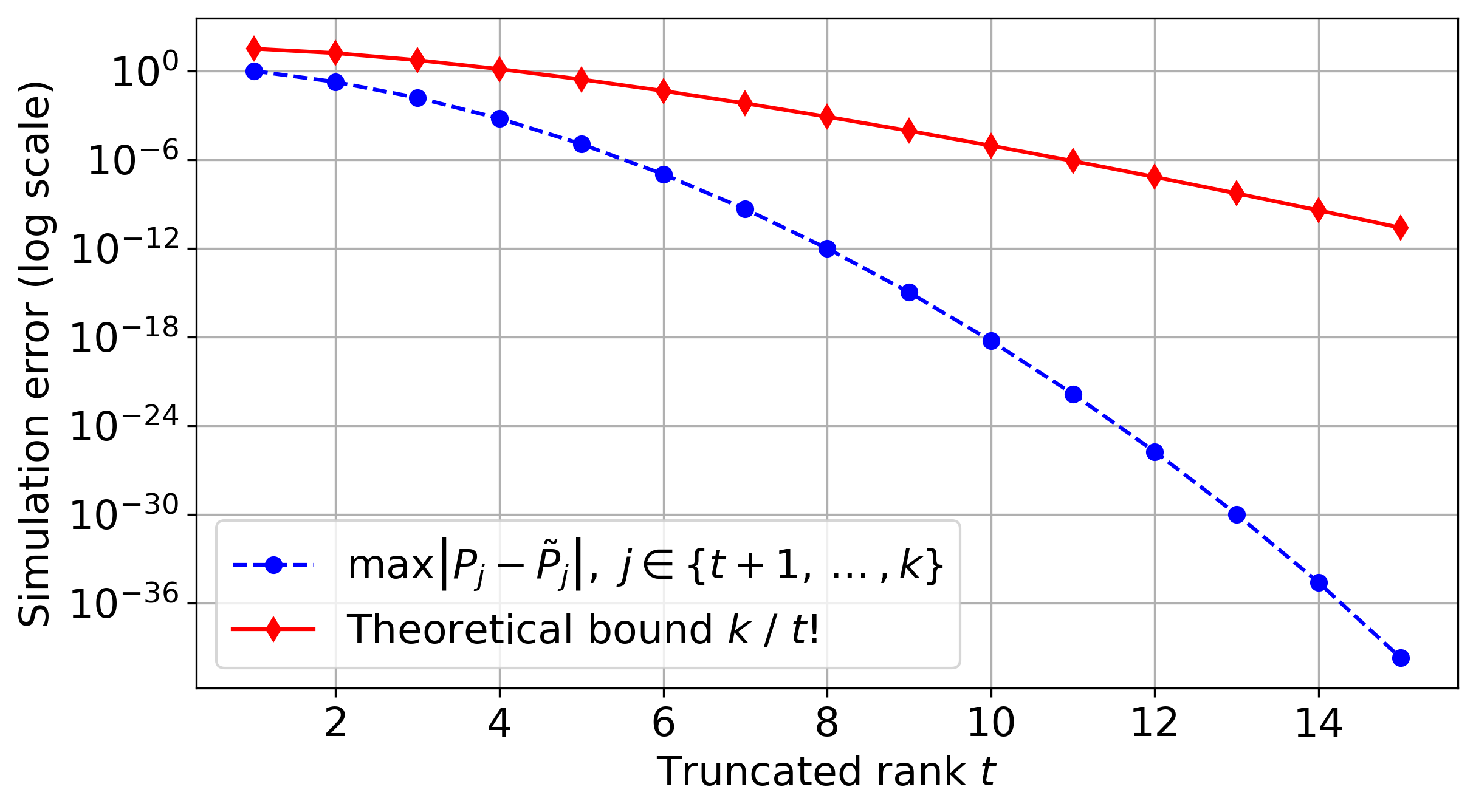}
        \label{fig:geo-2}
    \end{subfigure}
    \caption{Simulation results for geometrically decaying eigenvalues.}
    \label{fig:geo-simul}
\end{figure}
\end{minipage}
\hfill
\begin{minipage}{0.45\textwidth}
\begin{figure}[H]
    \centering
    \begin{subfigure}{\linewidth}
        \centering
        \caption{Distribution: Arithmetically decaying}
        \includegraphics[width=\linewidth]{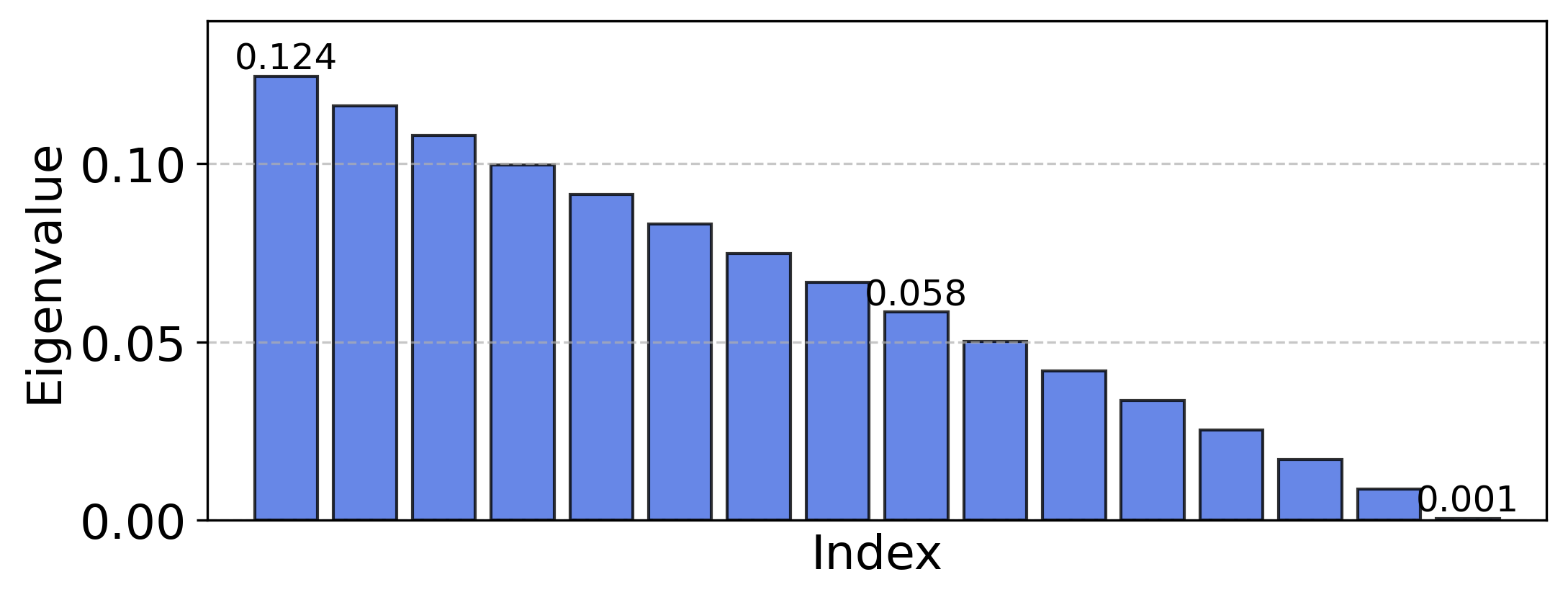}
        \label{fig:arith-dist}
    \end{subfigure}
    \vfill
    \begin{subfigure}{\linewidth}
        \centering
        \caption{Scenario 1: Arithmetically decaying}
        \includegraphics[width=\linewidth]{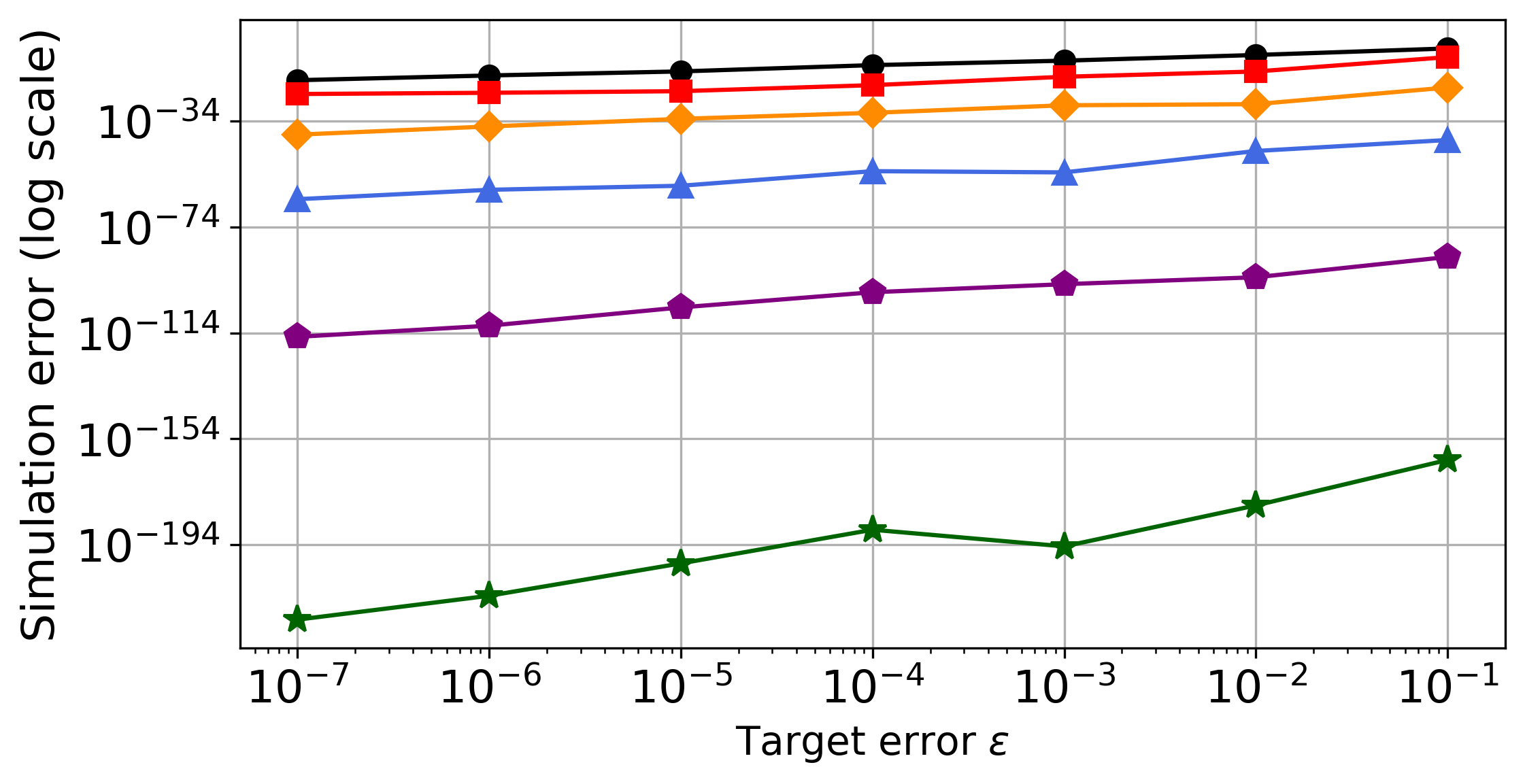}
        \label{fig:arith-1}
    \end{subfigure}
    \vfill
    \begin{subfigure}{\linewidth}
        \centering
        \caption{Scenario 2: Arithmetically decaying}
        \includegraphics[width=\linewidth]{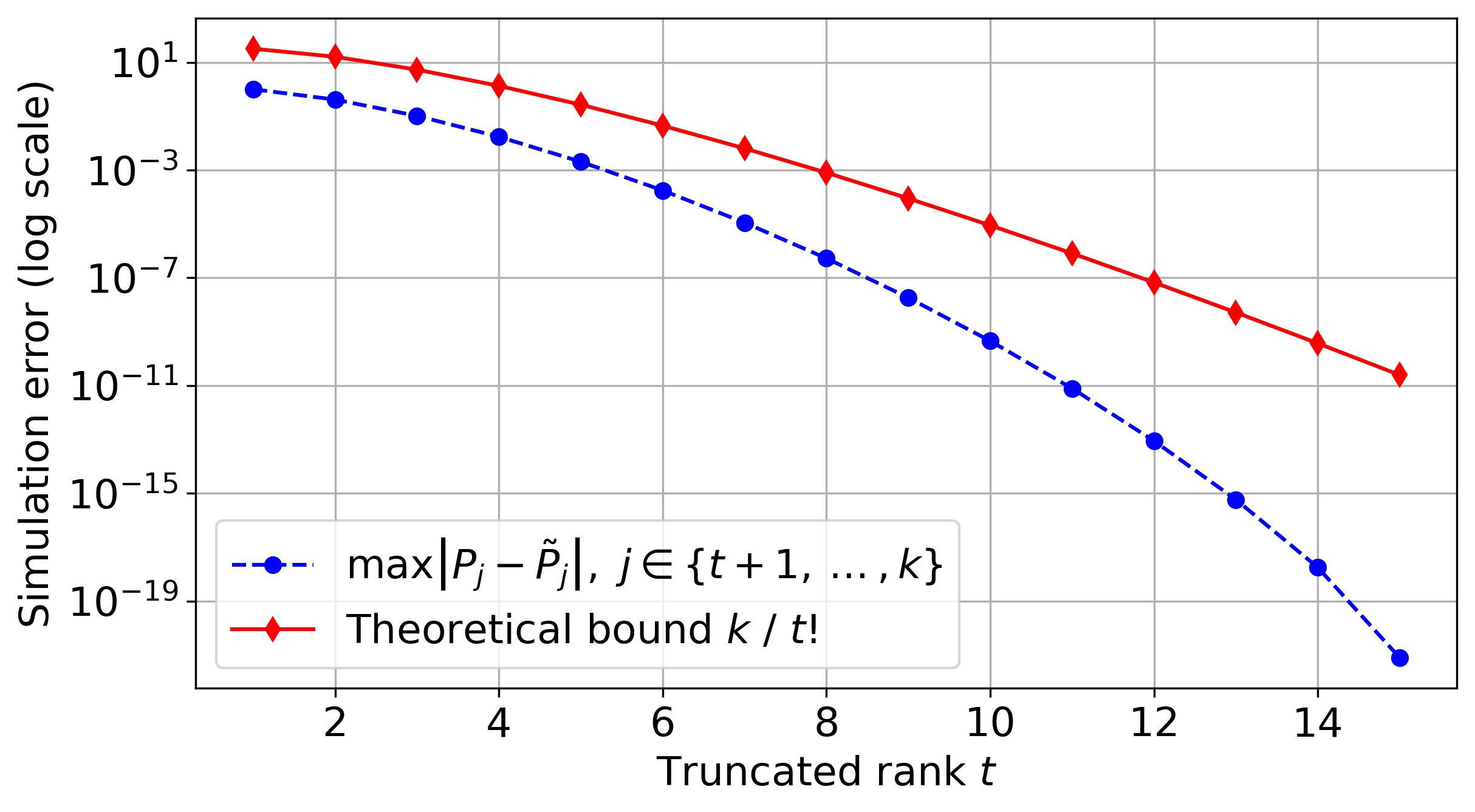}
        \label{fig:arith-2}
    \end{subfigure}
    \caption{Simulation results for arithmetically decaying eigenvalues.}
    \label{fig:arith-simul}
\end{figure}
\end{minipage}

\vspace*{0.15\textheight}

\clearpage  
\vspace*{0.15\textheight}
\noindent

\begin{minipage}{0.45\textwidth}
\begin{figure}[H]
    \centering
    \begin{subfigure}{\linewidth}
        \centering
        \caption{Distribution: One dominant}
        \includegraphics[width=\linewidth]{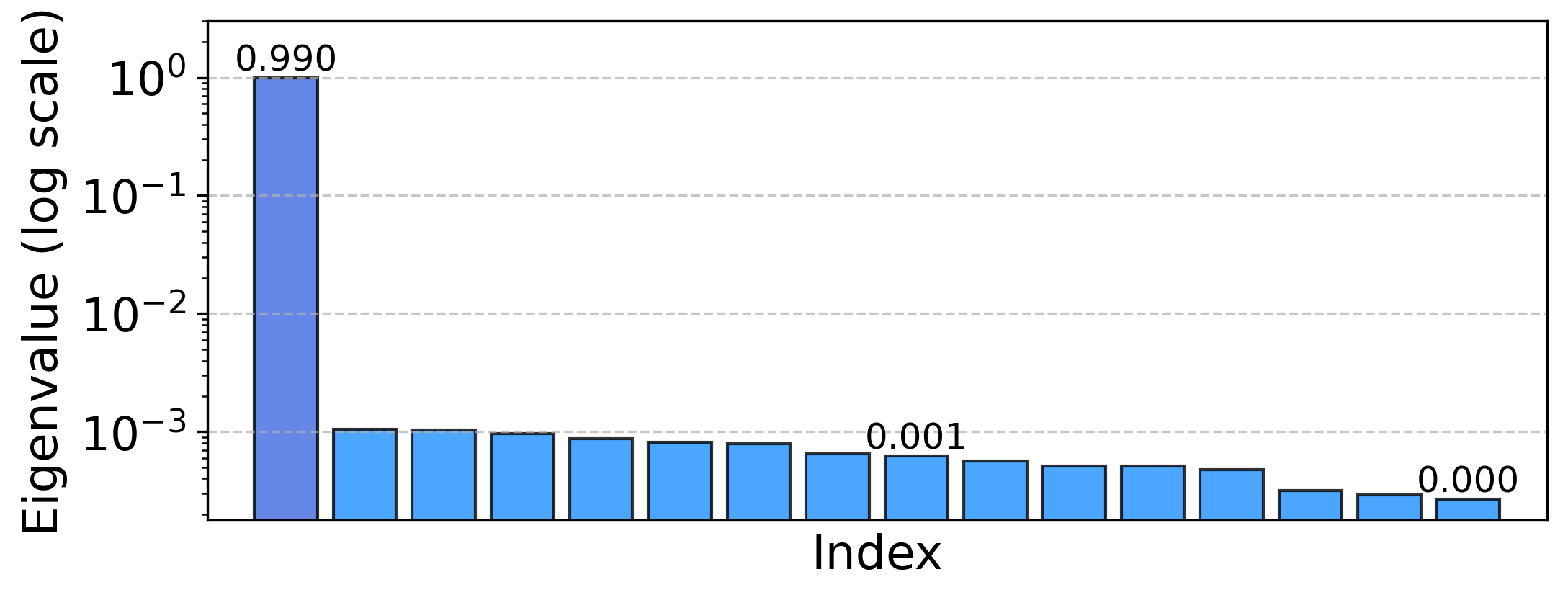}
        \label{fig:dominant-dist}
    \end{subfigure}
    \vfill
    \begin{subfigure}{\linewidth}
        \centering
        \caption{Scenario 1: One dominant}
        \includegraphics[width=\linewidth]{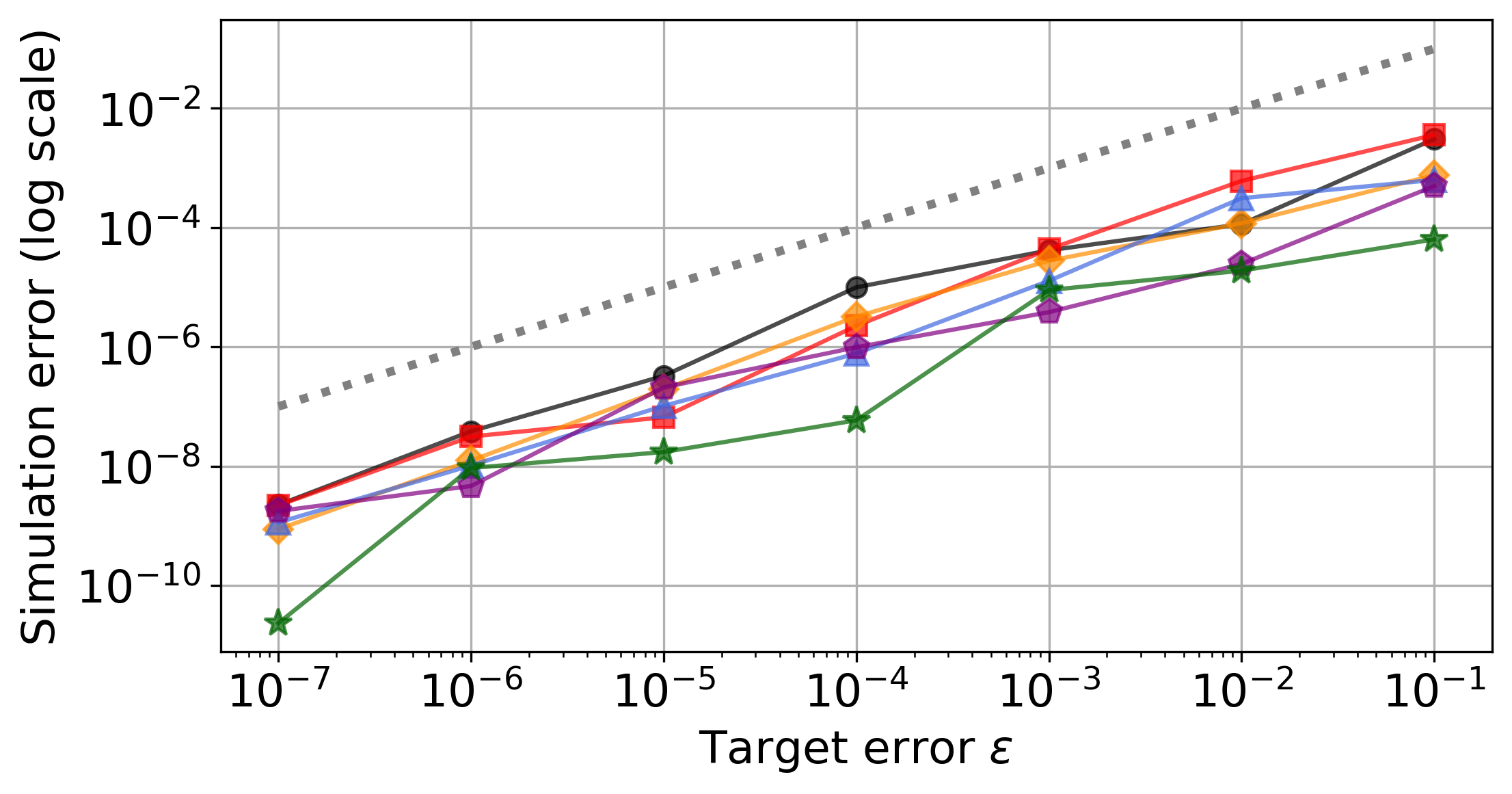}
        \label{fig:dominant-1}
    \end{subfigure}
    \vfill
    \begin{subfigure}{\linewidth}
        \centering
        \caption{Scenario 2: One dominant}
        \includegraphics[width=\linewidth]{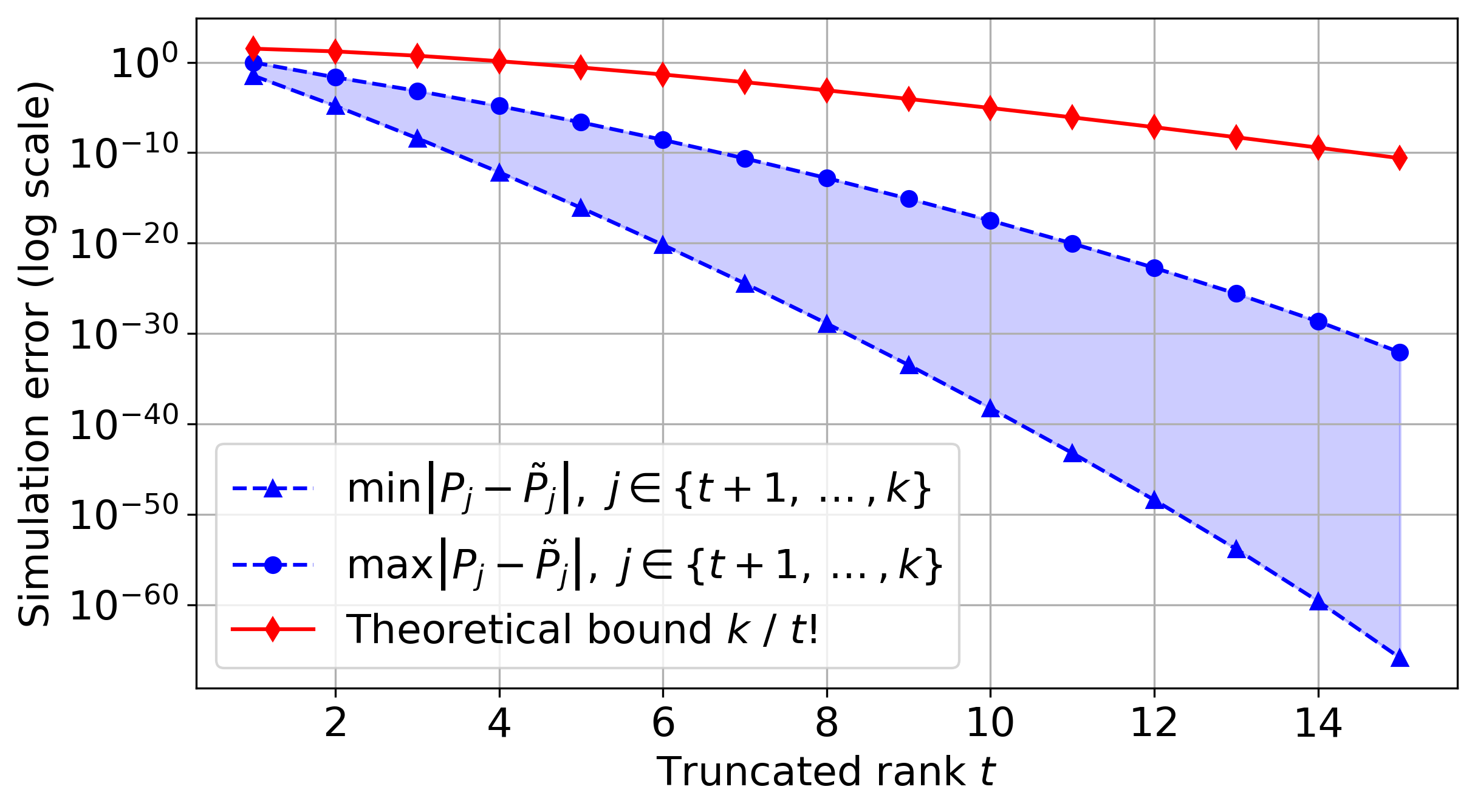}
        \label{fig:dominant-2}
    \end{subfigure}
    \caption{Simulation results for one dominant eigenvalue.}
    \label{fig:dominant-simul}
\end{figure}
\end{minipage}
\hfill
\begin{minipage}{0.45\textwidth}
\begin{figure}[H]
    \centering
    \begin{subfigure}{\linewidth}
        \centering
        \caption{Distribution: Identical}
        \includegraphics[width=\linewidth]{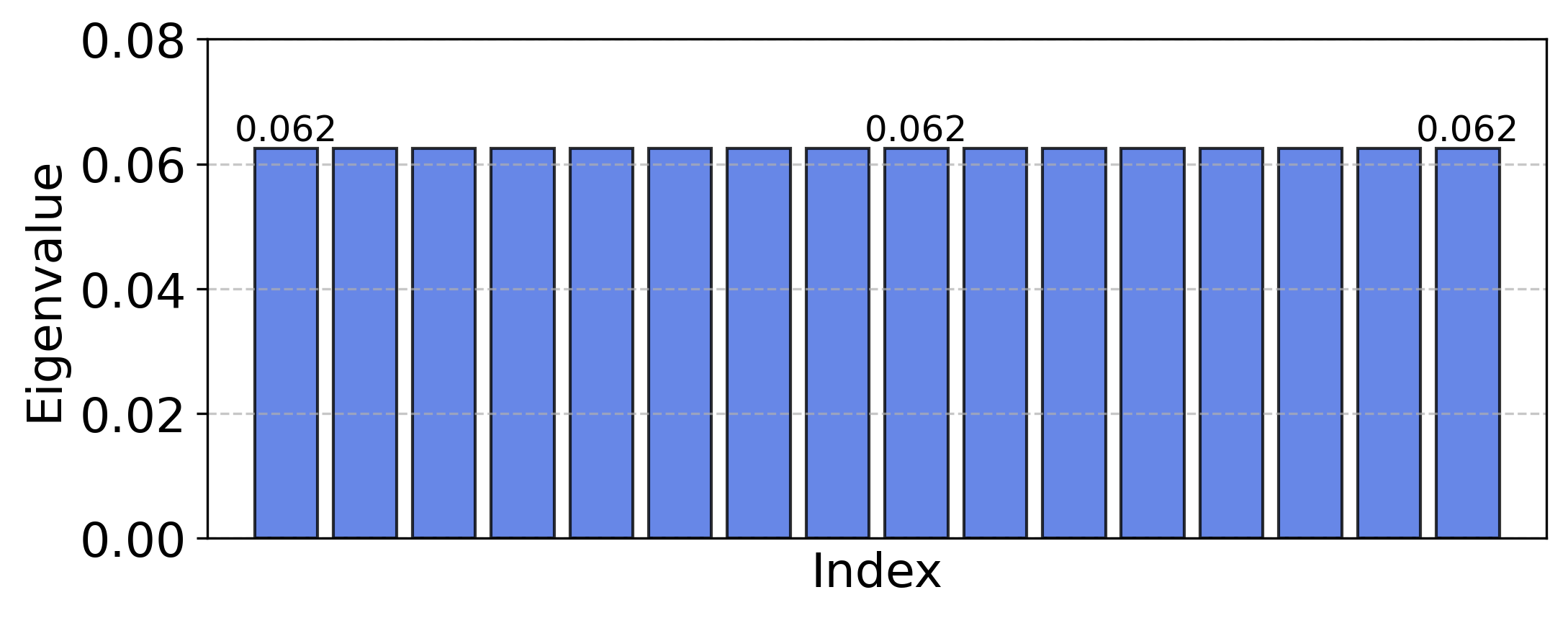}
        \label{fig:id-dist}
    \end{subfigure}
    \vfill
    \begin{subfigure}{\linewidth}
        \centering
        \caption{Scenario 1: Identical}
        \includegraphics[width=\linewidth]{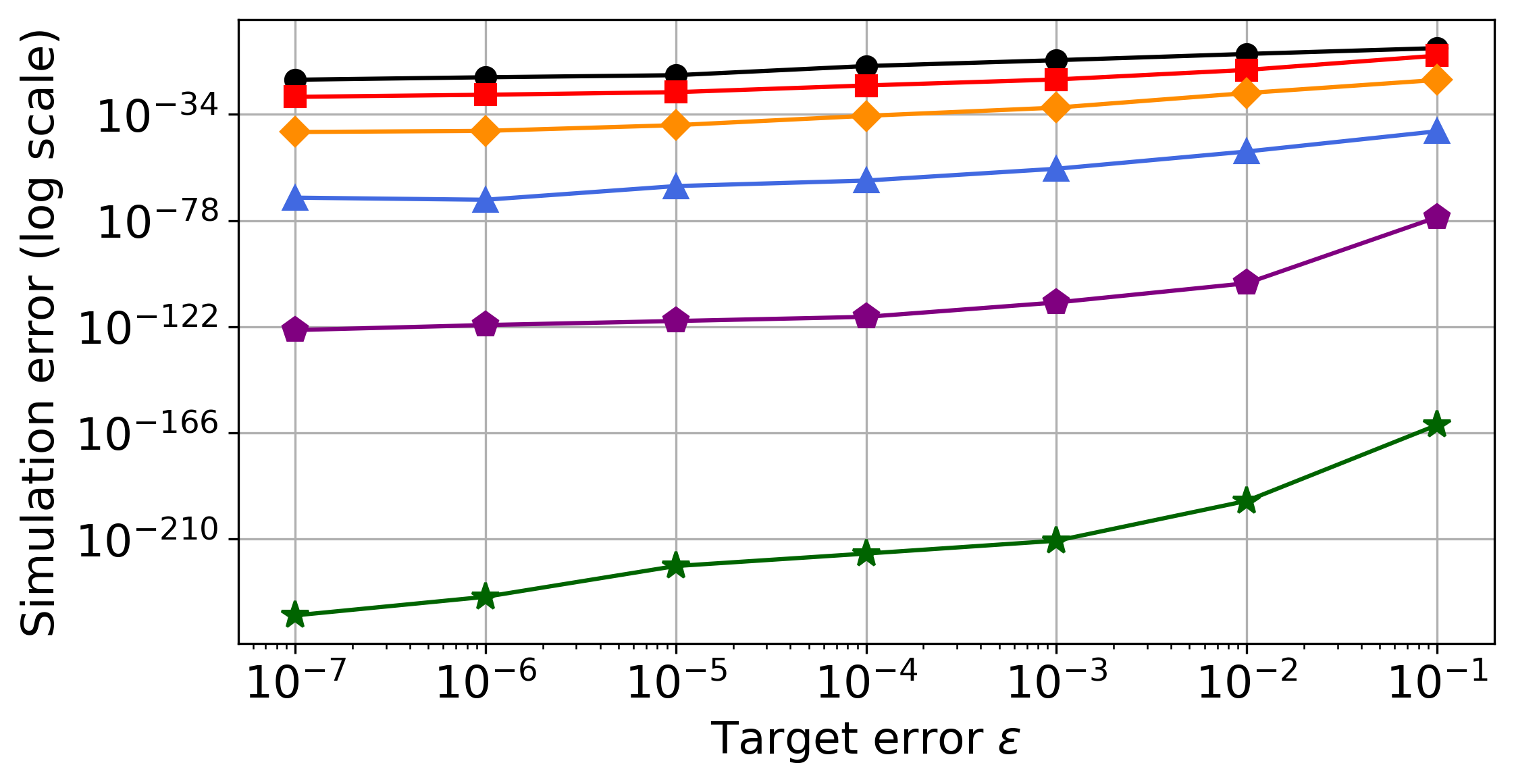}
        \label{fig:id-1}
    \end{subfigure}
    \vfill
    \begin{subfigure}{\linewidth}
        \centering
        \caption{Scenario 2: Identical}
        \includegraphics[width=\linewidth]{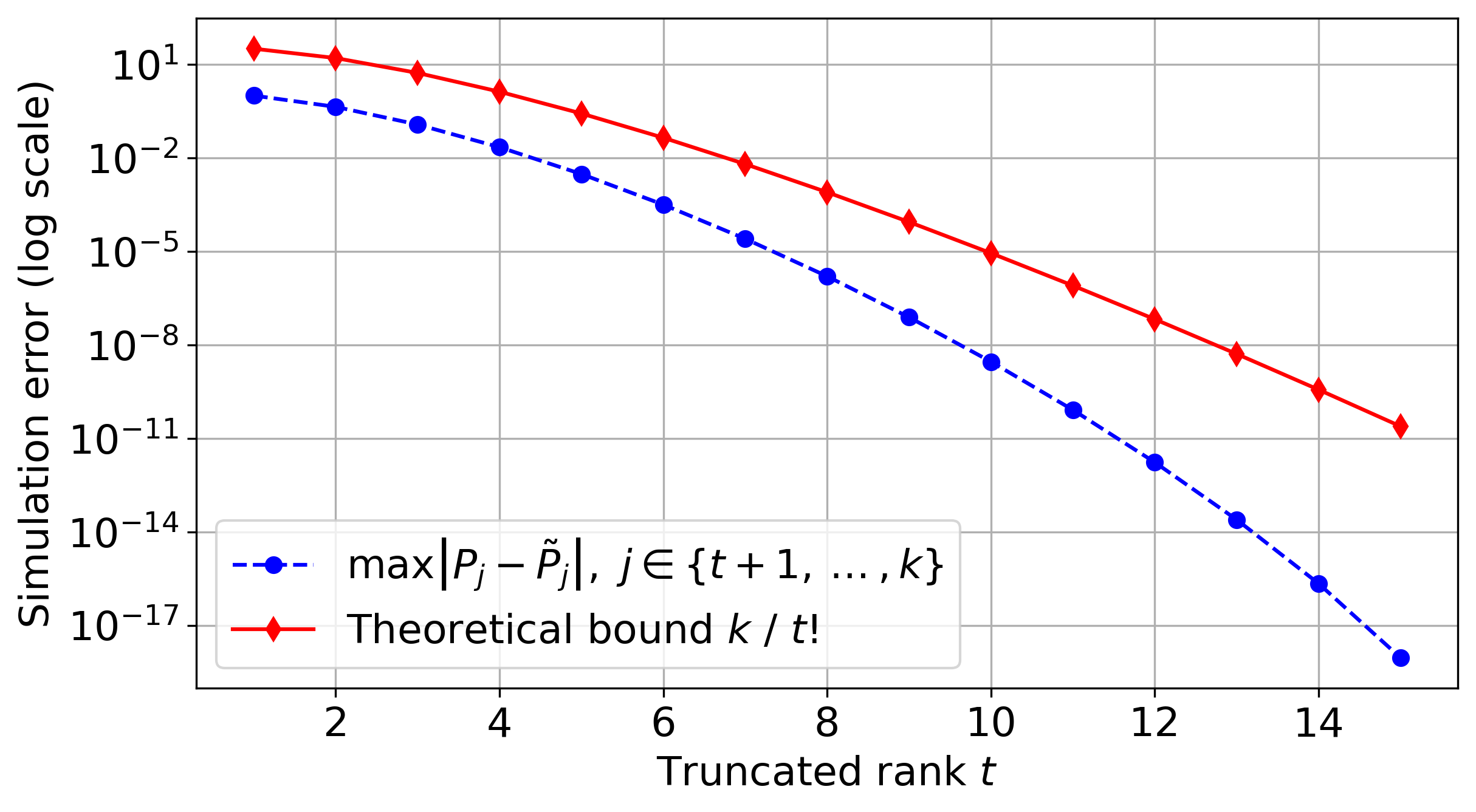}
        \label{fig:id-2}
    \end{subfigure}
    \caption{Simulation results for identical eigenvalues.}
    \label{fig:id-simul}
\end{figure}
\end{minipage}
\vspace*{0.15\textheight}

\newpage
\section{Applications in quantum information}\label{chap:application}
Now, let's explore several use cases of how our \textit{rank is all you need} and \textit{effective rank is all you need} ideas can be efficiently applied to quantum information processing tasks.

\subsection{Nonlinear function calculations for quantum states}
Applying~\cref{cor:rank} to~{\cite[Theorem 5]{quek2024multivariate}}, we can enhance the theorem.

\begin{theorem}\label{thm:nonlinear}
Let $\rho$ be a quantum state with rank $r$. Suppose there exist $\epsilon > 0$ and a slowly-growing function $C$ (as a function of $k$) such that $g: \mathbb{R} \rightarrow \mathbb{R}$ is approximated by a degree $k$ polynomial
\begin{equation}
    f(x) = \sum_{i=0}^k c_i x^i
\end{equation}
on the interval $[0,1]$, in the sense that
\begin{equation}
    \sup_{x\in [0,1]} \abs{g(x)-f(x)} < \frac{\epsilon}{2r},
\end{equation}
and
\begin{equation}
    \sum_{i=0}^k \abs{c_i} < C.
\end{equation}
Then estimating $\T(g(\rho))$ within an $\epsilon$ additive error and with a success probability of at least $1 - \delta~$, where $\delta\in(0, 1)$ requires
\begin{equation}
    \mathcal{O}\left(\frac{C^2k^2 \widetilde{r}^4\ln^2 \widetilde{r}\ln(1/\delta)}{\epsilon^2} \right) = \widetilde{\mathcal{O}}\left(\frac{C^2k^2\ln(1/\delta)}{\epsilon^2}\right)
\end{equation}
copies of $\rho$ and
\begin{equation}
    \mathcal{O}\left(\frac{C^2k^2 \widetilde{r}^3\ln^2 \widetilde{r}\ln(1/\delta)}{\epsilon^2} \right) = \widetilde{\mathcal{O}}\left(\frac{C^2k^2\ln(1/\delta)}{\epsilon^2}\right)
\end{equation}
runs on a constant-depth quantum circuit consisting of {$\mathcal{O}(\widetilde{r})$} qubits and {$\mathcal{O}(\widetilde{r})$} \textsf{CSWAP} operations. Here, the notation $\widetilde{\mathcal{O}}(\cdot)$ hides polylogarithmic factors in $k$ and $1/\epsilon$.
\end{theorem}
In the original theorem mentioned in \cite{quek2024multivariate}, 
\begin{equation}
    \mathcal{O}\left(\frac{C^2k^2\ln(1/\delta)}{\epsilon^2}\right)    
\end{equation}
copies of $\rho$ were required, and the circuit consisted of $\mathcal{O}(k)$ qubits and $\mathcal{O}(k)$ \textsf{CSWAP} operations. 

According to~\cref{thm:rank} and~\ref{thm:eff}, it suffices to estimate $\{\T(\rho^i)\}_{i=1}^{\tilde{r}}$ with additive error at most $\epsilon / (2Ck\widetilde{r} \ln \widetilde{r})$ in order to approximate $\{\T(\rho^i)\}_{i=1}^k$ up to an additive error of $\epsilon / C$. Following the same reasoning as in~\cref{eq:runs} and~\cref{eq:runs-cal}, we can determine both the circuit repetition count and the required number of copies of $\rho$ for this estimation.

By applying the result of~\cref{cor:rank}, the total number of copies of $\rho$ required to achieve an $\epsilon$-additive estimation with failure probability at most $\delta$ is given by
\begin{equation} 
    \mathcal{O}\left(\frac{C^2 k^2 \widetilde{r}^4 \ln^2 \widetilde{r}\ln(1/\delta)}{\epsilon^2} \right).
\end{equation}
The corresponding algorithm can be implemented on a constant-depth quantum circuit that acts on $\mathcal{O}(\widetilde{r})$ qubits and uses $\mathcal{O}(\widetilde{r})$ \textsf{CSWAP} gates.

Typically, $k$ is much larger than $\widetilde{r}$, so our enhanced theorem offers advantages for estimating $g(\rho)$. When $g(x) = e^{\beta x}$, $C$ becomes $e^{\abs{\beta}}$. We can efficiently estimate $\T(e^{\beta \rho})$ using~\cref{thm:nonlinear}, which has applications in thermodynamics and the density exponentiation algorithm~\cite{lloyd2014quantum, kimmel2017hamiltonian, quek2024multivariate}.

\subsection{Quantum Gibbs state preparation}
We highlight that our method improves the efficiency of preparing the quantum Gibbs state. The truncated Taylor series:
\begin{equation}
    S_q(\rho) = \sum_{i=1}^q \T\left(\left(\rho-I\right)^q\rho\right)
\end{equation}
is used as the cost function for variational quantum Gibbs state preparation~\cite{wang2021variational}. It is shown that the fidelity $F(\rho(\theta_0), \rho_G)$ between the optimized state $\rho(\theta_0)$ and the Gibbs state $\rho_G$ is bounded by
\begin{equation}
    F(\rho(\theta_0), \rho_G) \geq 1-\sqrt{2\left(\beta\epsilon + \frac{2r}{q+1}(1-\Delta)^{q+1}\right)},
\end{equation}
where $\beta$ is the inverse temperature of the system, and $\Delta$ is a constant that satisfies
\begin{equation}
    -\Delta \ln(\Delta) < \frac{1}{q+1}(1-\Delta)^{q+1}.
\end{equation}
By using the inequality
\begin{equation}
    D(\rho(\theta_0), \rho_G) < \sqrt{1-F\left(\rho\left(\theta_0\right), \rho_G\right)},
\end{equation}
to achieve $D\left(\rho(\theta_0), \rho_G\right) < \epsilon$, we need to set $q = \mathcal{O}\left({r}/{\epsilon^{4}}\right)$, where $D$ is the trace distance. Using previous methods, $q = \mathcal{O}\left({r}/{\epsilon^{4}}\right)$ qubits and \textsf{CSWAP} operations are required, which are impractical for near-term quantum devices. Our work significantly reduces the number of qubits and \textsf{CSWAP} operations to {$\mathcal{O}(\widetilde{r})$, exponentially reducing the quantum resources.} This demonstrates that our method makes the preparation of the quantum Gibbs state using the truncated Taylor series much more feasible.

\subsection{Efficient estimation of the trace of products of quantum state powers}\label{chap:multipower}
In this section, we discuss the efficient estimation of the set
\begin{equation}
    \{\T(\rho^i\sigma^j):(i,j)\in[k]\times[l]\}.    
\end{equation}
The core routine relies on {\textbf{[Algorithm 2]}} from~\cref{chap:obs}. Extending the problem from the general estimation of the trace of quantum state powers to this broader setting is crucial, as it enables the estimation of various distance measures, making this generalization highly significant.  
\begin{theorem}\label{thm:multi}  
For the problem of estimating the trace of products of quantum state powers, given large integers $k, l$, it is possible to approximate $\{\T(\rho^i\sigma^j)\}_{(i,j)=(1,1)}^{(k,l)}$ within an additive error of $\epsilon$ using quantum resources only up to $ \{\T(\rho^i)\}_{i=1}^t $, $ \{\T(\sigma^i)\}_{i=1}^t $, and $\{\T(\rho^i\sigma^j)\}_{(i,j)=(1,1)}^{(t,t)}$, where $ t $ satisfies  
\begin{equation}\label{eq:eff-rank-multi}  
    t \geq \widetilde{R} = \min\left\{r, \left\lceil\ln\left(\frac{4k + 4l}{\epsilon}\right)\right\rceil\right\}.  
\end{equation}  
We define $ \widetilde{R} $ as the effective rank of the quantum states $ \rho $ and $ \sigma$, where
\begin{equation}
    r = \max\{\textnormal{rank}(\rho),~\textnormal{rank}(\sigma)\}.
\end{equation}
To clarify the notation, $ \{\T(\rho^i\sigma^j)\}_{(i,j)=(1,1)}^{(a,b)} $ refers to the set of values obtained by calculating $ \T(\rho^i \sigma^j) $ for all $ (i,j) \in [a] \times [b] $ where $[n]$ denotes the set of natural numbers from 1 to $n$.
\end{theorem}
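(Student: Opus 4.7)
The plan is to reduce Theorem~\ref{thm:multi} to two successive invocations of Theorem~\ref{thm:obs}, exploiting the symmetry of the problem in the two indices. The key observation is that for any density matrix $\tau$ and any integer $m \geq 1$, the power $\tau^m$ satisfies $\|\tau^m\|_\infty \leq 1$, so powers of $\rho$ and $\sigma$ are valid ``arbitrary observables'' in the sense of Theorem~\ref{thm:obs}. In particular, for fixed $j$, we may regard $\T(\rho^i \sigma^j) = \T(M \rho^i)$ with $M = \sigma^j$, which allows us to extend the $i$-index from $[t]$ to $[k]$ by the recurrence~\eqref{DefQM}; and symmetrically, for fixed $i$, we may regard $\T(\rho^i \sigma^j) = \T(M' \sigma^j)$ with $M' = \rho^i$ to extend the $j$-index.

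The explicit procedure I would carry out is as follows. First, the quantum device produces estimates of the base blocks $\{\T(\rho^i)\}_{i=1}^t$, $\{\T(\sigma^j)\}_{j=1}^t$, and $\{\T(\rho^i \sigma^j)\}_{(i,j)=(1,1)}^{(t,t)}$ to sufficient precision (each $(i,j)$-block can itself be produced by the Step~2 subroutine of \textbf{[Algorithm~2]} with $M = \sigma^j$ or $M = \rho^i$). In Step~A, for each $j \in \{1,\dots,t\}$, I would apply Theorem~\ref{thm:obs} with observable $M = \sigma^j$ and input data $\{\T(\rho^i)\}_{i=1}^t \cup \{\T(\rho^i \sigma^j)\}_{i=1}^t$, producing estimates of $\{\T(\rho^i \sigma^j)\}_{i=1}^k$ for every $j \leq t$. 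In Step~B, for each $i \in \{1,\dots,k\}$, I would apply Theorem~\ref{thm:obs} again with observable $M' = \rho^i$ and input data $\{\T(\sigma^j)\}_{j=1}^t \cup \{\T(\rho^i \sigma^j)\}_{j=1}^t$, where for $i > t$ the latter set is taken from Step~A, producing estimates of $\{\T(\rho^i \sigma^j)\}_{j=1}^l$.

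To obtain the claimed effective-rank bound, I would split the total additive error budget $\epsilon$ between the two stages, tightening Step~A so that its output meets the input-tolerance hypotheses of Step~B (a roughly $\epsilon/4$--$\epsilon/2$ split suffices). Since $\|\sigma^j\|_\infty \leq 1$ and $\|\rho^i\|_\infty \leq 1$, applying equation~\eqref{eq:eff-rank-observable-t} to each stage yields constraints of the form $t \geq \lfloor \ln(c k / \epsilon) \rfloor$ and $t \geq \lfloor \ln(c l / \epsilon) \rfloor$ for a small absolute constant $c$; absorbing both into the single condition $t \geq \lfloor \ln((4k + 4l)/\epsilon) \rfloor$ is routine, and the $\min$ with $r = \max\{\tn{rank}(\rho),\tn{rank}(\sigma)\}$ arises from the rank-sufficiency alternative of Theorem~\ref{thm:rank}, which applies to either of $\rho$ or $\sigma$ since the Newton--Girard elimination terminates once $t$ reaches the larger rank.

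The main obstacle I anticipate is the error bookkeeping when chaining the two applications, because the inputs $\{\T(\rho^i \sigma^j)\}_{j=1}^t$ that Step~B consumes (for $i > t$) are not directly from the quantum device but are themselves outputs of Step~A carrying the Step~A error. Concretely, Theorem~\ref{thm:obs} imposes tight input-tolerance thresholds on $\varepsilon_{j,M}$ and $\varepsilon_{j}$, and one must verify that the Step~A error falls below the Step~B input threshold; this forces the slightly asymmetric allocation above but does not alter the asymptotic scaling. Once this bookkeeping is done, the quantum-resource accounting (copies, qubits, and \textsf{CSWAP} counts needed for the three base blocks) is inherited directly from Corollary~\ref{cor:obs} together with Corollary~\ref{cor:eff}.
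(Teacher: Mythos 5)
Your proposal is correct and follows essentially the same route as the paper: two chained invocations of Theorem~\ref{thm:obs}/\textbf{[Algorithm 2]}, once with powers of one state as the observable and once with powers of the other, using $\|\rho^i\|_\infty,\|\sigma^j\|_\infty\le 1$ and an asymmetric error-budget split because the second stage consumes first-stage outputs; the paper's Appendix~\ref{thm:multi:proof} does exactly this (merely extending the $j$-index first and the $i$-index second, the mirror image of your ordering, with explicit allocations $\epsilon/8$, $\epsilon/(8lt\ln t)$, $\epsilon/2$, $\epsilon/(4kt\ln t)$).
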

\begin{proof}
    See the details in~\cref{thm:multi:proof}.
\end{proof}
To aid in understanding the proof of~\cref{thm:multi}, we can summarize the idea sketch of the proof in the following manner, corresponding to {\textbf{[Algorithm 3]}}.\\~\\
\underline{{\textbf{[Algorithm 3]} Estimation of $\T(\rho^k\sigma^l)$}}
\begin{enumerate}
    \item For a fixed index $i$, we begin by estimating the sets $ \{\T(\sigma^i)\}_{i=1}^t $ and $ \{\T(\rho^i\sigma^j)\}_{(i,j)=(1,1)}^{(t,t)} $, which enables the application of {\textbf{[Algorithm 2]}} with $M = \rho^i$. This allows us to compute the estimated values for $ \{\T(\rho^i\sigma^j)\}_{j=1}^l $ for the fixed $i$.
    \item Repeating Step 1 for $i = 1, 2, \dots, t$, we obtain the values for $ \{\T(\rho^i\sigma^j)\}_{(i,j)=(1,1)}^{(t,l)} $.
    \item For a fixed index $j$, we then estimate the sets $ \{\T(\rho^i)\}_{i=1}^t $ and $ \{\T(\rho^i\sigma^j)\}_{i=1}^t $, allowing us to apply {\textbf{[Algorithm 2]}} once more, this time setting $M = \sigma^j$. This step computes the estimated values for $ \{\T(\rho^i\sigma^j)\}_{i=1}^k $.
    \item With the values for $ \{\T(\rho^i\sigma^j)\}_{(i,j)=(1,1)}^{(t,l)} $ already obtained in Step 2, the final process yields the estimated values for $ \{\T(\rho^i\sigma^j)\}_{(i,j)=(1,1)}^{(k,l)} $.
\end{enumerate}
The estimation of the values of $ \{\T(\rho^i)\}_{i=1}^t $, $ \{\T(\sigma^i)\}_{i=1}^t $, and $ \{\T(\rho^i\sigma^j)\}_{(i,j)=(1,1)}^{(t,t)} $ requires the same procedure as Step 2 of {\textbf{[Algorithm 2]}}. Consequently, the quantum resources required for this process can be estimated as follows: for the estimation of $ \{\T(\rho^i\sigma^j)\}_{(i,j)=1}^{(k,l)} $, at most $ \mathcal{O}(t) $ qubits and $ \mathcal{O}(t) $ \textsf{CSWAP} gates are necessary. This improves upon the previous result in \cite{quek2024multivariate}, which required $ \mathcal{O}(k+l) $ qubits and $ \mathcal{O}(k+l) $ \textsf{CSWAP} gates. Furthermore, in terms of copy complexity, each quantum state $\rho$ and $\sigma$ is required $\widetilde{\mathcal{O}}\left({k^2}/{\epsilon^2}\right)$ and $\widetilde{\mathcal{O}}\left({l^2}/{\epsilon^2}\right)$ times, respectively (details in~\cref{thm:multi:proof}). Here, the notation $\widetilde{\mathcal{O}}(\cdot)$ hides polylogarithmic factors in $k$ and $l$. This improves upon previous studies, where $\rho$ and $\sigma$ were required $\widetilde{\mathcal{O}}\left({k^2l}/{\epsilon^2}\right)$ and $\widetilde{\mathcal{O}}\left({kl^2}/{\epsilon^2}\right)$ times, respectively. An efficient algorithm for estimating $\{\T(\rho^i\sigma^j)\}_{(i,j)=1}^{(k,l)}$ can be widely applied to various quantum information tasks. For example, it can be used to compute the Schatten-$p$ distance, defined as
\begin{equation}\label{eq:schatten}
\norm{\rho-\sigma}_p = \left(\T\left[\abs{\rho-\sigma}^p\right]\right)^{\frac{1}{p}}.
\end{equation}
It also applies to the estimation of other distance measures, such as
\begin{equation}
K_{\alpha}(\rho, \sigma) = \T(\left(1+\rho\right)^\alpha\left(1+\sigma\right)^{1-\alpha}),
\end{equation}
which satisfies faithfulness and the data processing inequality under unital quantum channels~\cite{quek2024multivariate}.

\subsection{Entanglement detection}\label{chap:entangle}
Determining whether a quantum state is separable or entangled is a fundamental problem in quantum information theory. It is well known that a separable quantum state $\rho_{AB}$ always has a positive semi-definite (PSD) partial transpose (PT), denoted as $\rho_{AB}^{\Gamma_B}$. By contraposition, if $\rho_{AB}^{\Gamma_B}$ has a negative eigenvalue, then $\rho_{AB}$ must be entangled. For brevity, we denote the partial transpose of $\rho$ as $\rho^\Gamma$. The $k$-th PT moment is defined as
\begin{equation}
    p_k^\text{PT} = \T((\rho^\Gamma)^k).
\end{equation}
PT moments are typically estimated using classical shadows~\cite{neven2021symmetry, elben2020mixed}. By leveraging PT moments and the Newton-Girard method, the presence of a negative eigenvalue can be detected. The PT moments required for entanglement detection are $p_1^\text{PT}, p_2^\text{PT}, \dots, p_r^\text{PT}$, where $r$ is the rank of {$\rho^\Gamma$. Let $\lambda_1, \ldots, \lambda_r$ be the eigenvalues of $\rho^\Gamma$.} The following lemma, restating Lemma 1 of~\cite{neven2021symmetry}, formalizes this criterion:  
\begin{lemma}\label{lem:ED}  
A quantum state $\rho$ is entangled if
\begin{equation}\label{eq:ptptpt}
e_i(\lambda_1, \ldots, \lambda_r) < 0
\end{equation}
for some $i = 1, 2, \dots, r$, where $p_i^\textnormal{PT}$ are the PT moments of $\rho^\Gamma$, and $e_i(x_1, \dots, x_m)$ denotes the elementary symmetric polynomial in $m$ variables, defined as
\begin{equation}
e_i(x_1, \dots, x_m) = \sum_{1 \leq j_1 < j_2 < \dots < j_i \leq m} x_{j_1} x_{j_2} \cdots x_{j_i},
\end{equation}
which satisfies the recursive formula
\begin{equation}\label{recur:PT}
e_k = \frac{1}{k} \sum_{i=1}^k (-1)^{i-1} e_{k-i} p_i^\textnormal{PT}.
\end{equation}
\end{lemma}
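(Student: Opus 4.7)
The plan is to establish the claim as the contrapositive form of the Peres--Horodecki PPT criterion combined with an elementary positivity observation about symmetric polynomials. First I would invoke the PPT criterion: if $\rho_{AB}$ is separable, then $\rho^{\Gamma}$ is positive semi-definite, so every eigenvalue $\lambda_j$ of $\rho^{\Gamma}$ is non-negative. Thus, exhibiting even one negative eigenvalue of $\rho^{\Gamma}$ certifies that $\rho_{AB}$ is entangled.

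The next step is to transfer non-negativity of the eigenvalues to non-negativity of the elementary symmetric polynomials. If $\lambda_1,\ldots,\lambda_r \geq 0$, then each summand $\lambda_{j_1}\lambda_{j_2}\cdots \lambda_{j_i}$ in
\begin{equation*}
e_i(\lambda_1,\ldots,\lambda_r) \;=\; \sum_{1\le j_1<\cdots<j_i\le r} \lambda_{j_1}\cdots \lambda_{j_i}
\end{equation*}
is a product of non-negative numbers, hence non-negative, so $e_i(\lambda_1,\ldots,\lambda_r)\geq 0$ for every $i$. Taking the contrapositive: if $e_i(\lambda_1,\ldots,\lambda_r)<0$ for some $i\in\{1,\ldots,r\}$, at least one $\lambda_j$ must be strictly negative, so $\rho^{\Gamma}\not\succeq 0$ and $\rho$ is entangled. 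This is the content of \eqref{eq:ptptpt}.

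Finally, I would point out that the recursive formula~\eqref{recur:PT} is exactly the Newton--Girard identity already used in Section~\ref{chap:NG}, applied to the multiset of eigenvalues of $\rho^{\Gamma}$ whose power sums are precisely the PT moments $p_i^{\textnormal{PT}}=\sum_j \lambda_j^i = \T\bigl[(\rho^{\Gamma})^i\bigr]$. This means the entanglement test can be evaluated \emph{without ever diagonalising} $\rho^{\Gamma}$: one only needs the PT moments, which are the very quantities our algorithm (or classical shadows) estimates.

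I do not anticipate any genuine obstacle; the only subtlety worth flagging is that zero eigenvalues of $\rho^{\Gamma}$ (when its rank is smaller than the ambient dimension) contribute $0$ to every product and hence do not spoil the sign argument, and that the implication is one-way — PPT states in dimensions beyond $2\times 2$ and $2\times 3$ can still be entangled — which is consistent with the ``if'' (sufficient condition) phrasing of the lemma.
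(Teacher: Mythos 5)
Your proposal is correct: the contrapositive of the Peres--Horodecki PPT criterion together with the observation that elementary symmetric polynomials of non-negative eigenvalues are non-negative, plus the Newton--Girard identity linking $e_k$ to the PT moments $p_i^{\textnormal{PT}} = \T\bigl[(\rho^{\Gamma})^i\bigr]$, is exactly the argument underlying this lemma. The paper itself gives no proof (it defers to Lemma 1 of Neven \emph{et al.}), and your reasoning, including the remarks on zero eigenvalues and the one-way nature of the criterion, matches that standard argument, so there is nothing to add.
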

To integrate our approach, suppose that the eigenvalues of $\rho^\Gamma$ are all non-negative. In this case, $\rho^\Gamma$ is a valid density matrix, allowing us to apply \textbf{[Algorithm 2]}. Computing $p_1^\text{PT}, p_2^\text{PT}, \dots, p_t^\text{PT}$ is sufficient to estimate higher-order PT moments, where $t = \mathcal{O}\left(\ln \left({r}/{\epsilon}\right)\right)$. Using these PT moments {and the recursive formula~\cref{recur:PT}}, we compute $e_i(\lambda_1, \ldots, \lambda_r)$ for $i = 1, 2, \dots, r$. If the inequality~\cref{eq:ptptpt} holds for some $i$, then $\rho$ is entangled. Combining~\cref{lem:ED} with our method establishes a new entanglement detection criterion that requires only $p_1^\text{PT}, p_2^\text{PT}, \dots, p_t^\text{PT}$.

We hypothesize that in practical scenarios, the required number of PT moments is significantly smaller. Numerical simulations in~\cref{chap:simul} suggest that $t = 8$ is sufficient in most cases. If experimental validation confirms that $t = 8$ is also adequate for entanglement detection, this could constitute a groundbreaking discovery. We leave quantitative analysis and experimental verification as future work.

\section{Concluding remarks}\label{chap:con}
\subsection{Summary of findings}\label{chap:con-summary}
In this paper, we present an efficient algorithm for estimating the trace of quantum state powers. Our first key observation is the discovery of the rank dependence in this problem. Specifically, we find that for a large integer $ k $, estimating $ \T(\rho^k) $ within an additive error $ \epsilon $ requires only the computation of $ \{\T(\rho^i)\}_{i=1}^r $ using quantum resources. The remaining values can then be efficiently estimated within the same additive error $ \epsilon $ by employing a simple recurrence relation based on the Newton-Girard method.

Our second key observation reveals a condition even stronger than rank dependence. In practical experimental settings, estimating the rank of a given quantum state is often non-trivial and can introduce additional overhead. To address this issue, we introduce the concept of the effective rank $ \widetilde{r} $ and rigorously prove that, for a target power $ k $, our approach requires only quantum resources proportional to $ \ln k $. This result significantly reduces the resource requirements and enhances the feasibility of trace estimation in realistic quantum experiments.

By leveraging the concepts of rank dependence and effective rank, we successfully extended our efficient estimation algorithm to tackle not only the problem of estimating traces of quantum state powers with arbitrary observables, $\T(M\rho^k)$, but also the more general problem of estimating traces of products of quantum state powers, $\T(\rho^k\sigma^l)$. Our main ideas were rigorously validated through formal mathematical proofs and numerical simulations. Furthermore, we demonstrated several practical applications of our algorithm in quantum information processing. Specifically, we showed that it enables more resource-efficient quantum estimation of nonlinear functionals of quantum states, quantum Gibbs state preparation, and entanglement detection compared to previously known methods. Moreover, we illustrated its applicability to various distance measures, further highlighting its broad utility.

\subsection{Future research directions}\label{chap:future}
In our study, several important directions for future investigation remain.
\begin{enumerate}[label=(\arabic*)]
    \item A tighter upper bound on $\abs{\widetilde{P}_k - P_k}$ in~\cref{lem:eff} needs to be established. While~\cref{chap:simul} discusses several observations suggesting the possibility of a tighter bound, a more rigorous mathematical formulation and an explicit analytical expression would be valuable.
    \item A more detailed quantitative analysis of entanglement detection is necessary to identify specific aspects where our algorithm offers concrete improvements.
    \item Perhaps most critically, our current work presents an efficient quantum algorithm for estimating $\T(\rho^k)$ with additive error under multi-copy joint measurements. The algorithm sequentially applies \textsf{CSWAP} gates across registers to construct a fully entangled state over all quantum samples, thereby enabling coherent global operations necessary for accurate estimation. However, the exponential decay of $\T(\rho^k)$ with increasing $k$ suggests that multiplicative-error estimators are necessary in many applications. Despite this importance, our understanding of both additive and multiplicative error dependence remains limited across various measurement models. For the incoherent measurement setting, Liu \textit{et al.}~\cite{liu2024exponential} implicitly establish a lower bound for additive-error estimation, but a corresponding upper bound remains unknown. For multiplicative-error estimation, the first nontrivial upper bound in the fixed-basis incoherent setting was proposed only recently~\cite{pelecanos2025beating}, and no meaningful lower bound is currently available.
    \begin{enumerate}
        \item While the fixed-basis incoherent and coherent measurement settings are now partially understood, relatively little is known about other regimes, including randomized and adaptive measurements, with tight bounds under either additive or multiplicative error still largely missing.
        \item Furthermore, in practically motivated constrained models such as the few-copy measurement setting (where only a small number of copies can be measured jointly) or the bounded quantum memory setting, our understanding is even more limited. While recent work~\cite{chenFOCS24optimal} has characterized the memory-sample tradeoff for Pauli shadow tomography in these restricted settings, extending such analyses to nonlinear estimation tasks remains largely open and would be a particularly exciting direction. Establishing tight bounds in these models continues to pose significant analytical challenges, but offers a rich and important avenue for future work.
    \end{enumerate}
    \item It is necessary to investigate how our proposed algorithm can enhance virtual distillation. In the virtual distillation process, the expectation value of an observable $ M $ with respect to the state $ {\rho^k}/{\operatorname{Tr}(\rho^k)} $ must be computed. Our algorithm is expected to accelerate this computation. However, whether virtual distillation can still yield error-free expectation values under certain conditions when the ideal state $ |\psi\rangle $ cannot be prepared and only a faulty state $ \rho $ is available requires a more detailed analysis.  For our algorithm to be useful in this context, it must be assumed that even if  $|\psi\rangle $ cannot be directly prepared, the quantum computer can still prepare multiple copies of $ \rho $ simultaneously and perform joint operations on them. However, in a faulty quantum computer, the process of preparing multiple copies and performing joint operations may introduce additional errors, which could be larger than those arising in much simpler single-copy operations. Given these considerations, it would be interesting to analyze how our algorithm influences virtual distillation while accounting for these potential error sources.
    \item A natural open problem is to explore how our approach can be extended to general real values of $ k $, rather than just integer $ k $. While the work of~\cite{liu2025estimating} addresses algorithms for non-integer $ k $, it would be interesting to develop an iterative variant in the spirit of our approach. This leads to several open questions, including the challenge of obtaining tighter quantitative bounds.
    \item Beyond the specific problem settings addressed in this paper, it would be highly interesting to explore rank-dependent quantum algorithms applicable to broader areas of quantum information processing, particularly those that are especially efficient for low-rank quantum states.
\end{enumerate}

\section*{Data availability statement}
The data and software that support the findings of
this study can be found in the following repository:
\url{https://github.com/tfoseel/trace-of-powers}

\section*{Acknowledgments}
J.L. thanks Chirag Wadhwa for helpful discussions and feedback regarding the importance of considering the trace of powers problem under various measurement settings. The authors also thank the anonymous reviewers for their valuable comments that helped improve the manuscript. This work was supported by the National Research Foundation of Korea (NRF) through a grant funded by the Ministry of Science and ICT (Grant Nos. RS-2023-00211817; RS-2024-00404854; RS-2025-00515537). This work was also supported by the Institute for Information \& Communications Technology Promotion (IITP) grants funded by the Korean government (MSIP) (Grant Nos. RS-2025-02304540; 2019-II190003, Research and Development of Core Technologies for Programming, Running, Implementing, and Validating of Fault-Tolerant Quantum Computing Systems), the National Research Council of Science \& Technology (NST) (Grant No. GTL25011-401), and Korea Institute of Science and Technology Information (KISTI: P25026).

\section*{Author Contributions}
M.S. and J.L. contributed equally to this work, undertaking the primary responsibilities, including the development of the main ideas, mathematical proofs, initial drafting, and revisions of the paper. S.L. contributed to the numerical simulations and paper preparation. K.J. supervised the research. All authors discussed the results and contributed to the final paper.

\bibliographystyle{quantum}
\bibliography{references}

\newpage
\appendix
\section{Omitted proofs}\label{chap:proofs}
\subsection{Proof of~\cref{lem:rank}}\label{lem:rank:proof}
\begin{proof}
Let's try to find some properties of $\abs{d_i}$.
\begin{itemize}
    \item If $i=1$,
\begin{align}
    d_1 &= b_1 - a_1 = Q_1 - P_1 = \epsilon_1,
\end{align}
this gives $\abs{d_1}\le\abs{\epsilon_1}$.
    \item And if $i=2$,
\begin{align}
    d_2 &= b_2 - a_2 \nonumber \\
    &=\frac{Q_1^2 - Q_2}{2} - \frac{P_1^2 - P_2}{2} = \frac{(Q_1 + P_1)(Q_1 - P_1) - (Q_2 - P_2)}{2} = \epsilon_1 - \frac{1}{2} \epsilon_2,
\end{align}
this gives $\abs{d_2}\le\abs{\epsilon_1} + \frac{\abs{\epsilon_2}}{2}$.
    \item And if $i=3$,
\begin{align}
    d_3 &= \frac{(b_2 - a_2)Q_1 - (b_1 - a_1)Q_2 + (Q_3 - P_3)-a_1(Q_2 - P_2) + a_2(Q_1 - P_1)}{3} .
\end{align}
this gives,
\begin{align}
    \abs{d_3} &\le \frac{\abs{d_2} P_1 + \abs{d_1} P_2 + \abs{\epsilon_3} + a_1 \abs{\epsilon_2} + a_2 \abs{\epsilon_1}}{3} \le \abs{\epsilon_1} + \frac{\abs{\epsilon_2}}{2} + \frac{\abs{\epsilon_3}}{3}.
\end{align}
\end{itemize}

Now, we suppose $\abs{d_k} \le \sum_{i=1}^k \frac{\abs{\epsilon_i}}{i}$. Then,
\begin{align}
    d_{k+1} &= {\frac{1}{k+1}} \Bigg\{\sum_{i=1}^{k+1} (-1)^{i-1} b_{k+1-i} Q_i - \sum_{i=1}^{k+1} (-1)^{i-1} a_{k+1-i} P_i\Bigg\} \nonumber\\
    &= {\frac{1}{k+1}} \Bigg\{\sum_{i=1}^{k+1} (-1)^{i-1} (b_{k+1-i} - a_{k+1-i}) Q_i + \sum_{i=1}^{k+1} (-1)^{i-1} a_{k+1-i} (Q_i - P_i)\Bigg\}.
\end{align}
Taking the absolute value, and by using $\abs{Q_i}\le 1$ and $\abs{a_{k+1-i}}\le 1$,
\begin{align}
    \abs{d_{k+1}} &\le {\frac{1}{k+1}} \Bigg(\sum_{i=1}^k \abs{b_{k+1-i} - a_{k+1-i}} + \sum_{i=1}^{k+1} \abs{Q_i - P_i}\Bigg) \nonumber\\
    & \le {\frac{1}{k+1}} \left(\sum_{i=1}^k \sum_{j=1}^{k+1-i} \frac{\abs{\epsilon_j}}{j} + \sum_{i=1}^{k+1} \abs{\epsilon_i}\right) \nonumber\\
    & = {\frac{1}{k+1}} \sum_{j=1}^{k+1} \left(\frac{k+1-j}{j} + 1 \right) \abs{\epsilon_j} \nonumber\\
    & = \sum_{j=1}^{k+1} \frac{\abs{\epsilon_j}}{j}.
\end{align}
So, we can conclude that
\begin{equation}
    \forall k\in \mathbb{N},~\abs{d_k}\le \sum_{i=1}^k \frac{\abs{\epsilon_i}}{i}    
\end{equation}
by strong mathematical induction logic.
\end{proof}

\subsection{Proof of~\cref{thm:rank}}\label{thm:rank:proof}
\begin{proof}
We assume
\begin{equation}
    \abs{\epsilon_i} < \frac{\epsilon}{kt\ln t}    
\end{equation}
holds for $i=1,2,...,t$. We first prove that
\begin{equation}
    \abs{Q_i-\widetilde{P}_i} < \epsilon    
\end{equation}
for $i=1,2,...,k$ always holds.
Consider the recurrence relations defined by the~\cref{DefQ}, and~\cref{p'recurr}. Then the difference between $\widetilde{P}_{t+k}, Q_{t+k}$ becomes:
\begin{align}
    Q_{t+k}-\widetilde{P}_{t+k} =\sum_{j=1}^t(-1)^{j-1}a_j\left(Q_{t+k-j}-\widetilde{P}_{t+k-j}\right) +\sum_{j=1}^t(-1)^{j-1}(b_j-a_j)Q_{t+k-j}.
\end{align}
Let $\widetilde{\epsilon}_i = Q_i-\widetilde{P}_i$ for all $i$. Then,
\begin{align}
    \widetilde{\epsilon}_{t+k}&=\sum_{j=1}^t \Bigl\{(-1)^{j-1} \left(a_j \widetilde{\epsilon}_{t+k-j}+d_jQ_{t+k-j}\right)\Bigl\},\\
    \widetilde{\epsilon}_{t+k-1}&=\sum_{j=1}^t \Bigl\{(-1)^{j-1} \left(a_j \widetilde{\epsilon}_{t+k-j-1} + d_jQ_{t+k-j-1}\right)\Bigl\}.
\end{align}
By exploiting $a_1 = 1$, we can sum up the above expressions in the form of
\begin{align}
    \widetilde{\epsilon}_{t+k} &=\sum_{j=2}^t (-1)^{j-1} a_j \widetilde{\epsilon}_{t+k-j} + \sum_{j=1}^t (-1)^{j-1} a_j \widetilde{\epsilon}_{t+k-j-1} + \sum_{j=1}^t (-1)^{j-1} d_j (Q_{t+k-j}+Q_{t+k-j-1}) \nonumber\\
    & = \sum_{j=1}^{t-1} (-1)^{j-1} (a_{j}-a_{j+1})\widetilde{\epsilon}_{r+k-j-1} + (-1)^{t-1} a_t \widetilde{\epsilon}_k + \sum_{j=1}^t (-1)^{j-1} d_j (Q_{t+k-j}+Q_{t+k-j-1}).
\end{align}
Since $a_j \ge a_{j+1}~(\text{trivial from~\cref{chap:NG}})$,
\begin{align}
    \abs{\widetilde{\epsilon}_{t+k}} & \le \sum_{j=1}^{t-1} (a_j-a_{j+1}) \abs{\widetilde{\epsilon}_{t+k-j-1}} + a_t \abs{\widetilde{\epsilon}_k} + \sum_{j=1}^t \abs{d_j} (Q_{t+k-j}+Q_{t+k-j-1})\\
    & \le  \sum_{j=1}^{t-1} (a_j-a_{j+1}) \abs{\widetilde{\epsilon}_{t+k-j-1}} + a_t \abs{\widetilde{\epsilon}_k} + 2\sum_{j=1}^t \abs{d_j}.
\end{align}
Note that, from~\cref{p'}, $P_i = \widetilde{P}_i$ for $i=1,2,...,t$. Therefore, $\epsilon_i = \tilde{\epsilon}_i$ for $i=1,2,...,t$. Let $\epsilon' := \max_{1\le j\le t} \abs{\epsilon_j}$. Suppose that

\begin{equation}\label{eq:induction}
    \widetilde{\epsilon}_{t+m} \le \epsilon' + m\sum_{j=1}^t \abs{d_j}
\end{equation}
holds for $m=1, 2, \ldots, k-1$. Then,
\begin{align}
    \abs{\widetilde{\epsilon}_{t+k}} &\le \sum_{j=1}^{t-1} \left(a_j - a_{j+1}\right) \Biggl\{\epsilon' + (k-2)\sum_{j=1}^t \abs{d_j} \Biggl\} + a_t \Biggl\{\epsilon' + (k-2)\sum_{j=1}^t \abs{d_j}\Biggl\} + 2\sum_{j=1}^t \abs{d_j} \nonumber\\
    & \le a_1 \Biggl\{\epsilon' + (k-2)\sum_{j=1}^t \abs{d_j} \Biggl\} + 2\sum_{j=1}^t \abs{d_j}\nonumber\\
    &= \epsilon' + k\sum_{j=1}^t \abs{d_j}.
\end{align}
Moreover, $m = k$ also holds. Since $m = 0$ trivially holds, by strong mathematical induction logic, for every $m$,~\cref{eq:induction} holds. By applying~\cref{lem:rank}, we get: 
\begin{equation}
    \abs{d_j} \leq \sum_{i=1}^j \frac{\epsilon'}{i} \leq \epsilon' \ln j.
\end{equation}
Finally,
\begin{equation}
    \abs{\widetilde{\epsilon}_{t+k}} \leq \epsilon'+\epsilon'k\ln t! < \epsilon'kt\ln t < \epsilon.
\end{equation}
We proved $\abs{Q_i-\widetilde{P}_i}<\epsilon$ for $i=1,2,...,k$. To conclude the proof, we set $t=r$. We will show that if $t=r$, $P_i=\widetilde{P}_i$ holds for all $i$. Consider the following polynomial,
\begin{align}
    x^r-a_1x^{r-1}+a_2x^{r-2}-\ldots+(-1)^ra_r =(x-p_1)(x-p_2)\ldots(x-p_r).
\end{align}
Then,
\begin{align}
    p_i^{r+k} &= \sum_{j=1}^r (-1)^{j-1} a_j p_i^{r+k-j},
\end{align}
And we have,
\begin{align}
    P_{r+k} = \sum_{j=1}^r (-1)^{j-1} a_j P_{r+k-j},
\end{align}
which is the same recurrence relation with~\cref{p'recurr}, when $t=r$. Hence, $P_i=\widetilde{P}_i$ and
\begin{equation}
    \abs{\epsilon_i}=\abs{Q_i-P_i}=\abs{Q_i-\widetilde{P}_i}<\epsilon
\end{equation}
for $i=1,2,\dots,r$ which completes the proof.
\end{proof}

\subsection{Proof of~\cref{cor:rank}}\label{cor:rank:proof}
\begin{proof}
Using the multivariate trace estimation method~\cite{quek2024multivariate}, it is known that with 
\begin{equation}
    \mathcal{O}\left({\frac{\ln(1/\delta)}{\epsilon^{2}}}\right)
\end{equation}
runs on a constant-depth quantum circuit consisting of $\mathcal{O}(i)$ qubits and $\mathcal{O}(i)$ \textsf{CSWAP} operations, we can estimate each $\T(\rho^i)$ within an $\epsilon$ additive error and with a success probability of no less than $1-\delta$. Note that only the maximum error
\begin{equation}
    \epsilon' := \max_{1\le j\le t} \abs{\epsilon_j}    
\end{equation}
affects the error of our algorithm. Thus, with
\begin{equation}
    \mathcal{O}\left({\frac{k^2t^2\ln^2 t\ln(1/\delta)}{\epsilon^{2}} }\right)
\end{equation}
runs, we can satisfy the assumption in~\cref{thm:rank}. Hence, $\T(\rho^i)$ $(\forall i \leq k)$ can be estimated within an $\epsilon$ error and with a success probability of no less than $1-\delta$. Finally, we set $t=r$ in~\cref{thm:rank}, which concludes the proof.  
\end{proof}

\subsection{Proof of~\cref{lem:eff}}\label{lem:eff:proof}
\begin{proof}
Let $\epsilon_i = \widetilde{P}_i - P_i$. By definition, we have $\epsilon_i = 0$ for $i = 1, 2, \dots, t$. For $i \geq t+1$, we have
\begin{align}
    \epsilon_{t+k} = \widetilde{P}_{t+k} - P_{t+k}.    
\end{align}
From the definition of $\widetilde{P}_{t+k}$, we have:
\begin{align}
    \epsilon_{t+k} = \left\{\sum_{i=1}^t (-1)^{i-1} \widetilde{P}_{t+k-i}a_i\right\} - P_{t+k}.    
\end{align}
Rewriting this, we split $\epsilon_{t+k}$ into two terms:
\begin{align}
    \epsilon_{t+k}  = \left\{\sum_{i=1}^t (-1)^{i-1} \epsilon_{t+k-i}a_i\right\}  + \left\{\sum_{i=1}^t (-1)^{i-1} P_{t+k-i}a_i - P_{t+k}\right\}.    
\end{align}
Define the second term as $z_{t+k}$ for brevity:
\begin{align}
    z_{t+k} &= \left\{\sum_{i=1}^t (-1)^{i-1} P_{t+k-i}a_i - P_{t+k}\right\} \\
    & = (-1)^{t-1} \sum_{\{\alpha_1, \ldots, \alpha_{t+1}\} \subseteq [r]} \left( \sum_{i=1}^{t+1} p_{\alpha_i}^k \right) \prod_{i=1}^{t+1} p_{\alpha_i}.
\end{align}
Thus,
\begin{align}
    \epsilon_{t+k} = \sum_{i=1}^t (-1)^{i-1} \epsilon_{t+k-i}a_i + z_{t+k}.    
\end{align}
We first bound $z_{t+k}$:
\begin{align}
    \abs{z_{t+k}} & \leq (t+1)\ \sum_{\{\alpha_1, \ldots, \alpha_{t+1}\} \subseteq [r]} \prod_{i=1}^{t+1} p_{\alpha_i} = (t+1)a_{t+1}.    
    \end{align}
    Next, consider the recursive relation for $\epsilon_{t+k}$:
    \begin{align}
    \epsilon_{t+k} = \sum_{i=1}^t (-1)^{i-1} \epsilon_{t+k-i}a_i + z_{t+k}.    
    \end{align}
    Combining this with the relation for $\epsilon_{t+k-1}$, we get:
    \begin{align}
    \epsilon_{t+k} & = \left\{\sum_{i=1}^{t-1} (-1)^{i-1} \epsilon_{t+k-i-1}(a_i - a_{i+1})\right\} + \epsilon_{k-1}a_t + z_{t+k} + z_{t+k-1}.
    \end{align}
Taking the absolute value, we bound $\abs{\epsilon_{t+k}}$:
\begin{align}
    \abs{\epsilon_{t+k}} & \leq \left\{\sum_{i=1}^{t-1} \abs{\epsilon_{t+k-1-i}}(a_i - a_{i+1})\right\} + \abs{\epsilon_{k-1}}a_t + 2(t+1)a_{t+1} \nonumber \\
    & \le \epsilon_{\text{max}} + 2(t+1)a_{t+1},
\end{align}
where $\epsilon_{\text{max}}$ defined as:
\begin{align}
    \epsilon_{\text{max}} = \max_{i \leq t+k-2} \abs{\epsilon_i}.
\end{align}
By induction, we conclude:
\begin{align}
    \abs{\epsilon_{t+k}} \leq k(t+1)a_{t+1}.    
\end{align}
Also, we can bound $a_{t+1}$ as follows (see the details in~\cref{chap:detail-a}):
\begin{align}
    a_{t+1} & \leq \binom{r}{t+1} \frac{1}{r^{t+1}}  = \frac{r(r-1)\ldots(r-t-1)}{(t+1)! r^{t+1}} \leq \frac{1}{(t+1)!}\left(1-\frac{t}{r}\right).
\end{align}
Combining the results, we have:
\begin{align}
    \abs{\epsilon_k} = \abs{P_k - \widetilde{P}_k}\leq (k-t)(t+1)a_{t+1}\leq k(t+1)a_{t+1}.
\end{align}
Substituting the bound for $a_{t+1}$, we get:
\begin{align}
    \abs{\epsilon_k} \leq \frac{k}{t!}\left(1-\frac{t}{r}\right).
\end{align}
\end{proof}

\subsubsection{Bounding $a_{t}$}\label{chap:detail-a}
Define $ A_{j,k} $ as:  
\begin{equation}
    A_{j,k} = \sum_{\{\alpha_1, \ldots, \alpha_t\} \subseteq [k]} \left( \prod_{i=1}^{j} p_{\alpha_i} \right),
\end{equation}  
where $ p_i \geq 0 $ and $ \sum_{i=1}^k p_i = 1 $. By definition, we have $ a_t = A_{t,r} $, where $ r $ denotes the rank. Next, let $ x = p_1 $ and define  
\[
    p_i' = \frac{p_{i+1}}{1-x}, \quad \text{for } i = 1,2,\dots,k-1.
\]  
Note that $ \sum_{i=1}^{k-1} p_i' = 1 $. Define $ A_{j,k}' $ as:  
\begin{equation}
    A_{j,k}' = \sum_{\{\alpha_1, \ldots, \alpha_t\} \subseteq [k]} \left( \prod_{i=1}^{j} p_{\alpha_i}' \right).
\end{equation}  
Importantly, $ x $ and $ A_{j,k}' $ are independent. For $ (j,k) < (t,r) $, suppose that $ A_{j,k} $ is maximized when  
\begin{equation}
    p_1 = p_2 = \dots = p_k = \frac{1}{k}.    
\end{equation}
We then obtain the recurrence relation:
\begin{equation}
    A_{t,r} = x(1-x)^{t-1} A_{t-1,r-1}' + (1-x)^t A_{t,r-1}'.
\end{equation}  
Since it is straightforward to verify that  
\begin{equation}
    \max_{p_i'} A_{j,k}' = \max_{p_i} A_{j,k},    
\end{equation}
it follows that (by assumption) $ A_{t-1,r-1}' $ and $ A_{t,r-1}' $ are maximized when  
\begin{equation}
    p_1' = p_2' = \dots = p_{r-1}' = \frac{1}{r-1}.    
\end{equation}
Thus, we obtain  
\begin{align}
    \max A_{t-1,r-1}' &= \binom{r-1}{t-1} \frac{1}{(r-1)^{t-1}},\\
    \max A_{t,r-1}' &= \binom{r-1}{t} \frac{1}{(r-1)^t}.    
\end{align}
Now, considering the maximization over $ x $ and $ p_i' $, we derive:
\begin{align}
    \max_{p_i} A_{t,r} 
    &= \max_{x, p_i'} A_{t,r} \nonumber \\ \nonumber
    &= \max_x \bigg\{ x(1-x)^{t-1} \max_{p_i'} A_{t-1,r-1}' + (1-x)^t \max_{p_i'} A_{t,r-1}' \bigg\} \\ \nonumber
    &= \max_x \bigg\{ x(1-x)^{t-1} \binom{r-1}{t-1} \frac{1}{(r-1)^{t-1}} + (1-x)^t \binom{r-1}{t} \frac{1}{(r-1)^t} \bigg\} \\ \nonumber
    &= \frac{1}{r(r-1)^t} \binom{r}{t} \times \max_x \bigg\{ (1-x)^{t-1} (r - rx - t + rtx) \bigg\}.
\end{align}  
Define  
\begin{equation}
    f(x) = (1-x)^{t-1} (r - rx - t + rtx).    
\end{equation}
Differentiating $ f(x) $ and solving for its maximum, we find that the optimal value occurs at $ x = {1}/{r} $. This implies that $ A_{t,r} $ is maximized when
\begin{align}
    & p_1 = x = \frac{1}{r}, \\
    & p_2 = p_3 = \dots = p_r = \frac{1}{r-1} (1-x) = \frac{1}{r}.
\end{align}
By strong induction, we conclude that for all $ t, r $,
\begin{equation}
a_t \leq \max A_{t,r} = \binom{r}{t} \frac{1}{r^t}.
\end{equation}

\subsection{Proof of~\cref{thm:eff}}\label{thm:eff:proof}
\begin{proof}
We adopt the same notation as in~\cref{lem:eff:proof}. Using the proof from~\cref{thm:rank:proof}, we conclude that if  
\begin{align}
    \epsilon' = \frac{\epsilon}{2kt\ln t},    
\end{align}  
then the following condition holds:
\begin{align}
    \abs{\widetilde{P}_i - Q_i} < \frac{\epsilon}{2}.    
\end{align}
To estimate $P_i$ using $Q_i$ with an additive error $\epsilon$, we must ensure that:
\begin{align}
    \abs{\widetilde{P}_i - P_i} < \frac{\epsilon}{2}.    
\end{align}
This ensures:
\begin{align}
    \abs{P_i - Q_i} \leq \abs{\widetilde{P}_i - Q_i} + \abs{\widetilde{P}_i - P_i} < \epsilon,    
\end{align}
for all $i = 1, 2, \dots, k$. From~\cref{lem:eff}, we derive the following bound:
\begin{align}\label{proof:approx}
    \abs{\widetilde{P}_i - P_i}=\abs{\epsilon_i} \leq \frac{i}{t!}\left(1-\frac{t}{r}\right) \leq \frac{k}{e^t}.    
\end{align}
To ensure $\abs{\widetilde{P}_i - P_i} < {\epsilon}/{2}$, it suffices to satisfy:
\begin{align}
    \frac{k}{e^t} < \frac{\epsilon}{2}.    
\end{align}
Taking logarithms and rearranging terms, we obtain:
\begin{align}
    t > \ln\left(\frac{2k}{\epsilon}\right) \ge \left\lceil \ln\left(\frac{2k}{\epsilon}\right) \right\rceil.    
\end{align}
\end{proof}

\subsection{Proof of~\cref{thm:obs}}\label{thm:obs:proof}
\begin{proof}
Let
\begin{equation}
    \rho = \sum_{i=1}^r p_i \ket{\psi_i} \bra{\psi_i},
\end{equation}
and
\begin{equation}
    m_i = \bra{\psi_i} M \ket{\psi_i}.
\end{equation}
We introduce a new quantity, denoted as $\widetilde{P}_{i,M}$, defined for $i \leq t$ as follows:
\begin{equation}\label{p'-obs}
    \widetilde{P}_{i(\leq t),M} := \text{Tr}(M\rho^{i}) = \sum_{j=1}^r m_jp_{j}^i.
\end{equation}  
For $i > t$, $\widetilde{P}_{i,M}$ is recursively defined based on the Newton-Girard recurrence relations, where the elementary symmetric polynomials $a_k$ are defined in~\cref{elepoly1}.
\begin{equation}\label{p'recurr-obs}  
    \widetilde{P}_{i(> t),M} := \sum_{k=1}^t (-1)^{k-1} a_k \widetilde{P}_{i-k,M}.  
\end{equation}  
We assume that
\begin{equation}
    \abs{\epsilon_{i,M}} < \frac{\epsilon}{4},    
\end{equation}
and
\begin{equation}
    \abs{\epsilon_i} < \frac{\epsilon}{4\norm{M}_{\infty}kt\ln t},
\end{equation}
for $i = 1, 2, \dots, t$. We will first prove that
\begin{equation}
    \abs{Q_{i,M}-\widetilde{P}_{i,M}} < \frac{\epsilon}{2}    
\end{equation}
holds for $i = 1, 2, \dots, k$. The difference between $\widetilde{P}_{i,M}$ and $Q_{i,M}$ is given by:
\begin{align}
    Q_{t+k,M}-\widetilde{P}_{t+k,M} =\sum_{i=1}^{t}(-1)^{j-1}a_j\left(Q_{t+k-j,M}-\widetilde{P}_{t+k-j,M}\right) +\sum_{i=1}^t(-1)^{j-1}(b_j-a_j)Q_{t+k-j,M}.
\end{align}
Let $\widetilde{\epsilon}_{i,M} := Q_{i,M} - \widetilde{P}_{i,M}$, so that we can write:
\begin{align}
    \widetilde{\epsilon}_{t+k,M} &=\sum_{j=1}^t \left\{(-1)^{j-1} \left(a_j \widetilde{\epsilon}_{t+k-j,M}+d_jQ_{t+k-j,M}\right)\right\},\\
    \widetilde{\epsilon}_{t+k-1,M} &=\sum_{j=1}^t \left\{(-1)^{j-1} \left(a_j \widetilde{\epsilon}_{t+k-j-1,M} + d_jQ_{t+k-j-1,M}\right)\right\}.
\end{align}
We define
\begin{align}
    \epsilon' &= \max_{1\le j\le t} \abs{\epsilon_{j}},\quad\epsilon'_M = \max_{1\le j\le t} \abs{\epsilon_{j,M}}.
\end{align}
and by assumption, we have
\begin{equation}
    \epsilon'<\frac{\epsilon}{4\norm{M}_{\infty}kt\ln t}    
\end{equation}
and $\epsilon'_M < {\epsilon}/{4}$. Using the same logic as in the proof of~\cref{thm:rank}, and noting that $Q_{i,M} \leq \norm{M}_{\infty}$, we conclude that for every $k$
\begin{equation}\label{induction}
    \abs{\widetilde{\epsilon}_{t+k,M}} \leq \epsilon'_M + k\norm{M}_{\infty}\sum_{j=1}^t\abs{d_j}
\end{equation}
holds. By applying~\cref{lem:rank}, we can conclude that
\begin{equation}
    \abs{\widetilde{\epsilon}_{t+k,M}} \leq \epsilon'_M + \epsilon'kt\ln t\norm{M}_{\infty} < \frac{\epsilon}{2}.
\end{equation}
Therefore, $\abs{Q_{i,M}-\widetilde{P}_{i,M}}<{\epsilon}/{2}$ holds for $i=1,2,...,k$. \\~\\
Next, we aim to prove that $\abs{\widetilde{P}_{i,M}-P_{i,M}}<{\epsilon}/{2}$. Note that
\begin{equation}
    t = \ln\left(\frac{2k\norm{M}_{\infty}}{\epsilon}\right).
\end{equation}
Let $\delta_i = \widetilde{P}_{i,M} - P_{i,M}$. By definition, $\delta_i = 0$ for $i = 1, 2, \dots, t$. For $i \geq t+1$, we derive $\delta_{t+k}$ as follows:
\begin{align}
    \delta_{t+k} = \widetilde{P}_{t+k,M} - P_{t+k,M}.    
\end{align}
Using the definition of $\widetilde{P}_{t+k,M}$, we get
\begin{align}
    \delta_{t+k} = \left\{\sum_{i=1}^t (-1)^{i-1} \widetilde{P}_{t+k-i,M}a_i\right\} - P_{t+k,M}.    
\end{align}
Rewriting this expression, we split $\delta_{t+k}$ into two terms:
\begin{align}
    \delta_{t+k} & = \left\{\sum_{i=1}^t (-1)^{i-1} \delta_{t+k-i}a_i\right\}  + \left\{\sum_{i=1}^t (-1)^{i-1} P_{t+k-i,M}a_i - P_{t+k,M}\right\}.    
\end{align}
We define the second term as $z_{t+k}$ for brevity:
\begin{align}
    z_{t+k} &= \left\{\sum_{i=1}^t (-1)^{i-1} P_{t+k-i,M}a_i - P_{t+k,M}\right\} \\
    & = (-1)^{t-1} \sum_{\{\alpha_1, \ldots, \alpha_{t+1}\} \subseteq [r]} \left( \sum_{i=1}^{t+1} m_{\alpha_i}p_{\alpha_i}^k \right) \prod_{i=1}^{t+1} p_{\alpha_i}.
\end{align}
Thus, we have
\begin{align}
    \delta_{t+k} = \sum_{i=1}^t (-1)^{i-1} \delta_{t+k-i}a_i + z_{t+k}.    
\end{align}
We first bound $z_{t+k}$:
\begin{align}
    \abs{z_{t+k}} & \leq (t+1)\norm{M}_{\infty} \sum_{\{\alpha_1, \ldots, \alpha_{t+1}\} \subseteq [r]} \prod_{i=1}^{t+1} p_{\alpha_i} \nonumber \\
    & = (t+1)\norm{M}_{\infty}a_{t+1}.    
\end{align}
Using the same induction logic as in the proof of~\cref{lem:eff}, we get:
\begin{equation}
    \abs{\delta_{i}} \leq i(t+1)\norm{M}_{\infty}a_{t+1}.
\end{equation}
Since 
\begin{align}
    a_{t+1} \leq \frac{1}{(t+1)!},\quad t! \geq e^t, \quad t =\ln\left(\frac{2k\norm{M}_{\infty}}{\epsilon}\right),
\end{align}
we conclude that
\begin{equation}
    \abs{P_{i,M}-\widetilde{P}_{i,M}} = \abs{\delta_i} \leq \frac{i\norm{M}_{\infty}}{2^t} <\frac{\epsilon}{2}
\end{equation}
holds for $i=1,2,...,k$.\\~\\
Thus, $\abs{Q_{i,M} - P_{i,M}} < \epsilon$ holds for $i = 1, 2, \dots, k$.
\end{proof}

\subsection{Proof of~\cref{cor:obs}}\label{cor:obs:proof}
\begin{proof}
Using the multivariate trace estimation method~\cite{quek2024multivariate}, it is known that with 
\begin{equation}
    \mathcal{O}\left({\frac{\ln(1/\delta)}{\epsilon^{2}}}\right)
\end{equation}
runs on a constant-depth quantum circuit consisting of $\mathcal{O}(i)$ qubits and $\mathcal{O}(i)$ \textsf{CSWAP} operations, we can estimate each $\T(\rho^i)$ within $\epsilon$ additive error and with success probability not smaller than $1-\delta$.\\
Note that only the maximum error
\begin{equation}
    \epsilon' := \max_{1 \le j \le t} \abs{\epsilon_j},   
\end{equation}
and
\begin{equation}
    \epsilon_M' := \max_{1 \le j \le t} \abs{\epsilon_{j,M}},   
\end{equation}
affects the error of our algorithm. So with
\begin{equation}
    \mathcal{O}\left({\frac{k^2\norm{M}_\infty t^2\ln^2 t\ln(1/\delta)}{\epsilon^{2}}}\right)
\end{equation}
runs for estimating $\T(\rho^{j'})$ $(j' \leq r)$, and 
\begin{equation}
    \mathcal{O}\left({\frac{c^2N_{M}\ln(1/\delta)}{\epsilon^2}}\right)    
\end{equation}
runs for estimating $\T(M\rho^j)$ $(j \leq r)$, we can satisfy the assumption in~\cref{thm:obs} with success probability not smaller than $1-\delta$. Hence, $\T(M\rho^i)$ $(\forall i \leq k)$ can be estimated within $\epsilon$ error and with success probability not smaller than $1-\delta$. Finally, we set $t=\widetilde{r}_M$ in~\cref{thm:obs}, which concludes the proof by ignoring the logarithmic terms.
\end{proof}

\subsection{Proof of~\cref{thm:multi}}\label{thm:multi:proof}
\begin{proof}
For a fixed index $i \leq t$, we begin by estimating the sets $ \{\T(\sigma^j)\}_{j=1}^t $ and $ \{\T(\rho^i\sigma^j)\}_{j=1}^{t} $, which enables the application of \textbf{{[Algorithm 2]}} with $M = \rho^i$. By~\cref{thm:obs}, we obtain:
\begin{equation}\label{a8-1}
    \{\T(\sigma^j)\}_{j=1}^t~~\text{within additive error}~~\frac{\epsilon}{8lt\ln t},
\end{equation}
and 
\begin{equation}\label{a8-2}
    \{\T(\rho^i\sigma^j)\}_{j=1}^{t}~~\text{within additive error}~~\frac{\epsilon}{8}.
\end{equation}
We then set
\begin{equation}\label{a8-3}
    t \geq \widetilde{R} = \min\left\{r, \left\lceil\ln\left(\frac{4k+4l}{\epsilon}\right)\right\rceil\right\}    
\end{equation}
which allows us to compute the estimated values for
\begin{equation}\label{a8-4}
    \{\T(\rho^i\sigma^j)\}_{j=1}^l~~\text{within additive error}~~\frac{\epsilon}{2},
\end{equation}
for $i=1,2,...t$. For a fixed index $j$, obtaining 
\begin{equation}\label{a8-5}
    \{\T(\rho^i)\}_{i=1}^t~~\text{within additive error}~~\frac{\epsilon}{4kt\ln t},
\end{equation}
and using~\cref{a8-4} enables the application of {\textbf{[Algorithm 2]}} with $M = \sigma^j$. This, in turn, allows the computation of the estimated values for
\begin{equation}
    \{\T(\rho^i\sigma^j)\}_{i=1}^k~~\text{within additive error}~~\epsilon,
\end{equation} 
for $j=1,2,...,l$. To satisfy~\cref{a8-1}, we need $\widetilde{\mathcal{O}}\left({l^2}/{\epsilon^2}\right)$ copies of $\sigma$, and to satisfy~\cref{a8-5}, we need $\widetilde{\mathcal{O}}\left({k^2}/{\epsilon^2}\right)$ copies of $\rho$. The contributions to the copy complexity from other conditions, such as~\cref{a8-2} and~\cref{a8-4}, are analogous. Therefore, to follow the algorithm, we can conclude that the required number of copies of $\rho$ is $\widetilde{\mathcal{O}}\left({k^2}/{\epsilon^2}\right)$ and the required number of copies of $\sigma$ is $\widetilde{\mathcal{O}}\left({l^2}/{\epsilon^2}\right)$. Here, the notation $\widetilde{\mathcal{O}}(\cdot)$ hides polylogarithmic factors in $k$ and $l$. So, setting $t \geq \widetilde{R}$ is sufficient.
\end{proof}

\section{Additional numerical simulations}\label{chap:add-sim}
In~\cref{chap:simul-result}, we anticipated that a lower bound on $t$ could be expressed as  
\begin{equation}  
    \mathcal{O}\left(\frac{\ln\left({k}/{\epsilon}\right)}{\ln \ln \left({k}/{\epsilon}\right)}\right).    
\end{equation}  
While a rigorous mathematical proof remains an open problem for future research, we conducted experiments under Scenario 1, as described in~\cref{chap:simul-setup}, by setting
\begin{equation}\label{eq:new-bound}
    t = \min\left\{r, \left\lceil \frac{\ln(k/\epsilon)}{\ln\ln(k/\epsilon)} \right\rceil\right\}    
\end{equation}
and evaluating four different distributions. (The values of $t$ for different $ (k, \epsilon) $ are listed in~\cref{tab:sim-1-2}.) The results, presented in~\cref{fig:ln-new-sim}, show that although the error is larger compared to when $\widetilde{r}$ was used, the estimation still successfully remains below the target additive error in all cases.

\newpage
\begin{table}[htbp!]
\centering
\resizebox{0.6\columnwidth}{!}{%
\begin{tabular}{c|ccccccc}
$(k,\epsilon)$ & $10^{-1}$ & $10^{-2}$ & $10^{-3}$ & $10^{-4}$ & $10^{-5}$ & $10^{-6}$ & $10^{-7}$ \\ \hline
$8$   & 3         & 4         & 5         & 5         & 6         & 6         & 7         \\
$16$  & 4         & 4         & 5         & 5         & 6         & 6         & 7         \\
$32$  & 4         & 4         & 5         & 5         & 6         & 7         & 7         \\
$64$  & 4         & 5         & 5         & 6         & 6         & 7         & 7         \\
$128$ & 4         & 5         & 5         & 6         & 6         & 7         & 7         \\
$256$ & 4         & 5         & 5         & 6         & 7         & 7         & 8        
\end{tabular}%
}
\caption{{\textbf{The value of $t$ as a function of $(k, \epsilon)$ in~\cref{chap:add-sim}.} The value of $ t $ used in additional numerical simulation is $\min\left\{r, \left\lceil {\ln(k/\epsilon)}/{\ln\ln(k/\epsilon)} \right\rceil\right\} $. Note that $ r = 16 $.}}
\label{tab:sim-1-2}
\end{table}

\begin{figure}[t!]
    \centering
    \begin{subfigure}{0.49\linewidth}
        \centering
        \caption{Geometrically decaying}
        \includegraphics[width=\linewidth]{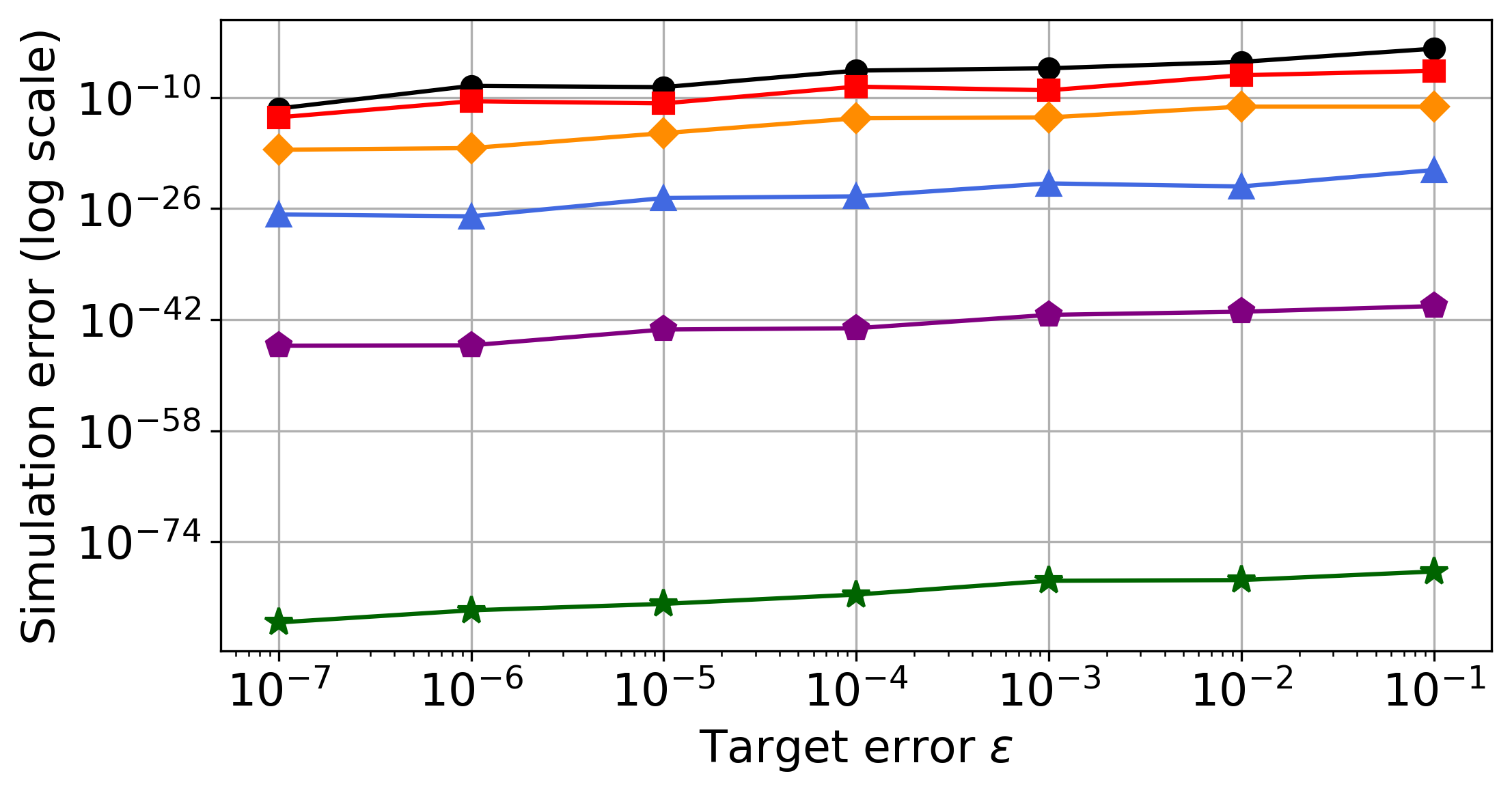}
        \label{fig:ln_geo_exp1}
    \end{subfigure}
    \hfill
    \begin{subfigure}{0.49\linewidth}
        \centering
        \caption{Arithmetically decaying}
        \includegraphics[width=\linewidth]{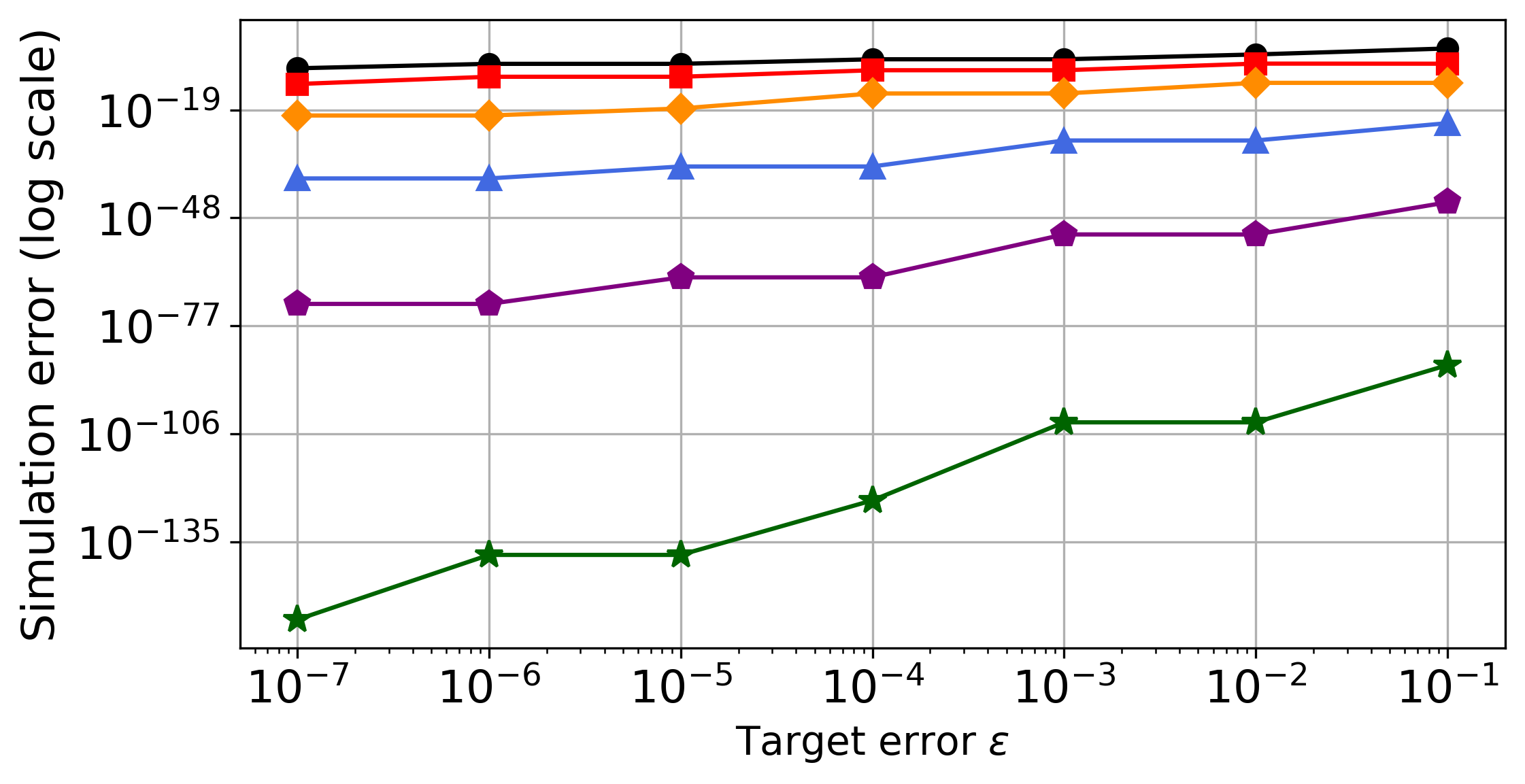}
        \label{fig:ln_arith_exp1}
    \end{subfigure}

    \vspace{1em} 

    \begin{subfigure}{0.49\linewidth}
        \centering
        \caption{One dominant}
        \includegraphics[width=\linewidth]{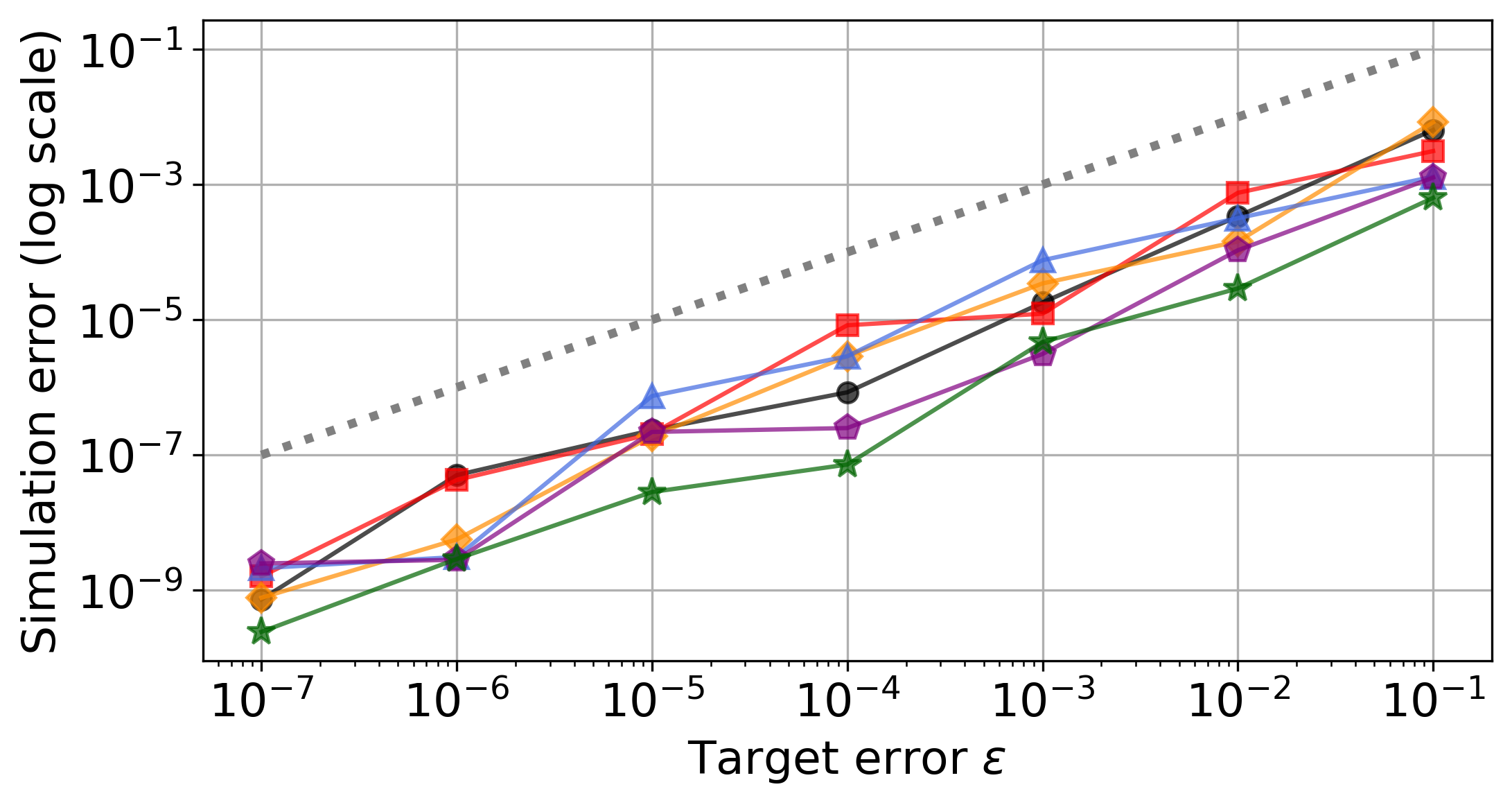}
        \label{fig:ln_dominant_exp1}
    \end{subfigure}
    \hfill
    \begin{subfigure}{0.49\linewidth}
        \centering
        \caption{Identical}
        \includegraphics[width=\linewidth]{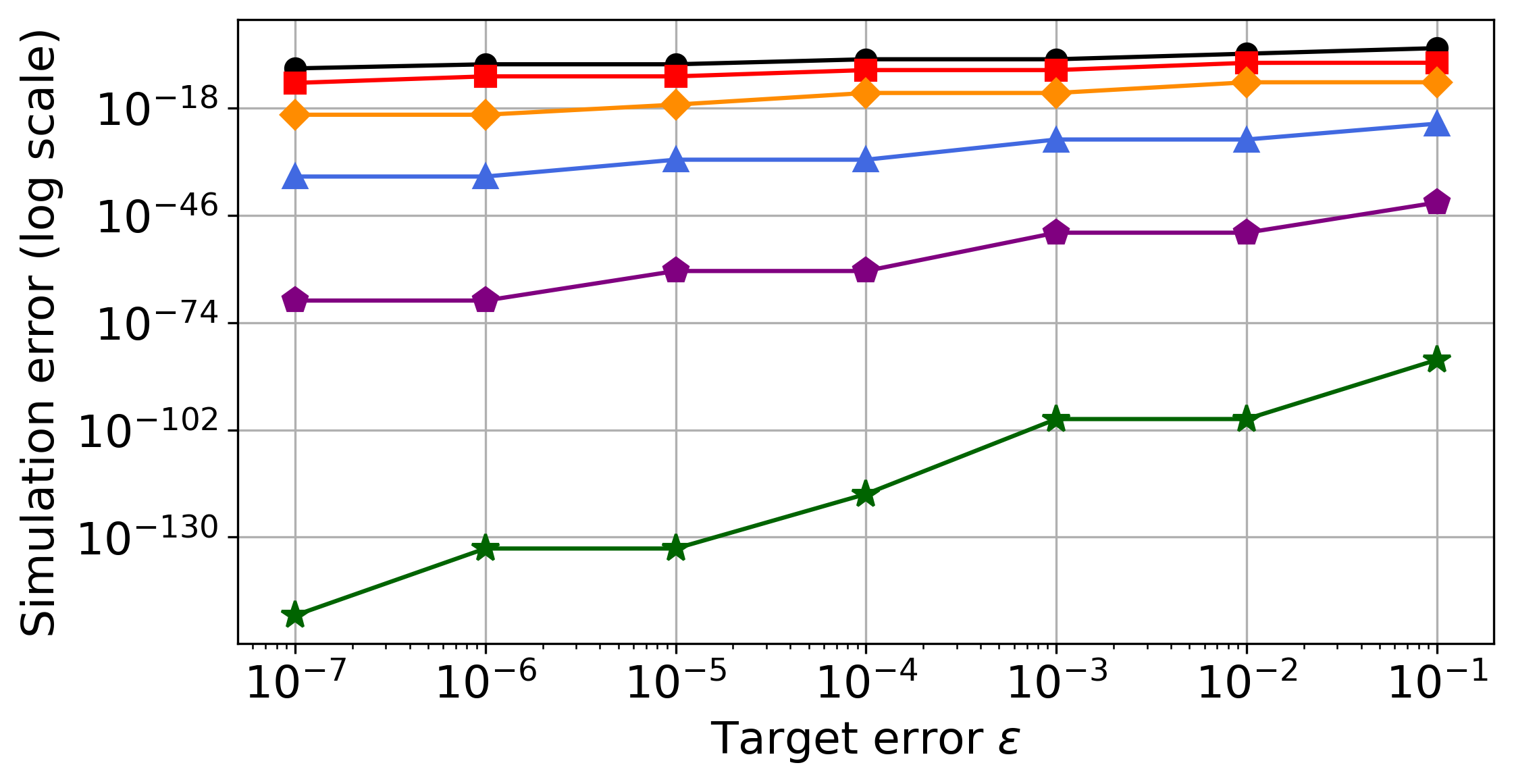}
        \label{fig:ln_id_exp1}
    \end{subfigure}

    \caption{Simulation results obtained by modifying $t$ according to~\cref{eq:new-bound} in the Scenario 1 described in~\cref{chap:simul-setup}.}
    \label{fig:ln-new-sim}
\end{figure}

\end{document}